\title{Online Virtual Machine Allocation with Predictions} 
\author{Niv Buchbinder\thanks{Tel Aviv University, niv.buchbinder@gmail.com}\and 
Yaron Fairstein\thanks{Technion, yyfairstein@gmail.com} \and 
Konstantina Mellou\thanks{Microsoft Research, kmellou@microsoft.com} \and
Ishai Menache\thanks{Microsoft Research, ishai@microsoft.com}\and 
Joseph (Seffi) Naor \thanks{Technion, naor@cs.technion.ac.il}}
\newtheorem{thm}{Theorem}[section]
\newtheorem{lem}[thm]{Lemma}
\newtheorem{lemma}[thm]{Lemma}
\newtheorem{obs}[thm]{Observation}
\newtheorem{corollary}[thm]{Corollary}
\newtheorem{defn}[thm]{Definition}
\def \OPT  {\mbox{\rm OPT}}
\def \II   {{\mathcal I}}
\def \uni {{g}}
\def \HH {{\Pi}}
\def \OPT {{OPT}}
\def \OA {{\mbox{OPT}_{avg}}}
\def \OPTS {{OPT^S}}
\date{}
\begin{document}

\maketitle

\begin{abstract}
    The cloud computing industry has grown rapidly over the last decade, and with this growth there is a significant increase in demand for compute resources. Demand is manifested in the form of Virtual Machine (VM) requests, which need to be assigned to physical machines in a way that minimizes resource fragmentation and efficiently utilizes the available machines. This problem can be modeled as a dynamic version of the bin packing problem with the objective of minimizing the total usage time of the bins (physical machines). Earlier works on dynamic bin packing assumed that no knowledge is available to the scheduler and later works studied models in which lifetime/duration of each ``item'' (VM in our context) is available to the scheduler. This extra information was shown to improve exponentially the achievable competitive ratio.
    
    Motivated by advances in Machine Learning that provide good estimates of workload characteristics,
    this paper studies the effect of having extra information regarding future (total) demand. In the cloud context, since demand is an aggregate over many VM requests, it can be predicted with high accuracy (e.g., using historical data). We show that the competitive factor can be dramatically improved by using this additional information; in some cases, we achieve constant competitiveness, or even a competitive factor that approaches $1$. Along the way, we design new offline algorithms with improved approximation ratios for the dynamic bin-packing problem. 
\end{abstract}

\section{Introduction}\label{subsec:motivation}

Cloud computing is a growing business which has revolutionized the way computing resources are consumed. The emergence of cloud computing is attributed to lowering the risks for end-users (e.g., scaling-out resource usage based on demand), while allowing providers to reduce their costs by efficient management and operation at scale. One popular way of consuming cloud resources is through Virtual Machine (VM) offerings. Users rent VMs on demand with the expectation of a seamless experience until they decide to terminate usage. In turn, cloud resource managers place VMs on physical servers that have enough capacity to serve them. The specific VM allocation decisions have a direct impact on resource efficiency and return on investment. For example, inefficient placement mechanisms might result in fragmentation and unnecessary over-provisioning of physical resources. 

Our goal in this paper is to design algorithms for allocating VMs to physical machines in a cloud facility (e.g., cluster, region), so that the total active machine-time, taken over all machines, is minimized; a machine is considered {\em active} if one or more VMs run on it. When a machine becomes inactive, it can be returned to the general pool of machines, and therefore does not contribute to the cost function. In certain scenarios, the same optimization can also lead to power savings, under the assumption that empty machines can be kept in an idle, low-power mode \cite{guenter2011managing,DynamicRightSizing}. 

The problem of allocating VMs to physical machines can be modeled as a generalization of the classic (and extensively studied) {\em static bin packing} problem, where the goal is to pack a set of items of varying sizes, while minimizing the number of bins used \cite{johnson1973near,man1996approximation,coffman2007performance}. The VM allocation problem corresponds to a {\em dynamic bin packing} problem in which items, or VMs, arrive over time and later depart \cite{coffman1983dynamic}. Minimizing the total active-machine time translates then to minimizing the total usage time of the bins, or machines \cite{YXW14,YXW16,RX16,azar2017tight}. The VM allocation problem is of interest in both the {\em uniform size} case, in which all items have the same size (and each bin can pack at most $\uni$ items)  \cite{flammini2010minimizing}, but especially under the more general setting, which we refer to as the {\em non-uniform size} case. The problem is known to be NP-hard even in the uniform case when $\uni=2$~\cite{winkler2003wavelength}. 

The dynamic bin packing problem has been studied in both offline and online settings. In the online setting, items arrive over time, giving rise to two different models. In the {\em non-clairvoyant} case \cite{flammini2010minimizing,YXW14,tang2016first} no information is given to the scheduler upon arrival of a new item, and indeed only poor performance is obtained when there is a large variation in item duration times \cite{KL15} (see additional discussion later). In the {\em clairvoyant} setting \cite{YXW16,RX16,azar2017tight} the departure time (or duration) of an item is revealed upon arrival, allowing for significant performance improvements. 

The clairvoyant model assumes that highly accurate lifetime predictions are available to a scheduler. In the cloud context, this information has recently been obtained through Machine Learning (ML) tools \cite{ResourceCentral,lifeTimePatent,li2015play}, which are deployed to support resource management decisions for the underlying systems (see \cite{li2017deep, boutaba2018comprehensive, ResourceCentral, gao2014machine} and references therein). ML is increasingly used, not only for lifetime prediction, but also to predict other metrics, such as machine health \cite{ResourceCentral} and future demand \cite{guenter2011managing}. 

Motivated by the recent momentum in applying ML for cloud systems, we take the online dynamic bin packing model a step further, and study the advantage of having \emph{additional} information, on top of VM or item lifetimes. Specifically, we focus on designing online algorithms that possess some form of prediction about future \emph{demand}. From a practical perspective, we note that demand is an \emph{aggregate} over numerous requests; as such, it can be predicted with high accuracy \cite{guenter2011managing,zhang2016history} (in fact, higher accuracy than individual VM lifetime predictions).

\subsection{Our Results}

We first describe the setting in which we study the VM scheduling problem. We assume that each VM (item) has a demand (size), and each machine (bin) has unit size. Thus, the total demand of VMs assigned to a machine at any point of time cannot exceed $1$. In the uniform size case we assume that the size of all VMs is $\frac{1}{\uni}$ for some integer $\uni$. The VMs arrive over time and need to be assigned to machines for their duration (lifetime) in the system. As there is no migration of VMs across physical machines, the initial assignment remains as it is until the VM terminates. 

We can assume without loss of generality that at any given time there is at least one active VM. We refer to the aggregate size of the VMs that are active at time $t$ as the total demand/load at $t$. A physical machine is considered active when one or more VMs are assigned to it. The goal is to minimize the total time that the machines remain active. An important parameter in our results is $\mu$, defined to be the ratio between the maximum and minimum duration, taken over all VMs.  Finally, let $\HH_k$ be the asymptotic competitive ratio of the harmonic bin packing algorithm\footnote{The harmonic algorithm is parameterized by $k$, which controls an additive term in its competitive ratio. $\HH_k$ is a monotonically decreasing number that approaches  $\HH_\infty\approx 1.691$. $\HH_k$ quickly becomes close to $1.691$, for example, $\HH_6 = 1.7$ and $\HH_{12}\approx 1.692$.} \cite{lee1985simple}. 

As there is always at least one active VM in each time step, the optimal cost of the dynamic bin packing problem is at least $T$, the length of the time horizon. To facilitate the understanding of our results, we divide both the optimal cost, $\OPT$, as well as our algorithm's cost by $T$. Let $\OA$ denote the optimal cost divided by $T$. The value $\OA$ should thus be read as the average number of machines used by an optimal solution. This change, of course, does not affect the multiplicative factor in the approximation/competitive ratios we get. However, any additive term should now be read as the average number of additional machines our algorithm is using over the (average) number of machines an optimal solution is using. We note that in our cloud computing context the average number of machines is typically in the order of thousands.

We study the VM scheduling problem in both offline and online settings. 

\subsubsection{Offline Algorithms}

We first show the following result for the offline problem.

\begin{thm}\label{thm:main-offline}
For any $k\geq 3$, there exist {\bf offline} scheduling algorithms whose {\bf average} cost is at most:
\begin{center}
{\renewcommand{\arraystretch}{1.2}
\begin{tabular}{l|c|c}
& {\rm Non-uniform size case} & {\rm Uniform size case}  \\ \hline
{\rm Algorithms \ref{DensityAlgorithm},  \ref{HarmonizeOffline2}} & $\HH_k\cdot \OA +  O\left( \sqrt{\OA \cdot k\cdot \log \mu}\right)$ & $\OA + O\left( \sqrt{ \OA \cdot\log \mu}\right)$\\
{\rm Algorithms \ref{CoveringAlgorithm},  \ref{HarmonizeOffline2}} &  $2\HH_k (1+\frac{1}{k-2})\cdot \OA + k$ & $2\cdot \OA$\\ \hline
{\rm Previous results} &  $4\cdot \OA$~\cite{RX16} & $2\cdot \OA$~\cite{alicherry2003line,KR05}
\end{tabular}}
\end{center}
\end{thm}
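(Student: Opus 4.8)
\emph{Proof idea.}
All four algorithms are instances of one template; they differ only in how its parameters are tuned and in a final harmonization step. The plan is as follows. First, partition the VMs into \emph{duration classes}: class $i$ consists of the VMs whose duration lies in $[\delta\rho^{i},\delta\rho^{i+1})$, where $\delta$ is the smallest duration and $\rho>1$ is a tunable base, so there are $O(\log_{\rho}\mu)$ classes and within a class all durations agree up to a factor $\rho$; each class is scheduled on its own machines. Second, inside class $i$ cut the timeline into \emph{windows} of length $\tau_i$ (a small multiple of the class's maximal duration $\delta\rho^{i+1}$; if one wants $\tau_i$ only barely larger than $\delta\rho^{i+1}$, use $O(1)$ mutually shifted window systems so every VM still fits into a single window, at a $1+o(1)$ price). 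Inside a window the VMs routed to it are \emph{permanent}, so one is left with a static bin-packing instance: pack it with the harmonic algorithm, using at most $\HH_k\cdot\beta+k$ bins, where $\beta$ is the static optimum of the window; in the uniform case the static instance is trivial (a bin holds $\uni$ items), so $\beta$ bins suffice exactly and $\HH_k$ is replaced by $1$. Third, apply a harmonization step (Algorithm HarmonizeOffline2) that re-packs the under-full ``leftover'' bins of the windows --- and, crucially, of the different classes --- that are simultaneously active, so those leftovers no longer cost a fresh machine per window.

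For the accounting, use $\OPT\ge\max\{\,T,\ \int_0^T\Lambda(t)\,dt\,\}$ and $\int_0^T\Lambda(t)\,dt=\sum_i V_i$, where $\Lambda$ is the load profile and $V_i$ the true volume of class $i$. Treating a window's VMs as permanent overcounts their volume by a factor $\le\tau_i/(\delta\rho^{i})$, and the harmonic packing adds a further $\HH_k$; summing over classes with $\sum_i V_i\le\OPT$ bounds the multiplicative part. The rest of the cost comes from leftover bins: before harmonization, $k$ bins (uniform: one) per nonempty window, each alive for $\Theta(\tau_i)$, and the number of nonempty windows of class $i$ is at most $T/\tau_i$. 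Summing gives $\Theta(k)$ leftover machines \emph{per class on average}, so the two quantities to fight are the $(1+O(\epsilon+\tau_i/\delta\rho^i))$-type overcount and the $\Theta(k\cdot\#\text{classes})$ leftover bins, the latter only controllable because HarmonizeOffline2 merges the leftovers of the classes that coexist at a given instant.

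For the \emph{density} row, take $\rho=1+\epsilon$ (so $O(\tfrac{\log\mu}{\epsilon})$ classes) and pick the class parameters to balance these: the optimum makes both the multiplicative overcount on $\HH_k\OA$ (resp.\ $\OA$) and the leftover term $O(\tfrac{k\log\mu}{\epsilon})$ equal to $\Theta(\sqrt{\OA\,k\log\mu})$ (resp.\ $\Theta(\sqrt{\OA\log\mu})$) --- one can also see this from a per-class choice $\tau_i\asymp\delta\rho^i\sqrt{T/V_i}$, which gives a per-class overhead $\Theta(\sqrt{k\,V_i\,T})$ and then Cauchy--Schwarz, $\sum_i\sqrt{k\,V_i\,T}\le\sqrt{k\,T\,\log\mu\sum_i V_i}\le\sqrt{k\,T\,\log\mu\cdot\OPT}=T\sqrt{k\,\OA\,\log\mu}$; dividing by $T$ gives the first row. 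For the \emph{covering} row, take $\rho=2$, so $\tau_i$ is a fixed constant times the class's durations and the overcount is a constant; routing each VM to the smallest window of a shifted dyadic family containing it forces \emph{all} VMs of a window to be alive at one instant, which lets one charge the within-window optima directly against $\OPT$ with only a factor-$2$ loss (the covering/LP part). Harmonization then keeps the leftovers down to $O(k)$ total. In the non-uniform case, handling ``large'' items (size $>1/k$) separately and packing the rest harmonically produces the factor $\HH_k(1+\tfrac1{k-2})$, valid for $k\ge3$, with the outer $2$ being the overcount; in the uniform case there is no harmonic loss, leaving exactly $2\cdot\OA$.

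The main obstacle is the additive term. A naive implementation pays $k$ (uniform: one) machines for \emph{every} window of \emph{every} class, i.e.\ $\Theta(\log\mu)$ machines on average --- far worse than $O(\sqrt{\OA\log\mu})$ --- because the leftover bins of different windows live at different times and cannot all be merged. Overcoming this is exactly the role of HarmonizeOffline2 (merging the leftovers of the many classes that do coexist at each instant) together with, for the density row, the parameter balancing that trades the multiplicative overcount against what remains of the additive waste. A secondary difficulty is getting the $\approx 2\HH_k$ (rather than $\approx 6\HH_k$) multiplicative factor in the covering case, for which the ``all VMs of a window coexist'' property is what allows the within-window packings to be charged against $\OPT$.
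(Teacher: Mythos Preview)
Your sketch rests on a misreading of Algorithm~\ref{HarmonizeOffline2}. That algorithm is the standard harmonic \emph{size} partition: it splits $\II$ into $\II_j=\{I:w_I\in(\tfrac1{j+1},\tfrac1j]\}$ for $j<k$ and $\II_k=\{I:w_I\le\tfrac1k\}$, then runs the base dynamic algorithm separately on each size class. It never merges under-full or ``leftover'' bins across duration classes or windows, and no step in the paper does. The $\HH_k$ factor arises from the inequality $\OPT\ge\|v'\|_1/\HH_k$ for the rounded sizes $w'_I=1/j$, which reduces the non-uniform problem to uniform-size subproblems --- not from running static harmonic inside time windows. Once the merge step is removed from your template, the covering row collapses: with $\rho=2$ you have $\Theta(\log\mu)$ duration classes, each leaving $\Theta(k)$ leftover machines on average, so the additive term is $\Theta(k\log\mu)$, not $+k$. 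For the non-uniform density row there is a second gap: your within-window harmonic packing uses $\HH_k\cdot OPT^S_w+k$ bins, and you then need $\sum_{w}OPT^S_w\,\tau_i\le(1+o(1))\,\OPT$; but $OPT^S_w$ is the static optimum for the class-$i$ items of window $w$ alone, and your volume argument only controls $\sum_I w_I$, which \emph{lower}-bounds $OPT^S_w$ rather than upper-bounding it. (Your balancing for the \emph{uniform} density row, where $OPT^S_w=\lceil N_w/g\rceil$, does go through without any merge, so that one cell is salvageable.)

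The paper's algorithms are structurally different from windowing. Algorithm~\ref{CoveringAlgorithm} never classifies by duration: it repeatedly peels off a \emph{cover} $C\subseteq\II$ whose load lies pointwise in $[\min\{v_t,1\},2]$ (uniform) or $[\min\{v_t,\tfrac12-\beta\},1]$ (non-uniform), schedules $C$ on one or two machines via First-Fit, and recurses on $\II\setminus C$; this directly yields $2\|v\|_1$ and $4\|v\|_1$. Algorithm~\ref{DensityAlgorithm} also has no global windows: while the peak load exceeds $2+4\ln\mu$ it picks the time $t^\star$ of maximum load, and among the intervals alive at $t^\star$ extracts a subset of total size $\ge 1/c$ that fits on one machine of length $\ell\bigl(1+O(\sqrt{(\log\mu)/v_{t^\star}})\bigr)$, removes it, and repeats; a telescoping/integral argument then gives $c\|v\|_1+O(\sum_t\sqrt{v_t\log\mu})$. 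Only afterwards is the size partition of Algorithm~\ref{HarmonizeOffline2} applied to convert the non-uniform constant $c$ into $\HH_k$.
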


The $2\cdot \OA$ upper bound for the uniform size case is well known \cite{alicherry2003line,KR05,RX16}. However, it is described here not only to compare against our other results, but also because the techniques used to prove it are later used in the online case; though fairly simple, these techniques are somewhat different from previous proofs. In the above theorem we obtain two improved new bounds for the offline case. These improvements are in the spirit of the {\em asymptotic} approximation ratio commonly used in the standard bin packing problem. Algorithm \ref{CoveringAlgorithm}  has multiplicative approximation ratio $2\HH_k(1+\frac{1}{k-2})$, while using extra $k$ machines in each time step. Thus, its ``asymptotic" approximation approaches $2\HH_{\infty} \approx 3.38$, which is better than the best known (strict) $4$-approximation for the problem\footnote{We remark that Algorithm \ref{CoveringAlgorithm} also achieves the same $4$-approximation (with no additive term).}. 

Algorithm \ref{DensityAlgorithm} achieves an even better multiplicative approximation ratio of $\Pi_k$ (that approaches $1.69$), albeit only when the average number of machines used by an optimal solution is relatively large. Specifically, it uses an extra $ O(\sqrt{\OA\cdot k \cdot\log \mu})$ machines in each time step. If the average number of machines used in the optimal solution is much larger than $\log \mu$, the latter additive term becomes negligible compared to $\OA$. In the uniform size case, the asymptotic approximation of the algorithm is $1$, as may be expected when sizes are uniform. When the maximum demand of any VM is small (and also in some other scenarios), our performance guarantees are better than those outlined in Theorem \ref{thm:main-offline} (actually, in both offline and online settings). We refer the reader to Section \ref{sec:offline} and Section \ref{full_predictions} for more details.

\subsubsection{Online Algorithms with Additional Information}

Our main contribution in this paper is the construction of new online algorithms that have access to additional information on future demand, leading to improved competitive ratios. Interestingly, our online algorithms are inspired by their offline counterparts. Earlier results assumed that an online scheduler gets no extra information upon arrival of a VM request. The performance guarantee of these algorithms turned out to be very poor in many cases. Better results were later obtained for the clairvoyant model, in which duration of requests are revealed upon arrival. 

Specifically, we explore two novel models in which the scheduler is provided with predictions about demand. In the first model, the average load is known to the scheduler (a single value), and in the second model the total load in each of the future time steps is known to the scheduler. We remark that predicting future cumulative demand is much simpler than obtaining the full structure of an instance, which requires predicting future arrivals of individual requests.

\begin{thm}\label{thm:main-online}
For any $k\geq 2$, there exist {\bf online} scheduling algorithms with {\bf average} cost at most:
\begin{center}
{\renewcommand{\arraystretch}{1.2}
\begin{tabular}{l|c|c}
{\rm Extra information} & {\rm Non-uniform size case} & {\rm Uniform size case}  \\ 
{\rm beyond duration} & \\
\hline
{\rm Average load} & $\HH_k\cdot \OA + k\cdot O(\sqrt{\OA\cdot \log \mu})$ & $\OA + O(\sqrt{\OA\cdot \log \mu})$\\
{\rm Future load vector} &  $8\cdot \OA$ & $2\cdot \OA$\\ \hline
{\rm No extra information} & $O(\sqrt{\log \mu})\cdot \OA$~~~{\rm \cite{azar2017tight}} & $O(\sqrt{\log \mu})\cdot \OA$~~~{\rm \cite{azar2017tight}}
\end{tabular}}
\end{center}
\end{thm}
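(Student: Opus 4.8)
The plan is to obtain each of the four bounds by running an \emph{online} adaptation of the corresponding offline algorithm from Theorem~\ref{thm:main-offline}, using the demand prediction precisely to compensate for the lack of knowledge of future arrivals. The common backbone is the bucketing of VMs into $O(\log\mu)$ geometric duration classes --- class $i$ containing the VMs whose duration lies in $[2^i,2^{i+1})$ --- and scheduling the classes almost independently. The point of the bucketing is that within a class every VM is short compared to the class's time granularity $2^i$, so a machine dedicated to class $i$ can be \emph{recycled} on the $2^i$ time scale; the whole difficulty is how much capacity to commit to a class at any moment when future arrivals within that class are unknown.

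For the average-load model I would run an online version of Algorithm~\ref{DensityAlgorithm}. The offline density algorithm consolidates so that the number of active machines tracks the current load up to the factor $\HH_k$; online, the obstacle is that an adversary front-loading demand can force over-commitment of machines that then linger, and without a prediction this costs a factor $O(\sqrt{\log\mu})$, matching the lower bound of \cite{azar2017tight}. Knowing the average load $\lambda$ --- equivalently, a lower bound $\lambda\le\OA$ on the optimum --- neutralizes this: the algorithm commits machines to each class in blocks whose sizes are calibrated to $\lambda$, packs arriving VMs into the currently open blocks of their class by the harmonic rule, and opens a fresh block only on overflow. A block opened at class $i$ to absorb a transient excess is drained within $O(2^i)$ time, so the extra machine-time spent on all blocks ever opened at class $i$ is bounded by $O(\sqrt{\OA})$ on average via a second-moment (Cauchy--Schwarz) estimate on the class-$i$ load fluctuations the blocks had to absorb; summing over the $O(\log\mu)$ classes, and paying the harmonic additive overhead $k$ per class, gives $\HH_k\cdot\OA + k\cdot O(\sqrt{\OA\log\mu})$. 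In the uniform case $\HH_k$ is replaced by $1$ and the per-class $k$ factor vanishes, reproducing the offline bound $\OA + O(\sqrt{\OA\log\mu})$ exactly.

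For the future-load-vector model I would run an online version of Algorithm~\ref{CoveringAlgorithm}. Knowing the total load $\ell(t)$ at every future $t$ yields the number $\lceil\ell(t)\rceil$ of machines any solution must use at $t$, hence a close estimate of $\OA$. The covering algorithm partitions the horizon into windows, reserves in each window enough machines to cover the predicted load there, and packs each arriving VM (by First-Fit / harmonic) into the reservation of its own window and, on spillover, into the next window's reservation; since a VM active in window $w$ has duration at most a window length, it is contained in windows $w$ and $w+1$, so a doubling/charging argument shows the total reservation is at most twice what the offline covering algorithm reserves. This gives twice the offline guarantee, i.e.\ $8\cdot\OA$ in the non-uniform case; in the uniform case a more careful accounting of window overlaps recovers $2\cdot\OA$.

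The main obstacle is the average-load case: showing that a consolidation/blocking rule driven by a single scalar prediction is robust against the worst-case \emph{timing} of arrivals, both within a duration class and across classes, and that the accumulated transient cost is only $O(\sqrt{\OA\log\mu})$ rather than $O(\sqrt{\log\mu})\cdot\OA$. This is exactly the point at which the $\Omega(\sqrt{\log\mu})$ competitive lower bound is bypassed, and it calls for a careful amortized argument bounding how often a spike-absorbing block of a class fails to be refilled by later VMs of that class; the other three bounds then follow by specialization or by the comparatively routine doubling argument above.
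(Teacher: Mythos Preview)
Your proposal diverges from the paper on both fronts, and in each case the divergence is at the level of the key idea, not just packaging.

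\textbf{Average load.} You bucket durations geometrically with ratio $2$, giving $\Theta(\log\mu)$ classes. With that coarse a bucketing the duration ratio inside a class is $2$, so any recycling/consolidation scheme inside a class loses a multiplicative factor of $2$: you would get $2\HH_k\cdot\OA$ (resp.\ $2\cdot\OA$), not $\HH_k\cdot\OA$ (resp.\ $\OA$). To make the multiplicative loss vanish you need the in-class ratio to tend to $1$, i.e.\ finer classes --- but then there are more classes and the additive term explodes. The paper resolves exactly this tension by a \emph{non-uniform} bucketing: class $j$ has ratio $e^{\epsilon_j}$ with $\epsilon_j=\min\{1,j/v_{avg}\}$, which yields only $n=O(\sqrt{v_{avg}\log\mu})$ classes while keeping $e^{\epsilon_n}=1+O(\sqrt{\log\mu/v_{avg}})$. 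Inside each class the paper does \emph{not} ``commit blocks calibrated to $\lambda$''; it uses a threshold $q_j=\min\{1,\sqrt{v_{avg}/j}\}$ to route low-load arrivals to a single First-Fit copy and high-load arrivals to a per-class copy of a generic reduction (Lemma~\ref{lem-reduction1}) that turns any $k$-bounded static bin-packing algorithm into a non-clairvoyant dynamic one with cost $c\cdot\mu_j\cdot\OPT+k\|v^j\|_0$. Your ``second-moment/Cauchy--Schwarz on class fluctuations'' is not the mechanism; the additive $O(\sqrt{\OA\log\mu})$ falls out of summing $\sum_j q_j$ and $\sum_j\|v^{j2}\|_0$ with the specific $\epsilon_j,q_j$ above. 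As written, your sketch does not produce the stated multiplicative constants and the arithmetic for the additive term is not carried out.

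\textbf{Future load vector.} The paper does not partition time into windows at all. It runs, in parallel, infinitely many copies of a single online ``cover'' subroutine: copy $i$ is handed the overestimated load $(v_t-(i-1))^+$ (uniform case) and, when interval $I$ arrives, \emph{accepts} $I$ iff there is some $t\in I$ with $v'_t-v^r_t\le 1$, where $v^r$ is the load already rejected by this copy. The guarantee (Lemma~\ref{lem-online-load1-unified}) is that each copy's accepted load lies in $[\min\{1-\Delta_t,v_t\},2]$, and an induction over copies (Lemma~\ref{induction-lemma-unified}) shows the first $v_t$ copies absorb all load at time $t$; First-Fit then schedules each copy on $2$ machines, giving $2\|v\|_1$. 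Your window-and-reservation scheme with a ``spillover into the next window'' step is a different algorithm, and as described it is unclear how a VM of duration close to $\mu$ is confined to two windows unless windows have length $\mu$ (in which case the reservation argument no longer gives $2\cdot\OA$), nor how the $8\cdot\OA$ non-uniform bound emerges from a doubling of the offline $4\cdot\OA$ rather than from the separate treatment of wide items plus the $\lceil 2v_t/(1-2\beta)\rceil$ analysis the paper actually uses.
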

In the above table, we compare our results with the previously best known online result in the clairvoyant model, due to Azar and Vainstein \cite{azar2017tight}\footnote{We note that similarly to the algorithm of \cite{azar2017tight}, our algorithm also does not need to know the value of $\mu$ upfront.}. As indicated in the table, \cite{azar2017tight} designed an algorithm whose total cost is at most $O(\sqrt{\log \mu})\cdot \OA$, and proved that this ratio is optimal. Our results demonstrate that with more information the competitive ratio can be dramatically improved. Suppose that the only additional information provided is the average load (taken over the full time horizon), and that the average number of machines used is much larger than $\log \mu$; then, we obtain a constant competitive ratio that approaches $\HH_{\infty}\approx 1.69$ in the non-uniform size case and an asymptotic ratio of  $1$ in the uniform size case. Thus, our performance guarantee is always better than \cite{azar2017tight}, and it is the same when the average load is $O(1)$. 

If the load at all future times is known, we achieve a (strict) constant competitive ratio under no additional assumptions. This is in contrast to the $\Omega(\sqrt{\log \mu})$ lower bound on the competitive ratio of any algorithm without this extra knowledge~\cite{azar2017tight}.

In Section \ref{sec:inaccurate} we complement our results and analyze the performance of our algorithms when the average load prediction, as well as interval lengths predictions, are inaccurate. 

We complement the above results by generalizing the lower bound of \cite{azar2017tight}  to take into account also $\OA$ showing that the additive term $O(\sqrt{\OA \cdot \log \mu})$ is indeed unavoidable, if only the average future load (and lifetime) is available to an online algorithm.  

\begin{thm}\label{thm:lower-bound11}
The average cost of any online algorithm is at least $\Omega\left(\sqrt{\OA\cdot \log \mu}\right)$. The bound holds even for the uniform size case and with prior knowledge of the average load and $\mu$. \end{thm}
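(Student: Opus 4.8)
The plan is to exhibit a hard instance in the uniform size case (which immediately gives the bound for the non-uniform case), in which all VMs have size $\frac{1}{g}$ for a suitable $g$, and in which the average load is a fixed value known to the algorithm in advance, yet any online algorithm must ``waste'' an extra $\Omega(\sqrt{\OA \cdot \log\mu})$ machines on average. The starting point is the lower-bound construction of Azar and Vainstein~\cite{azar2017tight}, which produces, over a horizon built from geometrically increasing interval lengths $1, 2, 4, \dots, 2^{L}$ with $L = \Theta(\log \mu)$, a situation where any online algorithm pays $\Omega(\sqrt{\log\mu})$ times the optimum. The new ingredient is to parametrize that construction by the target average load: instead of injecting a single unit of demand at each ``level,'' I would inject $\Theta(\OA)$ units of demand (i.e., $\Theta(g \cdot \OA)$ VMs) at each level, so that the optimum uses $\Theta(\OA)$ machines per time step on average, while the per-level slack that the adversary can exploit scales like $\sqrt{\OA}$ rather than like a constant.

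The key steps, in order: (i) Fix $L=\lfloor \log_2 \mu\rfloor$ levels; at level $i$ the adversary releases a batch of VMs of duration $2^{i}$, timing the batches so that at the moment level-$i$ VMs arrive, the online algorithm has already committed to placements from levels $0,\dots,i-1$ that it cannot undo (no migration). (ii) Use a potential/accounting argument à la \cite{azar2017tight}: at each level the online algorithm either opens new machines for the fresh batch (paying for their full lifetime $2^{i}$) or packs the fresh batch into machines already kept alive by longer-lived VMs from lower levels — but because the sizes are $\frac1g$ and the batch size is tuned to $\Theta(\OA)$, a balls-in-bins / convexity argument shows that after $L$ levels the total committed ``open machine mass'' exceeds the optimal mass by an additive $\Omega(\sqrt{\OA\cdot L})$ per time step in expectation. (iii) Show that the optimum, knowing the whole sequence, packs each level's batch tightly into $\lceil \OA \rceil$ machines and so has average cost $\Theta(\OA)$, while also verifying that the realized average load over the constructed horizon equals the prior value handed to the algorithm (so that the ``average load is known'' hypothesis is genuinely satisfied and gives the algorithm no escape). (iv) Since $\mu$ is a fixed function of $L$, revealing $\mu$ in advance does not help either; conclude by dividing by $T$.

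The heart of the argument — and the main obstacle — is step (ii): quantifying how the $\sqrt{\cdot}$ loss in the Azar–Vainstein bound scales when each level carries $\Theta(\OA)$ units of demand rather than $O(1)$. One has to argue that the adversary can force the online algorithm into an \emph{unbalanced} configuration at every level, and that these per-level imbalances are independent enough (or can be made so by randomizing the adversary's choices and appealing to Yao's principle) that their contributions add up to $\Omega(\sqrt{\OA \cdot L})$ rather than being absorbed. Concretely, the right model is: at level $i$ the online algorithm must decide how much of the level-$i$ demand to place ``on top of'' surviving lower-level machines versus on fresh machines, this is essentially a signed random walk with $L$ steps and per-step scale $\Theta(\sqrt{\OA})$ (the standard deviation of how much residual capacity a random online packing of $\Theta(\OA)$ unit-$\frac1g$ balls leaves), and a maximal-inequality/anticoncentration bound then yields the claimed $\Omega(\sqrt{\OA \cdot \log \mu})$. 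Getting the dependence on $\OA$ inside the square root exactly right, while keeping the realized average load pinned to the advertised value, is the delicate part; the rest is a faithful adaptation of \cite{azar2017tight}.
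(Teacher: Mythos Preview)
Your proposal sketches a significantly more elaborate mechanism than is needed, and in doing so leaves the central quantitative claim unjustified. You assert that at each of the $L=\Theta(\log\mu)$ levels the adversary can force an ``imbalance'' of order $\sqrt{\OA}$, and that these imbalances accumulate via a random-walk/anticoncentration argument to $\Omega(\sqrt{\OA\cdot L})$. But you never explain \emph{why} the per-level slack should scale like $\sqrt{\OA}$: packing $\Theta(g\cdot\OA)$ uniform-size items of size $1/g$ leaves residual capacity that is deterministic (and, even under randomized packing, has no obvious $\sqrt{\OA}$ fluctuation), so the ``standard deviation'' heuristic you invoke does not apply. Without that step the whole argument collapses. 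The appeals to Yao's principle and signed random walks further obscure matters, since no explicit input distribution is ever specified; and your construction as described seems to produce peak load $\Theta(\OA\cdot L)$ when all levels overlap, making it unclear how the realized average load can be pinned to the advertised $\OA$.

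The paper's proof is far more direct and needs none of this machinery. It first disposes of the regime $v_{avg}\ge\tfrac{1}{2}\log\mu$ trivially (there $\OPT\ge\|v\|_1$ already gives the bound). Otherwise the adversary is adaptive at the granularity of a single time step: at each $t=1,\dots,\mu$ it releases intervals one by one, all of size $w=\sqrt{2a/\log\mu}$ (with $a=v_{avg}$) and geometrically increasing lengths $1,2,4,\dots,2^{\lceil\log\mu\rceil}$, stopping the moment the algorithm has $N=\sqrt{2a\log\mu}$ open machines. This always succeeds, since the full batch has total size $\ge N$ and cannot fit on fewer machines. The accounting is then one line: the \emph{last} interval released at time $t$ forced a new machine, so $c_{alg}\ge\sum_t\ell_t$; meanwhile the load released at time $t$ is at most $2w\ell_t$ by the geometric sum, so $\|v\|_1\le 2w\cdot c_{alg}$, i.e.\ $c_{alg}\ge\|v\|_1/(2w)=\Omega(T\sqrt{v_{avg}\log\mu})$. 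If the realized average exceeds $a$, simply extend the horizon with no new requests until it drops to $a$. No randomization, no anticoncentration, no Yao.
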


We also remark that  the lower bound of approximately $1.542$~\cite{balogh2019new} on the asymptotic competitive ratio of any static online bin packing algorithm, carries over to the dynamic clairvoyant case\footnote{Simply use the same (long) duration for all requests that arrive (almost) at the same time according to the adversarial arrival sequence.}. 

\subsection{Techniques} 

The main issue we cope with in the online setting is how to improve the competitive ratio by utilizing the additional information provided to an online scheduler. At a high level, this is achieved by drawing on ideas from our new offline algorithms; we show that these algorithms can be to some extent ``simulated" in the online case, even when less information is available. Yet, the loss to performance is bounded.

\paragraph{How to utilize future load predictions?}
When loads for each future time step are available, we draw on ideas from Algorithm \ref{CoveringAlgorithm}. In each iteration, this offline algorithm considers the unscheduled requests, and finds greedily (and carefully) a set of requests (among the unscheduled requests) that can be scheduled on one or two machines, and having high enough load in every time step. This set can be interpreted as a cover of the time horizon. The offline algorithm then repeats this process with the remaining unscheduled requests, till all requests are scheduled.

Achieving this goal online is tricky, as multiple covers of the time horizon must be created in parallel without knowing future requests. Each ``error" in assigning requests to covers may either increase the number of machines required for scheduling a cover, or increase the number of covers (again, resulting in too many active machines). We show that when given information on future demand, the number of ``extra" covers generated is bounded. The high level idea is to maintain several open machines (and not just two as in the offline case), and schedule a new request on the lowest index machine that ``must" accept it in order to preserve a ``high load" invariant. Finding the right machine is done online utilizing the predictions on the remaining future demand and the new request's lifetime. Surprisingly, we are able to get a constant competitive factor in this case, even though the online scheduler is not familiar with the full interval structure of the instance, as in the offline setting, only with cumulative load. The results are presented in Section \ref{full_predictions}.

\paragraph{How to use average load prediction?}
Interestingly, we show that this {\em single value} parameter can be extremely useful in improving the competitive factor. This is done by mimicking Algorithm~\ref{DensityAlgorithm}. This offline algorithm finds a dense subset of requests of roughly the same duration that can be scheduled together (similarly in spirit to the greedy set-cover algorithm). In the offline setting we show that this is possible whenever there exists a point in time in which demand is high enough. 

The online scheduler is not familiar with the demand ahead of time. Hence, it uses a careful classification of the requests by their duration, the current demand, and the average demand. While the idea of classification has been used before in the context of dynamic bin packing, we classify intervals in a more sophisticated way. Finally, to get our refined bounds we schedule each class of intervals using a new family of non-clairvoyant algorithms (discussed in the sequel) that trade off carefully multiplicative and additive terms. The results are presented in Section \ref{online_average}.

\paragraph{A new family of non-clairvoyant algorithms.}
To get our refined bounds we show a general reduction that transforms any $k$-bounded space (static) bin packing algorithm (see exact definitions in Appendix \ref{app:staticAlgs}) into a non-clairvoyant algorithm for the dynamic bin packing problem. Note that the optimal cost in the static bin packing is simply the number of bins (and not the total duration). Hence, we use $\OPTS$ to emphasize that this is the optimal solution for static instances. We prove the following.

\begin{lemma}\label{lem-reduction1}
Given a $k$-bounded space bin packing algorithm whose cost at most $c \cdot \OPTS +\ell$, there exists an online non-clairvoyant algorithm for the dynamic bin packing setting whose {\bf average} cost is at most
$c \cdot \mu\cdot  \OA + \max\{k,\ell\}$.
\end{lemma}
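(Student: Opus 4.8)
The plan is to realize the non‑clairvoyant dynamic algorithm by running $A$ on the VMs in their order of arrival, treated as a static bin packing instance ($A$ never observes departures), identifying each bin that $A$ opens with one physical machine; as soon as every VM of a bin has departed, its machine is returned to a free pool and reused for the next bin $A$ opens. This recycling does not change the cost, which equals $\int_0^T N(t)\,dt$, where $N(t)$ is the number of bins of $A$ still holding a VM alive at time $t$. For the analysis I would first rescale time so the shortest VM duration is $1$; then all durations lie in $[1,\mu]$, and rescaling affects neither the competitive ratio nor $\OA$.

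Next I would split the cost. Let $P$ be the number of bins $A$ ever opens, and for bin $i$ let $o_i$ be the time it receives its first VM and $x_i$ the time $A$ closes it. Every VM placed in bin $i$ arrives during $[o_i,x_i]$, hence departs by $x_i+\mu$, so the machine of bin $i$ is active only inside $[o_i,x_i+\mu)$; therefore
\[
\int_0^T N(t)\,dt \;=\; \sum_{i=1}^{P}(\text{active time of bin }i)\;\le\;\sum_{i=1}^{P}(x_i-o_i)\;+\;\mu P .
\]
The $k$‑bounded‑space property is used exactly here: at most $k$ bins are open at any moment, so $\sum_i(x_i-o_i)=\int_0^T(\text{number of open bins at }t)\,dt\le kT$. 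For the term $\mu P$ I would invoke the guarantee of $A$ on the static instance of all VMs, $P\le c\cdot\OPTS+\ell$, where now $\OPTS$ denotes the optimal static packing of the entire VM set.

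The heart of the argument is the inequality $\OPTS\le\OPT$, where $\OPT$ is the optimal dynamic cost on the rescaled instance. To prove it, fix an optimal dynamic schedule; each of its machines $i$ is active on a set $U_i$ with $\sum_i|U_i|=\OPT$, and since the machine is empty between its maximal busy intervals, the VMs it ever hosts are partitioned among those intervals. For one busy interval $[a,b)$: any VM hosted in it has its whole lifetime inside $[a,b)$, so (durations being $\ge 1$) its arrival lies in $[a,b-1]$. Assign each such VM to the static bin indexed by $\lfloor(\text{arrival}-a)\rfloor$; this uses at most $b-a$ bins, and all VMs put in the same bin $q$ are simultaneously alive at one instant (just before $a+q+1$), so by feasibility of machine $i$ their sizes sum to at most $1$. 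Summing the bin counts over all busy intervals and all machines gives a static packing of all VMs into at most $\sum_i|U_i|=\OPT$ bins. Combining everything, $\int_0^T N(t)\,dt\le kT+\mu(c\,\OPT+\ell)=c\mu\,\OPT+kT+\mu\ell$, i.e. the average cost is at most $c\mu\,\OA+k+\mu\ell/T$.

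The step I expect to be delicate is bringing the additive term down to exactly $\max\{k,\ell\}$ rather than $k+\mu\ell/T$: the $\ell$ \emph{surplus} bins of $A$ must be charged per bounded‑length time window rather than globally, so that they contribute additively and not with a spurious factor $\mu$ (for the harmonic‑based instantiations one also uses that their additive term is $O(k)$). Concretely I would instead bound $N(t)$ at each $t$ by $k$ (open bins) plus the number of bins $A$ closes during $(t-\mu,t]$, and control the latter via the bounded‑space bookkeeping together with $A$'s guarantee applied to the VMs arriving in unit‑length sub‑windows — each such family is simultaneously alive, so its static optimum is at most the load at some instant, hence at most $\OPT$ there — and then integrate. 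The two‑part cost split, the $kT$ bound from bounded space, and the clean inequality $\OPTS\le\OPT$ coming from the duration lower bound I expect to go through without trouble.
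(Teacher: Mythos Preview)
Your construction of the algorithm, the inequality $\OPTS\le\OPT$ via bucketing arrivals into unit windows, and the split $\int_0^T N(t)\,dt\le\sum_i(x_i-o_i)+\mu P\le kT+\mu(c\,\OPTS+\ell)$ are all correct and match the paper's ingredients. The gap you flag yourself is real, however, and your proposed fix does not close it: bounding $N(t)$ by $k$ plus the number of bins closed in $(t-\mu,t]$ and then integrating just reproduces $kT+\mu P$, and applying $A$'s guarantee window‑by‑window does not help because the additive $\ell$ would then be incurred in every window.

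The paper's trick is different and quite clean. Instead of grouping the bins of $A$ into $k$ chains, group them into $\alpha=\max\{k,\ell\}$ chains (still possible, since at most $k\le\alpha$ bins are open at any time). Within a chain the accepting intervals are disjoint and ordered, so a telescoping sum bounds the chain's total active time by $\mu(M_j-1)+T$: every bin except the \emph{last} one in the chain contributes at most $\mu$ beyond its accepting interval, while the last bin's active time is simply absorbed into the horizon $T$. Summing over the $\alpha$ chains gives
\[
\mbox{cost}\;\le\;\mu\Bigl(\sum_j M_j-\alpha\Bigr)+\alpha T\;\le\;\mu\bigl(c\,\OPTS+\ell-\alpha\bigr)+\alpha T\;\le\;c\mu\,\OPT+\max\{k,\ell\}\cdot T,
\]
since $\alpha\ge\ell$ makes $\ell-\alpha\le 0$. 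The point you were missing is that using $\alpha$ chains rather than $k$ ``saves'' $\mu$ once per chain (from the last bin), and these $\alpha$ savings exactly cancel the $\mu\ell$ surplus coming from $A$'s additive term.
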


For example, substituting in this theorem the performance of the Harmonic Algorithm, we obtain a non-clairvoyant algorithm whose average cost is at most $\HH_k \cdot \mu\cdot  \OA + k$.

\subsection{Related Work}\label{sec:related}
In the remainder of this section, we discuss additional relevant work from an algorithmic perspective. Flammini et al.~\cite{flammini2010minimizing} analyzed a natural First-Fit heuristic for the offline problem, and proved it is a $4$-approximation. Tang et al.~\cite{tang2016first} proved that a First-Fit heuristic in the online non-clairvoyant setting is $\mu+4$ competitive. This was proven to be almost optimal by Li et al.~\cite{YXW16} who showed that the competitiveness of any Fit-Packing algorithm cannot be better than $(\mu+1)$. Ren et al.~\cite{RX16} designed a First-Fit based algorithm for the clairvoyant bin packing problem. Using additional predictions of maximum and minimum lifetimes of items they achieved competitive ratio $2\sqrt{\mu}+3$.

The online dynamic bin packing problem was first introduced by Coffman et al.~\cite{coffman1983dynamic}. Their objective was minimizing the maximum number of active machines over the time horizon. They designed an $2.788$-competitive algorithm. For this model, Wong et al.~\cite{wong20128} obtain a lower bound of $\frac{8}{3}\approx 2.666$ on the competitive ratio.

Additional interval scheduling models have been recently considered in the context of cloud computing (e.g., \cite{lucier2013efficient,azar2015truthful,chawla2017truth,Khuller18} and references therein). These models are fundamentally different than our VM scheduling setup, mainly because the ``jobs'' in these works have some flexibility (termed slackness) as to when they are executed. Finally, we note that there has been growing interest in designing resource management algorithms with ML-assisted (and potentially inaccurate) predictions; see, e.g., recent work on online caching \cite{lykouris2018competitive}, scheduling \cite{lattanzi2020online} and the ski-rental problem \cite{purohit2018improving}. 

\vspace{8pt}

\noindent\textbf{Organization.} Our model is formally defined in Section \ref{sec:pre}. In Section \ref{sec:offline}, we consider the offline case. The online case is studied under two different settings: in Section \ref{online_average}, we assume that the average load information is available, whereas in Section \ref{full_predictions} the scheduler is equipped with the future load vector predictions.

\section{Model and Preliminaries}\label{sec:pre}

We model each VM request as a time \emph{interval} $I=[s,e)$; we often use the term interval when referring to a VM request. Each interval is associated with a {\em start time}, $s_I$, {\em end time}, $e_I$, and a {\em size}, $w_I\leq 1$. The intervals are scheduled on machines/bins whose size is normalized to $1$. We say that $t\in I$ if $s_I \leq t < e_I$. Let $\ell_I=e_I-s_I $ be the {\em length} of interval $I$. We assume without loss of generality that the minimum length of an interval is $1$, and denote by $\mu$ the maximum length of an interval (which is not necessarily known in advance). Let $\beta=\max_{I}w_I$ be the maximum size of an interval (which, again, is not necessarily known in advance). In the {\em uniform size model} the size of all intervals is $\frac{1}{\uni}$ for some integer value $\uni$. In the {\em non-uniform size model} the size of each interval is arbitrary.

The {\em static bin packing} problem is a classic NP-hard problem in which the goal is to pack a set of items of varying sizes, while minimizing the number of bins used. The problem has been studied extensively in both offline and online settings  \cite{johnson1973near,man1996approximation,coffman2007performance}. The problem of allocating VMs to physical machines is equivalent to the {\em dynamic bin packing} problem in which items (VMs) arrive over time and later depart \cite{coffman1983dynamic}. The goal is to minimize the total usage time of the bins which is the same as minimizing the total time the machines are active \cite{YXW14,YXW16,RX16,azar2017tight}. In the online setting items arrive over time, and in the {\em non-clairvoyant} case no information is given to the scheduler upon arrival of a new item, while in the {\em clairvoyant} setting the departure time (or duration) of an item is revealed upon its arrival.  

A machine is said to be {\em active} or {\em open} at time $t$ if at least one VM is  running on it. Our goal is to schedule the VMs so as to minimize the total (or equivalently the average) number of active machines over the time horizon. We assume that the cloud capacity is large enough, so that VM requests can always be accommodated. Without loss of generality, we further assume that at each time $t$ there is at least one active request (otherwise, the time horizon can be partitioned into separate time horizons).

Let $\II$ be a set of all intervals (VM requests). We define $\II(t)=\{I\in \II  |  t\in I\}$ as the set of intervals that are active at time $t$, and let $N_t= |\II(t)|$. Let $v=(v_1, v_2, \ldots, v_T)$ be the {\em load vector} over time, where $v_t= \lceil \sum_{I \in \II(t)}w_I \rceil$. In our analysis, we use several norms of the load vector: $\|v\|_1= \sum_{t=1}^{T} v_t$, $\|v\|_{\infty}= \max_{t=1}^{T}\{ v_t\}$, and $\|v\|_{0}= \sum_{t=1}^{T}{\bf \mathds{1}}_{(N_t>0)}$  (i.e., the total number of time epochs in which there is at least one active VM request). In addition, let $v_{avg}=\frac{\|v\|_1}{T}$ be the average value of the load vector (or the average demand). Throughout the paper, we will use load vector notions not only for the set $\II$ of all intervals, but also for different subsets $S\subseteq \II$. In every such use case, we describe explicitly the corresponding subset. A simple (known) lower bound on the value of the optimal solution, using our load vector notation, is the following:

\begin{obs} \label{CLB}
The total active-machine time required by any scheduler is at least $\|v\|_1$. 
\end{obs}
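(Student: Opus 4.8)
The plan is to prove the bound time step by time step and then sum. Fix an arbitrary valid schedule and fix a time step $t$. Every interval $I \in \II(t)$ is assigned to exactly one machine, and since machines have unit capacity, the total size of the intervals placed on any single machine is at most $1$. Summing this capacity inequality over all machines that are active at time $t$, and letting $m_t$ denote their number, we get $m_t \cdot 1 \ge \sum_{\text{active machines } j} (\text{load on } j \text{ at time } t) = \sum_{I \in \II(t)} w_I$. This is just a pigeonhole/averaging step: you cannot fit total size $\sum_{I \in \II(t)} w_I$ into fewer than $\sum_{I \in \II(t)} w_I$ unit bins.

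Since $m_t$ is a nonnegative \emph{integer}, the inequality $m_t \ge \sum_{I \in \II(t)} w_I$ can be strengthened to $m_t \ge \left\lceil \sum_{I \in \II(t)} w_I \right\rceil = v_t$ by the definition of $v_t$. Finally, the total active-machine time of the schedule is $\sum_{t=1}^{T} m_t \ge \sum_{t=1}^{T} v_t = \|v\|_1$, as claimed. There is no real obstacle here; the only point requiring a moment's care is the rounding up to the ceiling, which is justified precisely because the number of active machines at any time is integral (and because time is treated as a discrete sequence of steps $t = 1, \dots, T$, as in the model). The argument holds for \emph{any} scheduler, clairvoyant or not, offline or online, which is exactly why it serves as a universal lower bound in the sequel.
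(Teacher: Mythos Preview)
Your proof is correct and is precisely the standard argument underlying this observation. The paper itself does not give a proof, treating the statement as a well-known lower bound; your reasoning (capacity at most $1$ per machine at each time step, integrality of $m_t$ to pass to the ceiling, then sum over $t$) is exactly the intended justification.
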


We next provide a useful lemma about intersecting intervals (intervals that are all active at the same time $t$). We use this lemma frequently in our algorithms' analyses to guarantee that in each such set, there is one interval that sees a high load, i.e., the total load is above a certain threshold, for its whole duration. 

\begin{lemma}[Intersecting intervals] \label{intersecting-intervals}
Let $\II$ be a set of intervals with load vector $v$ that are all using time $t$ (i.e.,  $t\in I$ for all $I\in \II$). If $v_t>\alpha$, then there exists an interval $I\in \II$ such that $v_{t'}>\alpha/2$ for all $t'\in I$. 
\end{lemma}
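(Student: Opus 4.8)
The plan is to exploit the fact that all intervals of $\II$ share the common time $t$ to force a unimodality structure on the load vector, and then argue by contradiction.

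First I would record the structural fact. Since every $I=[s_I,e_I)\in\II$ satisfies $s_I\le t<e_I$, at any time $t'\le t$ the interval $I$ is active iff $s_I\le t'$ (the inequality $t'<e_I$ being automatic), and symmetrically at any $t'\ge t$ the interval $I$ is active iff $t'<e_I$. Hence $\sum_{I\in\II:\,t'\in I}w_I$ is nondecreasing in $t'$ on $\{t'\le t\}$ and nonincreasing on $\{t'\ge t\}$, and therefore so is $v_{t'}=\lceil\sum_{I\in\II:\,t'\in I}w_I\rceil$. Consequently the set $H:=\{t':v_{t'}>\alpha/2\}$ is a block of consecutive time steps containing $t$; write it as $[t_L,t_R)$.

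Next I would set up the dichotomy. If some $I^\star\in\II$ lies entirely inside $[t_L,t_R)$, i.e. $t_L\le s_{I^\star}$ and $e_{I^\star}\le t_R$, then $v_{t'}>\alpha/2$ for every $t'\in I^\star$ and we are done. So suppose no such interval exists. Since each $I\in\II$ contains $t\in[t_L,t_R)$, failing $I\subseteq[t_L,t_R)$ forces $s_I<t_L$ or $e_I>t_R$, i.e. $I$ is active at time $t_L-1$ or at time $t_R$ (reading these as vacuous when $t_L$ is the first, resp. $t_R$ is one past the last, time step of the horizon). Now the total size of the intervals of $\II$ active at $t_L-1$ is at most $v_{t_L-1}$, which, being an integer that is $\le\alpha/2$, is at most $\lfloor\alpha/2\rfloor$; the same holds at $t_R$. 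Since in this case $\II$ is contained in the union of these two groups, $\sum_{I\in\II}w_I\le 2\lfloor\alpha/2\rfloor$, and because $2\lfloor\alpha/2\rfloor$ is an integer not exceeding $\alpha$ we get $v_t=\lceil\sum_{I\in\II}w_I\rceil\le 2\lfloor\alpha/2\rfloor\le\alpha$, contradicting $v_t>\alpha$. Hence the contained interval $I^\star$ must exist, which proves the lemma.

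The one point needing care is exactly this last combination of bounds: because the load vector carries a ceiling, one has to pass through $\lfloor\alpha/2\rfloor$ rather than $\alpha/2$ at the boundary, so that the two ``just outside $H$'' estimates provably add up to at most $\alpha$ and not merely $\lceil\alpha\rceil$. The only other fussiness is the two edge cases where $H$ abuts an end of the horizon, handled by the parenthetical convention above; everything else is immediate from the monotonicity established in the first step.
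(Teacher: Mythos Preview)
Your proof is correct and follows essentially the same approach as the paper: both exploit the unimodality of the load vector around $t$ and split the intervals into those extending too far left and those extending too far right, showing these two groups together carry total size at most $\alpha$. The paper does this constructively by peeling off prefix and suffix sets $A_1,A_2$ (sorted by start/end times) each of total size at most $\alpha/2$, while you phrase it as a contradiction via the high-load block $H=[t_L,t_R)$; these are dual descriptions of the same argument. Your handling of the ceiling in $v_t$ (passing through $\lfloor\alpha/2\rfloor$ so the two boundary bounds sum to an integer $\le\alpha$) is in fact more careful than the paper's write-up, which tacitly treats ``total load'' as the raw sum.
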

\begin{proof}
Since all intervals in $\II$ are active at time $t$, it can be seen that their load vector is non-decreasing until time $t$, and non-increasing after time $t$. Let $I_1, I_2, \ldots, I_J \in \II$ be the intervals sorted by their starting times (which are all prior to $t$). Let $A_1=\{I_1, \ldots, I_j\} \subseteq \II$  be such that $\sum_{i=1}^{j}w_{I_i} \leq  \alpha/2$, but $\sum_{i=1}^{j+1}w_{I_i} > \alpha/2$. For each interval in $\II \setminus A_1$ the load at its starting point is strictly more than $\alpha/2$. Similarly define $A_2=\{I_k,...,I_J\}$ such that $\sum_{i=k}^{J}w_{I_i} \leq  \alpha/2$, but $\sum_{i=k-1}^{J}w_{I_i} > \alpha/2$. and let $\II \setminus A_2$ be the subset of intervals whose load in their endpoint is strictly more than $\alpha/2$. As the total load in $\II$ is strictly more than $\alpha$, and the load of intervals in $A_1\cup A_2$ is at most $\alpha$, there must be an interval $I \in \II\setminus (A_1 \cup A_2)$. The load that an interval  $I \in \II\setminus (A_1 \cup A_2)$ observes is strictly more than $\alpha/2$ at both its start and end times, and hence it is strictly more than $\alpha/2$ at any $t'\in I$. 
\end{proof}

In Appendix \ref{app:staticAlgs} we discuss several well known static bin packing algorithms (and related definitions). We prove useful properties that are later used by our dynamic bin packing algorithms.

\section{Offline Scheduling Algorithms}\label{sec:offline}

In this section we design two offline algorithms proving Theorem \ref{thm:main-offline}. We start with the Covering Algorithm, which iteratively finds a set of requests that cover the time horizon and can be scheduled using one or two machines. We then present our Density-based Algorithm that finds dense subsets of requests of roughly the same duration.

\subsection{The Covering Algorithm}\label{sec:covering}

In this section we present Algorithm \ref{CoveringAlgorithm} whose performance is given by the following theorem.

\begin{thm} \label{thm:CoveringAlgorithm1}
The total cost of Algorithm \ref{CoveringAlgorithm} is at most $2 \cdot \|v\|_1$, for the uniform size case, $4 \cdot \|v\|_1$ in the non-uniform case.
If $\beta\leq \frac{1}{4}$ the total cost is at most $\sum_{t}\lceil\frac{2v_t}{1-2\beta}\rceil$.
\end{thm}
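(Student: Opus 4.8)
The plan is to design Algorithm~\ref{CoveringAlgorithm} as an iterative ``covering'' procedure: in each round, scan the currently unscheduled intervals in a careful order, and greedily accumulate a subset $S$ whose induced load vector is at least some threshold (around $1$) at every time $t$ where anything is still unscheduled, so that $S$ ``covers'' the relevant part of the time horizon. I then want to argue that any such $S$ can be packed on at most two machines in the non-uniform case (or one machine in the uniform case), and that the total load removed by $S$ in each step is large enough to charge the two machines against the load contained in $S$. Iterating until all intervals are scheduled and summing the per-round bounds would yield the claimed $2\|v\|_1$ / $4\|v\|_1$ estimates, and a more refined per-time-step accounting would give the $\sum_t \lceil \frac{2v_t}{1-2\beta}\rceil$ bound when $\beta \le \frac14$.

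Concretely, the key steps I would carry out are as follows. First, I would show how to construct one cover $S$: process the active time steps left to right, maintaining the invariant that the part of the horizon already passed is covered to load level $\approx 1$; whenever the running load at the current frontier drops below the threshold, grab an unscheduled interval that is active there (such an interval exists because $v_t \ge 1$ whenever something is unscheduled at $t$), preferring one that extends furthest to the right. Second, I would prove the packing claim: the chosen $S$ has the property that at every time $t$ its total size lies in roughly $(1-\beta, 2]$ (the lower bound because we only add intervals when below threshold, the upper because each addition increases the load by at most $\beta$, and in the $\beta \le \frac14$ regime we can keep it below $1$ on a single machine if we aim for threshold $\frac{1-2\beta}{2}$ type quantities), and then invoke a first-fit / two-machine argument — this is essentially the ``$2$ machines suffice when the load is at most $2$'' fact, and Lemma~\ref{intersecting-intervals} is the natural tool to certify that the greedily chosen interval sees high load throughout its life. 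Third, the charging: in round $i$ the load vector $v^{(i)}$ of the still-unscheduled intervals satisfies $v^{(i)}_t \le v_t$, the cover $S_i$ removes at least (threshold)$\cdot \|v^{(i)}\|_0^{\mathrm{restricted}}$ worth of load while costing $\le 2$ machines per active step, and since successive rounds peel off disjoint interval sets, summing $\sum_i (\text{cost of } S_i) \le \sum_i 2\cdot(\text{load in } S_i) / (\text{threshold}) \le \frac{2}{\text{threshold}}\|v\|_1$ gives the result after plugging in the appropriate threshold values ($1$ for the generic case, $\frac{1-2\beta}{2}$ for the small-$\beta$ case, with the ceilings accounting for integrality of the machine count at each step).

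The main obstacle I anticipate is the simultaneous-covers bookkeeping: making sure that the greedy ``prefer the interval reaching furthest right'' rule really does let each cover be realized on only one or two machines without leaving gaps, and that the loads removed across rounds genuinely partition (rather than double-count) the total load $\|v\|_1$. In particular, I expect the delicate point to be bounding the per-step machine count: an argument is needed that the intervals placed into a single cover, restricted to any time $t$, never have total size exceeding what two machines (or the appropriate $\lceil \cdot \rceil$ quantity) can hold — this is where the ``we only add an interval when the current load is below threshold, and each interval has size $\le \beta$'' observation, combined with Lemma~\ref{intersecting-intervals} to lower-bound the load every such interval sees, has to be used carefully to get clean constants and the right dependence on $\beta$. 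Once those structural facts about a single cover are nailed down, the global sum is a routine telescoping over rounds.
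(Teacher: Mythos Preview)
Your high-level plan --- iteratively peel off a ``cover'' $S$, schedule it on $O(1)$ machines via First-Fit, and charge those machines against the load $S$ removes --- is exactly the paper's structure, and your per-round telescoping is the right accounting. The substantive divergence is in how a single cover is built. You go \emph{bottom-up}: start with $S=\emptyset$, sweep left to right, and add intervals whenever the running load dips below threshold. The paper does the opposite (Lemma~\ref{lem:cover1}): start with $S=\II$ and, while some time $t$ has load above the target upper bound, invoke Lemma~\ref{intersecting-intervals} to find an interval that sees load $>\alpha/2$ throughout its life and \emph{remove} it. In the top-down version the upper bound on the cover is immediate (you remove until it holds) and the lemma guarantees the lower bound survives each removal; this directly yields a $[1,2]$-cover in the uniform case and a $[\tfrac12-\beta,1]$-cover when $\beta<\tfrac12$.

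In your bottom-up scheme the lower bound is the easy direction, but the upper bound argument you sketch has a genuine gap. Saying ``each addition increases the load by at most $\beta$'' only controls the cover load \emph{at the current processing time}; it does not rule out intervals added at later processing times $t''>t$ (because the load at $t''$ dipped below threshold) whose start times satisfy $s_I\le t$ and which therefore pile up at $t$. Several such backward-reaching additions can push the cover load at $t$ well beyond $\theta+\beta$, and your proposed use of Lemma~\ref{intersecting-intervals} does not close this: the lemma would hand you an interval in the \emph{finished} cover that sees high load throughout, but that says nothing about the partial cover at the moment that interval was inserted, so no contradiction with ``we only add when below threshold'' follows. The ``prefer furthest right'' rule mitigates the effect but does not by itself yield the clean $[\ell,u]$ bounds without a separate structural argument you have not supplied. (A smaller point: you have the machine counts swapped --- the paper packs the $[1,2]$-cover on \emph{two} machines in the uniform case and the $[\tfrac12-\beta,1]$-cover on \emph{one} machine when $\beta\le\tfrac14$, then handles items of size $>\tfrac14$ on separate machines to obtain the $4\|v\|_1$ bound.)
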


The main tool is the following idea of covers that are subsets of intervals that can be easily scheduled together. The proofs appear in Appendix \ref{app:covering}.

\begin{defn}\label{dfn-cover}
Given a set of intervals $\II$ with load vector $v$, a subset of intervals $C \subseteq\II$ is an $[\ell, u]$-cover if its load vector $v'$ satisfies that for any time $t$, $v'_t\in[\min\{v_t,\ell\},u]$.
\end{defn}

\begin{lemma}\label{lem:cover1}
Let $\II$ be a set of intervals. Then, it is possible to efficiently find
\begin{itemize}
    \item A $[1, 2]$-cover for the uniform size case.
    \item A $[\frac{1}{2}-\beta, 1]$-cover for the non-uniform size case when $\beta<\frac{1}{2}$.
\end{itemize}
\end{lemma}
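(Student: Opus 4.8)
Let me think about this carefully. We need to find a subset $C \subseteq \II$ that is an $[\ell,u]$-cover: at every time $t$, the load of $C$ is at least $\min\{v_t, \ell\}$ and at most $u$.

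The key insight: think of covers as a way to "peel off" a layer of intervals that can fit on $\lceil u \rceil$ machines (actually 1 or 2 machines given the thresholds) while covering the time horizon with enough load.

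**Uniform case ($[1,2]$-cover):** Here each interval has size $1/g$. We want to select intervals so that at every time $t$ with $v_t \geq 1$ (which is all $t$ since there's always an active interval), the selected load is between $1$ and $2$. Since loads come in multiples of $1/g$, "load $\geq 1$" means at least $g$ intervals; "load $\leq 2$" means at most $2g$ intervals.

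**Approach:** A greedy sweep. Process time left to right. Maintain a set of "currently selected active intervals." At the first time step, we need load $\geq 1$; if $v_1 \geq 1$ there are at least $g$ active intervals — pick $g$ of them (prefer ones extending furthest right, a "longest-extension" or "farthest-endpoint-first" rule). As time progresses, when some selected interval ends, the selected load may drop below $1$. At that moment, if $v_t \geq 1$, there must be enough unselected active intervals to bring us back up — add intervals (again preferring far endpoints) until selected load is back to... somewhere in $[1,2]$.

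**The main obstacle** is showing the upper bound $u$ is never violated during these "refill" operations, i.e., that we never need to add so many intervals at once that we overshoot $2$ (or $1$ in the non-uniform case). This is where the structure must be exploited: when we refill at time $t$, the deficit is small — we only dropped below $\ell$ because *selected* intervals expired, and we had carefully kept the selected load in a controlled band. In the uniform case: just before a drop we had load $\geq 1$; one interval expiring drops us by $1/g$, so we're at $\geq 1 - 1/g$; we add one interval at a time until we're at $\geq 1$, landing in $[1, 1+1/g] \subseteq [1,2]$. But multiple intervals can expire simultaneously at the same integer time step — then the deficit could be larger. Here the "farthest-endpoint-first" selection rule pays off: if $k$ selected intervals all end at time $t$, and $v_t$ (the load just before, or rather the load of *all* intervals at $t^-$) was... hmm, we need that enough replacements exist. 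The cleanest argument: maintain the invariant that at all times the selected set consists of the $\min\{g \cdot \lceil \text{something} \rceil, N_t\}$ intervals with farthest endpoints — no, simpler: maintain that selected load $\in [1,2)$ always (in uniform case, a range of $g$ to $2g-1$ intervals), and show this invariant is maintainable. When intervals expire, if load drops below $g$, we need $v_t \geq g$ which holds; and we pull in the farthest-reaching unselected intervals. The subtle point requiring care: we must never be forced to *remove* a selected interval to stay $\leq 2$; since we only add when below $\ell$, and adding one at a time stops as soon as we reach $\ell$, we stay below $\ell + 1/g \cdot (\text{jump})$... I'd pin down that single-step additions keep us $< 2$.

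**Non-uniform case ($[\frac12 - \beta, 1]$-cover, $\beta < \frac12$):** Same greedy sweep, but now sizes are arbitrary in $(0, \beta]$. Target band: selected load in $[\frac12 - \beta, \frac12]$ (leaving headroom up to $1$). The refill rule: when selected load drops below $\frac12 - \beta$ at time $t$, add unselected active intervals (farthest-endpoint-first) one at a time; since each has size $\leq \beta$, we stop once selected load first reaches $\geq \frac12 - \beta$, landing in $[\frac12 - \beta, \frac12 - \beta + \beta] = [\frac12 - \beta, \frac12] \subseteq [\frac12 - \beta, 1]$, so the upper bound $1$ is safe. Need: whenever selected load $< \frac12 - \beta$ but we need $\min\{v_t, \frac12 - \beta\}$ — if $v_t \geq \frac12 - \beta$ we need the selected load up to $\frac12-\beta$; total available load at $t$ is $v_t \geq \frac12 - \beta$... wait, we need *unselected* load to cover the gap. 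Since selected load is $< \frac12 - \beta \leq \frac12$ and (I'll argue) the load of expired-but-was-selected intervals plus unselected load suffices — actually the clean statement is: unselected active load at $t$ $\geq v_t - (\text{selected load}) \geq$ (if $v_t \geq \frac12-\beta$, not automatically enough). Hmm — the real point is that we just dropped below $\frac12 - \beta$ from above, by at most $\beta$ (one interval's worth) or by a batch; in the batch case the farthest-endpoint-first rule guarantees the replacements exist because the intervals that expired were the farthest-reaching, so among currently-active intervals the ones we'd select next reach at least as far — and there's enough total load. I'd formalize via: *define* the cover by the rule "at each $t$, selected set = greedily chosen active intervals by farthest endpoint until load $\geq \min\{v_t, \frac12 - \beta\}$ or all active intervals used, added in a way consistent over time (never drop an interval while it's active unless forced by the upper bound, which won't happen)" and then verify both bounds hold by the one-at-a-time stopping argument.

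**Order of steps:**
\begin{enumerate}
\item State the greedy "farthest-endpoint-first" selection rule precisely, as a left-to-right sweep, with the invariant that a selected interval is never dropped while active.
\item Prove the lower bound: at every $t$, whenever the target $\min\{v_t,\ell\}$ could fail, there are enough active intervals/load (from $v_t \geq \ell$ via Observation/the fact that loads are integral multiples in uniform case, or just $v_t \geq \ell$ else) to refill — hence after refill, selected load $\geq \min\{v_t, \ell\}$.
\item Prove the upper bound: refilling adds intervals one at a time and stops at first success, so selected load exceeds $\ell$ by less than one interval's size ($1/g$ resp. $\beta$), giving $\leq \ell + 1/g \leq 2$ resp. $\leq \ell + \beta = \frac12 \leq 1$. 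Also check the upper bound at $t$ where we *don't* refill (load only decreases when intervals expire, so it stays $\leq$ previous value $\leq u$).
\item Note efficiency: the sweep touches each interval $O(1)$ times at its start/end, so it's polynomial.
\end{enumerate}

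**Main difficulty:** handling simultaneous expirations in the refill step — ensuring the deficit created by a batch of expiring selected intervals can always be covered by available unselected intervals without overshooting $u$. The resolution is the farthest-endpoint-first rule together with adding replacements one at a time, which decouples "how much we dropped" from "how much we overshoot": we always land within one interval-size above $\ell$, regardless of how large the drop was. I would double-check the base case (first time step) and the case $v_t < \ell$ (then we just take all active intervals, whose load is $v_t = \min\{v_t,\ell\}$, trivially $\leq u$) separately.
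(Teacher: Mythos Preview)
Your bottom-up sweep is a genuinely different route from the paper's proof, and it has a real gap in the upper-bound step. When you add an interval $I$ to $C$ during a refill at sweep time $t$, you correctly argue that the load of $C$ \emph{at time $t$} lands in $[\ell,\ell+\beta]$. But $I$ is active on all of $[s_I,t)$ as well, so adding $I$ retroactively increases the load of the final cover at every earlier time $\tau\in[s_I,t)$. Your step~3 only tracks the ``selected load'' at the current sweep time (``load only decreases when intervals expire''), which is not the quantity you must bound: the load of the final $C$ at a fixed $\tau$ can keep growing as later refills pull in intervals that happen to span $\tau$. Concretely, with $g=4$ take $A_1,A_2=[0,3)$, $A_3,A_4=[0,5)$, $B_1,B_2=[1,10)$, $C_1,C_2=[1,7)$; refills at times $3$ and $5$ each retroactively add $\tfrac12$ to the load at time $1$, bringing it to exactly $2$. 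Your sketch gives no argument that a further refill cannot push it above $2$. The farthest-endpoint rule does seem to prevent this, but proving it requires a separate structural lemma (roughly: any interval still unselected after a refill has endpoint no later than those just selected, so it expires before the next refill they could trigger) that you have not identified, let alone proved.

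The paper's proof goes the opposite direction and sidesteps the retroactive issue entirely: start with $C=\II$ (so the lower bound holds trivially) and repeatedly \emph{remove} intervals while some $v'_t>u$. At such a $t$, the Intersecting Intervals Lemma (Lemma~\ref{intersecting-intervals}) applied to the intervals of $C$ active at $t$ produces an interval $I$ that sees load strictly above $u/2$ throughout its lifetime; removing it keeps the load above $u/2-\beta$ everywhere on $I$ (hence $\geq\tfrac12-\beta$ in the non-uniform case, and $\geq 1$ in the uniform case since loads are multiples of $1/g$), and leaves all times outside $I$ untouched. Because removal only affects times where the lower bound is re-certified in the same step, there is no cross-time interaction to control.
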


Given Lemma \ref{lem:cover1} the algorithm is simple.

\begin{algorithm}
\SetAlgoLined
\DontPrintSemicolon
            {\bf In the non-uniform size case:} Schedule each interval with size greater than $\frac{1}{4}$ on a separate machine and remove it from $\II$.\;
            \While {$\II \neq \emptyset$} {
            Find a cover $C\subset \II$ as guaranteed by Lemma \ref{lem:cover1}. \;
            Schedule the intervals in $C$ using Algorithm \ref{FirstFitAlg} (First-Fit), and remove the intervals from $\II$.\;
            }
    \caption{Covering Algorithm} \label{CoveringAlgorithm}
\end{algorithm}

\subsection{Density-based Offline Algorithm}\label{sec:density-offline}

In this section we design our second algorithm whose cost is at most $c \cdot \|v\|_1 + O\left(\sum_{t=1}^{T}\sqrt{v_t\log \mu}\right)$ where $c=1$ in the uniform size case and $c=\min\left\{2,\frac{1}{1-\beta}\right\}$ in the non-uniform size case. We abuse here the notation of $v_t$ and define it as $\sum_{I \in \II(t)}w_I$ and not $\left\lceil \sum_{I \in \II(t)}w_I \right\rceil$. We prove the theorem with respect to these smaller values of $v_t$ (making the result only stronger). The algorithm is based on the following lemma that shows it is possible to find very dense packing whenever the load is large. The proofs appear in Appendix \ref{app:density-offline}.

\begin{lemma}\label{lem:density}
Let $\II$ be a set of intervals, and let $t$ be a time at which $v_t \geq 2 + 4 \ln \mu$. Then, it is possible to find efficiently a set $C\subseteq \II(t)$ such that  $\frac{1}{c} \leq \sum_{I\in C}w_I\ \leq 1$, and a length $\ell$ such that:
\begin{enumerate}
    \item The length of each interval $I \in C$ is at least $\ell$.
    \item All intervals in $C$ can be scheduled on a single machine of length at most $\ell\left(1+ 2\sqrt{\frac{2 + 4\ln\mu}{v_t}}\right)$.
\end{enumerate} 
\end{lemma}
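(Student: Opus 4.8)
The plan is to partition the intervals of $\II(t)$ into a modest number of \emph{classes}, each of which fits on a single machine that is active for only a short window, and then pull out a heavy class by averaging. First note that every interval of $\II(t)$ contains $t$, so a subset $C\subseteq\II(t)$ fits on one machine exactly when $\sum_{I\in C}w_I\le 1$, and such a machine is active precisely over the window $[\min_{I\in C}s_I,\max_{I\in C}e_I)$. Hence it suffices to find a class of total weight in $[\tfrac1c,1]$ all of whose intervals have length at least some $\ell$ and whose window has length at most $\ell\bigl(1+2\sqrt{(2+4\ln\mu)/v_t}\bigr)$. Set $\delta:=\sqrt{(2+4\ln\mu)/v_t}$, so $\delta\le 1$ by the hypothesis $v_t\ge 2+4\ln\mu$. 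I would classify each $I\in\II(t)$ by: (i) a geometric length class, $\ell_I\in[(1+\delta)^{j-1},(1+\delta)^{j})$; and (ii) within a length class with lower endpoint $\ell=(1+\delta)^{j-1}$, a linear left-offset class given by which window $[p,p+\delta\ell)$ contains $t-s_I$ (since $0\le t-s_I<\ell_I<\ell(1+\delta)$, boundedly many windows suffice).

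Fix one class. All its intervals have length in $[\ell,\ell(1+\delta))$ and $t-s_I\in[p,p+\delta\ell)$. Then $\min_I s_I=t-\max_I(t-s_I)>t-(p+\delta\ell)$. Let $I^\star$ maximize $e_I-t=\ell_I-(t-s_I)$; since $t-s_{I^\star}\ge p$ and $\ell_{I^\star}<\ell(1+\delta)$, we get $\max_I e_I=e_{I^\star}<t+\ell(1+\delta)-p$. Therefore the window length $\max_I e_I-\min_I s_I$ is strictly below $(p+\delta\ell)+(\ell(1+\delta)-p)=\ell(1+2\delta)=\ell\bigl(1+2\sqrt{(2+4\ln\mu)/v_t}\bigr)$, and each interval of the class has length at least $\ell$. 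So whatever class we ultimately select automatically satisfies both listed properties, with the lemma's ``$\ell$'' taken to be that class's lower length endpoint.

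It remains to find a class of total weight at least $1$ and then trim it. The number of length classes is $O(\log_{1+\delta}\mu)=O(\ln\mu/\delta)$, and each contains $O(1/\delta)$ offset classes, so there are $O(\ln\mu/\delta^2)=O\bigl(\ln\mu\cdot v_t/(2+4\ln\mu)\bigr)$ classes in total; a direct rounding calculation confirms that with $\delta=\sqrt{(2+4\ln\mu)/v_t}$ this count is at most $v_t$. Since $\sum_{I\in\II(t)}w_I=v_t$, averaging yields a class of total weight at least $1$. To trim this class to weight in $[\tfrac1c,1]$: in the uniform case keep exactly $\uni$ of its intervals (total weight $1$); in the non-uniform case, if it contains an interval of weight greater than $\tfrac12$ take that single interval (weight in $(\tfrac12,1]$), and otherwise greedily add intervals of the class while the running total stays $\le 1$, stopping just before overflow — at which point the total exceeds $1-\min\{\beta,\tfrac12\}=\tfrac1c$. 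The resulting $C$ still lies inside one class, so it inherits both properties; and since there are at most $v_t\le|\II|$ classes, the procedure runs in polynomial time.

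The step I expect to be the main obstacle is the counting in the third paragraph: the window bound $\ell(1+2\delta)$ forces $\delta$ to be exactly $\sqrt{(2+4\ln\mu)/v_t}$, and one then has to verify that the number of (length $\times$ offset) classes — ceilings from both the geometric and the linear partitions included — never exceeds $v_t$, which is exactly where the constants $2$ and $4$ in the assumption $v_t\ge 2+4\ln\mu$ get consumed. The window computation and the trimming are routine.
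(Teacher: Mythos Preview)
Your proposal is correct and follows essentially the same approach as the paper: set $\delta=\epsilon=\sqrt{(2+4\ln\mu)/v_t}$, partition $\II(t)$ by geometric length classes and then by linear start-time (equivalently, left-offset) sub-classes of width $\epsilon\ell$, bound the total number of cells by $v_t$ via the product $(1+\tfrac{1}{\epsilon})(1+\tfrac{\ln\mu}{\ln(1+\epsilon)})\le\frac{2+4\ln\mu}{\epsilon^2}$, average to find a cell of weight at least $1$, and trim. The paper carries out explicitly the ``direct rounding calculation'' you flag as the main obstacle (using $\ln(1+\epsilon)\ge\epsilon/2$ for $\epsilon\le1$), and its trimming is phrased more tersely but is equivalent to yours.
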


Using Lemma \ref{lem:density} we design Algorithm \ref{DensityAlgorithm}.

\begin{algorithm}[h]
\SetAlgoLined
\DontPrintSemicolon
Let $\II$ be our current set of intervals. \;
\While { $\|v\|_{\infty}$ of the current set $\II$ is at least $2 + 4 \ln \mu$}{
     Apply Lemma \ref{lem:density} on $t_{\max} = \arg \max_t v_t$ to find subset of intervals $C \subseteq \II(t_{\max})$.  \;
     Schedule the intervals in $C$ on a single machine, and remove $C$ from $\II$.\;
  }
  Schedule the remaining intervals using Algorithm \ref{CoveringAlgorithm}.\;
\caption{Density Offline Algorithm} \label{DensityAlgorithm}
\end{algorithm}

\begin{thm}\label{thm:denseAlg}
The total cost of Algorithm \ref{DensityAlgorithm} is at most 
$$c \cdot \|v\|_1 + O\left(\sum_{t=1}^{T}\sqrt{v_t\log \mu}\right) \leq c \cdot \OPT + T\cdot O\left(\sqrt{\OA\log \mu}\right) $$
where $c=1$ in the uniform size case and $c=\min\left\{2,\frac{1}{1-\beta}\right\}$ in the non-uniform size case. 
\end{thm}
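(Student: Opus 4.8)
\textbf{Proof plan for Theorem \ref{thm:denseAlg}.}

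The plan is to bound separately the cost of the intervals scheduled inside the while-loop and the cost of the residual intervals handled by the Covering Algorithm, and then to show that the sum of the two matches the claimed bound. First I would set up the accounting: each iteration of the while-loop picks a time $t_{\max}$ of current maximal load, invokes Lemma \ref{lem:density} to extract a dense subset $C$ with $\sum_{I\in C}w_I\geq 1/c$, and places $C$ on a single machine whose active time is at most $\ell_C\bigl(1+2\sqrt{(2+4\ln\mu)/v_{t_{\max}}}\bigr)$, where $\ell_C$ is the guaranteed common lower bound on the lengths of intervals in $C$. I would charge this machine's cost to the ``mass'' $\sum_{I\in C}\ell_I w_I$ that $C$ removes from $\|v\|_1$: since every interval in $C$ has length at least $\ell_C$ and total size at least $1/c$, we have $\sum_{I\in C}\ell_I w_I \geq \ell_C/c$, so the machine cost is at most $c\cdot\bigl(\sum_{I\in C}\ell_I w_I\bigr)\cdot\bigl(1+2\sqrt{(2+4\ln\mu)/v_{t_{\max}}}\bigr)$. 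Summing over all iterations, the leading term telescopes into $c\cdot\|v\|_1$ (more precisely, $c$ times the total mass removed, which is at most $\|v\|_1$), and it remains to control the error term $\sum_{\text{iterations}} 2c\sqrt{2+4\ln\mu}\cdot \bigl(\sum_{I\in C}\ell_I w_I\bigr)/\sqrt{v_{t_{\max}}}$.

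The main obstacle is bounding this error term by $O\bigl(\sum_t \sqrt{v_t\log\mu}\bigr)$. The difficulty is that $v_{t_{\max}}$ in the denominator refers to the load of the \emph{current} residual instance at the moment of the iteration, not the original load, and many iterations may occur. The key observation I would use is that the mass $\sum_{I\in C}\ell_I w_I$ removed by a single iteration is at most $\ell_C\cdot v_{t_{\max}}$ is false in general, but rather that the \emph{intersection of $C$ with any fixed time $t$} contributes size at most $\sum_{I\in C, t\in I} w_I \le v_t^{\text{(current)}} \le v_t$; hence I would rewrite $\sum_{I\in C}\ell_I w_I = \sum_t \sum_{I\in C, t\in I} w_I$ and distribute the $1/\sqrt{v_{t_{\max}}}$ factor over the time steps. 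Since $v_{t_{\max}}$ is the maximum current load, $v_{t_{\max}}\ge v_t^{\text{(current)}}$ for every $t$ touched by $C$, so each time step $t$ receives, per iteration, a charge of at most $2c\sqrt{2+4\ln\mu}\cdot \bigl(\text{size of } C \text{ at } t\bigr)/\sqrt{v_{t_{\max}}} \le 2c\sqrt{2+4\ln\mu}\cdot \sqrt{(\text{size of }C\text{ at }t)}$ after using that $(\text{size of }C\text{ at }t)\le v_{t_{\max}}$. Summing the sizes of successive covers at a fixed $t$ over all iterations telescopes to at most $v_t$, and by concavity of $\sqrt{\cdot}$ (or a direct dyadic argument that each iteration at least, say, does not shrink things too slowly) the sum of square roots is $O(\sqrt{v_t})$ — here I would need to be slightly careful and may instead argue that the residual load at $t$ drops geometrically or telescope $\sum \sqrt{a_i}$ against $\sqrt{\sum a_i}$ when the $a_i$ sum to $v_t$; the cleanest route is: among iterations touching $t$, the current load at $t$ takes values that sum to at most (number of iterations)$\times v_t$, but the per-iteration removed size summed over iterations is at most $v_t$, so $\sum_i \sqrt{(\text{removed at }t)_i \cdot v_{t_{\max},i}} \le \sqrt{v_t}\cdot\sqrt{\sum_i (\text{removed at }t)_i} \le v_t$, giving the bound $\sum_t O(\sqrt{v_t\log\mu})$ after reinserting the $\sqrt{2+4\ln\mu}$ factor. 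This Cauchy–Schwarz step over iterations, together with $v_{t_{\max},i}\le v_t$ being wrong and needing replacement by a telescoping of residual loads, is exactly the technical heart of the argument.

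Finally I would handle the residual instance: when the while-loop terminates, the residual load vector $v^{\text{res}}$ satisfies $\|v^{\text{res}}\|_\infty < 2+4\ln\mu = O(\log\mu)$ pointwise. Applying Theorem \ref{thm:CoveringAlgorithm1} to this residual instance gives cost at most $c'\cdot \|v^{\text{res}}\|_1 = \sum_t c'\, v^{\text{res}}_t$ with $c'\in\{2,4\}$ (or the better $\sum_t\lceil 2v^{\text{res}}_t/(1-2\beta)\rceil$ when $\beta\le 1/4$); since $v^{\text{res}}_t\le v_t$ and also $v^{\text{res}}_t = O(\log\mu)$, we get $c'\, v^{\text{res}}_t \le O(\sqrt{v_t\log\mu})$ whenever $v_t\ge \log\mu$, and $v^{\text{res}}_t\le v_t$ handles the small-$v_t$ regime, so in all cases the residual cost is absorbed into $O\bigl(\sum_t\sqrt{v_t\log\mu}\bigr)$. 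Wait — I should double check the constant in front of $\|v\|_1$: the loop cost contributes $c\cdot(\text{mass removed in loop})$ and the residual contributes $c'\cdot\|v^{\text{res}}\|_1$ with $c'$ possibly larger than $c$; but since $v^{\text{res}}_t\le v_t$ and $v^{\text{res}}$ has bounded sup-norm, the residual term is an additive $O(\log\mu)$ per active time step, which is again $O(\sqrt{v_t\log\mu})$ for $v_t\ge 1$, so the leading constant stays $c$. Combining the loop bound $c\cdot\|v\|_1 + O(\sum_t\sqrt{v_t\log\mu})$ with the residual bound $O(\sum_t\sqrt{v_t\log\mu})$ yields the first inequality of the theorem; the second follows by Cauchy–Schwarz, $\sum_t\sqrt{v_t\log\mu}\le \sqrt{\log\mu}\cdot\sqrt{T\cdot\sum_t v_t} = T\sqrt{\OA\log\mu}$ after dividing by $T$ and recalling $\OA=\|v\|_1/T$ and $\OPT\ge\|v\|_1$ (Observation \ref{CLB}).
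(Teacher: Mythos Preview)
Your overall structure is right --- charge each loop iteration's machine length to the mass it removes, then absorb the residual via Theorem \ref{thm:CoveringAlgorithm1} and the stopping condition $\|v^{\text{res}}\|_\infty = O(\log\mu)$ --- and your handling of the residual and of the final Jensen/Cauchy--Schwarz step is fine. The gap is in the middle paragraph, where you try to control the accumulated error term, and you seem aware of it (``here I would need to be slightly careful'', ``$v_{t_{\max},i}\le v_t$ being wrong''). None of the three routes you sketch actually works: bounding $(\text{size of }C\text{ at }t)/\sqrt{v_{t_{\max}}}$ by $\sqrt{\text{size of }C\text{ at }t}$ leaves you with $\sum_r\sqrt{a_r}$ over iterations with $\sum_r a_r\le v_t$, and concavity of $\sqrt{\cdot}$ goes the wrong way here (many small increments make $\sum\sqrt{a_r}$ large, not small); the Cauchy--Schwarz line you wrote is bounding the wrong quantity; and there is no geometric decay of the residual load at a fixed $t$, since a single iteration removes at most total size $1$ at any point.

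The missing idea is simple once seen. At a fixed time $t$, write $\Delta v^r_t$ for the load removed from $t$ in iteration $r$ and $v^r_t$ for the residual load at $t$ just before that iteration. Since $t_{\max}$ is chosen as the maximizer, $\|v^r\|_\infty\ge v^r_t$, so you may replace the denominator $\sqrt{v_{t_{\max}}}=\sqrt{\|v^r\|_\infty}$ by the smaller $\sqrt{v^r_t}$. The per-$t$ error then becomes $\sum_r \Delta v^r_t/\sqrt{v^r_t}$, which is a decreasing-function Riemann sum: since $v^r_t$ starts at $v_t$ and each step decreases it by $\Delta v^r_t\le 1$, one has $\sum_r \Delta v^r_t/\sqrt{v^r_t}\le \int_0^{v_t} dv/\sqrt{v}+O(1)=2\sqrt{v_t}+O(1)$. (The paper also caps the factor $2\sqrt{D/v^r_t}$ at $2$ using the loop invariant $\|v^r\|_\infty\ge D$, which handles the iterations where $v^r_t<D$ and contributes the additive $O(D)=O(\log\mu)$ per time step.) Plugging this telescoping/integral bound back gives exactly $c\|v\|_1+O(\sum_t\sqrt{v_t\log\mu})$, and the rest of your plan goes through.
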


\begin{proof}
Consider an iteration $r$ of the loop of Algorithm \ref{DensityAlgorithm}. Let $\II^r$ be the current set of intervals with corresponding load values $v^r_t$, and let $C^r$ and $\ell^r$ be the subset of intervals and the length promised by Lemma \ref{lem:density}. By Lemma \ref{lem:density}, the total cost paid by the algorithm in this iteration is at most $\ell^r\left(1+ 2\sqrt{\frac{D}{\|v^r\|_{\infty}}}\right)$, where $D = 2 + 4 \ln \mu$. Let $\Delta v^r_t$ be the decrease in $v^r_t$ after removing the intervals in the subset $C^r$ from $\II^r$. Since the sum of sizes of intervals in $C^r$ is at least $1/c$, and the length of each interval is at least $\ell^r$, we get that $\sum_{t=1}^{T}\Delta v^r_t \geq \frac{1}{c} \cdot\ell^r$. Let $R$ be the total number of iterations in the loop. The total cost over all iterations is at most,
\allowdisplaybreaks
\begin{align}
\sum_{r=1}^R\ell^r\left(1+ 2\sqrt{\frac{D}{\|v^r\|_{\infty}}}\right) 
& \leq \sum_{r=1}^R c \sum_{t=1}^{T}\Delta v^r_t \cdot \left(1+ 2\sqrt{\frac{D}{\|v^r\|_{\infty}}}\right) \label{ine-dense1}\\
& \leq c \sum_{r=1}^R \sum_{t=1}^{T}\Delta v^r_t \cdot \min\left\{\left(1+ 2\sqrt{\frac{D}{v^r_t}}\right), 3 \right\}\label{ine-dense2}\\
& = c\sum_{r=1}^R \sum_{t=1}^{T}\Delta v^r_t \cdot\left( 1+ \min\left\{ 2\sqrt{\frac{D}{v^r_t}}, 2 \right\} \right).\nonumber
\end{align}
Inequality \eqref{ine-dense1} follows since $\sum_{t=1}^{T}\Delta v^r_t \geq \frac{1}{c} \cdot \ell^r$. Inequality \eqref{ine-dense2} follows since $\|v^r\|_{\infty} \geq v_t^r$, and since $\|v^r\|_{\infty} \geq D$ inside the loop.

Next, for each time $t$, we may analyze the summation $\sum_{r=1}^R \Delta v^r_t \cdot\left( 1+ \min\left\{ 2\sqrt{\frac{D}{v^r_t}}, 2 \right\} \right)$. Let $v_t= v_t^1$ be the starting value in the original instance $\II$. We get that,
\begin{align*}
    \sum_{r=1}^R \Delta v^r_t \cdot\left( 1+ \min\left\{ 2\sqrt{\frac{D}{v^r_t}}, 2 \right\} \right)
    & \leq v_t + \sum_{r=1}^R \Delta v^r_t \cdot\min\left\{ 2\sqrt{\frac{D}{v^r_t}}, 2 \right\} \\
     & \leq v_t + \sum_{r | v_t^r< D} 2 \Delta v^r_t + 
      2\sqrt{D} \sum_{r | v_t^r \geq D}\frac{\Delta v^r_t}{\sqrt{v_t^r}}\\
     & \leq v_t + 2D + 2\sqrt{D} \sum_{r | v_t^r \geq D}\frac{\Delta v^r_t}{\sqrt{v_t^r}}
\end{align*}

Finally, as $\frac{1}{\sqrt{v}}$ is a decreasing function of $v$ for $v > 0$, 
$\sum_{r | v_t^r \geq D}\frac{\Delta v^r_t}{\sqrt{v_t^r}} \leq \frac{1}{\sqrt{D}}+ 
    \int_{D}^{v_t} \frac{ dv}{\sqrt{v}}  
     = \frac{1}{\sqrt{D}} + 2\left(\sqrt{v_t} - \sqrt{D}\right)$. Plugging this, we get that the total cost of all iterations is at most $c \cdot \|v\|_1 + O\left(\sum_{t=1}^{T}\sqrt{v_t\log \mu}\right)$. 

Finally, by Theorem \ref{thm:CoveringAlgorithm1}, the total cost of Algorithm \ref{CoveringAlgorithm} is at most $4 \|v'\|_1$, where $v'$ is the final load vector (after applying all iterations). However, by the stopping rule of our algorithm we have $\|v'\|_{\infty} \leq  2 + 4 \log \mu$. Hence, the total additional cost is at most
\begin{equation*}
    4 \|v'\|_1 = 4 \sum_{t=1}^T \sqrt{v'_t} \cdot \sqrt{v'_t} \leq 4 \sum_{t=1}^T \sqrt{v'_t (2 + 4\log \mu)} \leq O(\sum_{t=1}^T \sqrt{v_t \log \mu})
\end{equation*}
Finally, using Jensen's inequality and substituting $v_{avg}\leq\OA$ we get that $\sum_{t=1}^T\sqrt{v_t\log\mu}\leq T\cdot\sqrt{\OA\log\mu}$, which concludes the proof.
\end{proof}

\subsection{Improving the Approximation For Non-Uniform Sizes}

In Appendix \ref{app:improve} we show how to draw on ideas from the analysis of the Harmonic Algorithm for static bin packing to improve the performance of algorithms for the dynamic bin packing problem. In particular, we partition the intervals into subsets based on their size, and schedule each subset separately using our algorithms. This proves the bounds in Theorem \ref{thm:main-offline}.

\section{Online Algorithm Using Lifetime and Average Load Predictions} \label{online_average}

In this section we design an algorithm having extra knowledge of the average load, which is a single value (the total load divided by the length of the time horizon). We start by presenting a transformation of certain static bin packing algorithms to the dynamic case, and then use it as a building block in the design of our Combined Algorithm. We complement this result with a lower bound when the lifetimes and average load are available to the algorithm. Finally, we discuss the effect of prediction errors, or noise, on the algorithms' guarantees.

\subsection{Transforming Static Bounded Space Algorithms to Non-Clairvoyant Dynamic Algorithms}\label{sec:online-transform}

In this section we show a general transformation of an online $k$-bounded space (static) bin packing algorithm to a non-clairvoyant online algorithm for the dynamic bin packing setting. We first define a static bin packing instance, given a dynamic bin packing instance, and prove an easy observation.
\begin{defn}\label{dy-to-stat}
Let $\II=\{I_1, \ldots, I_n\}$ be an instance of the dynamic bin packing problem where the size of interval $I_j$ is $w_{I_j}$. We define a corresponding static bin packing instance, $\II_S=\{i_1, \ldots, i_n\}$, such that for $j=1,\ldots, n: w_{I_j}=w_{i_j}$. Let $\OPT(\II)$ be the cost of the optimal solution for $\II$ and $\OPTS(\II_S)$ be the number of bins in an optimal solution for $\II_S$.
\end{defn}

\begin{obs} \label{observation_ineq}
For any instance $\II$, $OPT^{S}(\II_S)\leq \OPT(\II)$.
\end{obs}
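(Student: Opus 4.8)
The plan is to convert an optimal \emph{dynamic} schedule for $\II$ into a \emph{static} packing of $\II_S$, charging the number of bins used by the static packing against the cost of the dynamic schedule. Fix an optimal schedule for $\II$ and, for each time $t\in\{1,\dots,T\}$, let $m_t$ denote the number of machines it keeps active at time $t$; by definition of the objective (total active-machine time), $\OPT(\II)=\sum_{t=1}^{T} m_t$.

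The key point is that the intervals of $\II$ admit a natural partition indexed by time: for $t=1,\dots,T$ set $S_t=\{I_j\in\II : s_{I_j}=t\}$, so that every interval lies in exactly one $S_t$. Since $S_t\subseteq\II(t)$ and the optimal dynamic schedule packs all of $\II(t)$ into $m_t$ unit-capacity machines at time $t$, restricting that packing to the sub-family $S_t$ immediately yields a feasible packing of the corresponding static items into at most $m_t$ unit bins.

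Now assemble a static packing of $\II_S$: for each $t$ introduce a fresh batch of at most $m_t$ bins and place the items of $S_t$ into them according to the packing obtained in the previous step. Because the $S_t$ partition $\II$, each static item is placed exactly once, and the total number of bins used is at most $\sum_{t=1}^{T} m_t=\OPT(\II)$. Therefore $\OPTS(\II_S)\le\OPT(\II)$.

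There is essentially no hard step here; the only thing to be careful about is the bookkeeping—assigning each interval to a single representative time step (its start time) so that the families used are pairwise disjoint while each is still feasibly packable within the number of machines the dynamic optimum pays for at that moment. Any other choice of a single active time step (e.g.\ the end time of each interval) would work equally well.
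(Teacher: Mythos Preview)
Your proof is correct and uses essentially the same idea as the paper: partition the intervals by a representative time and charge the bins needed for each part against the number of machines the dynamic optimum keeps open at that time. The paper's version first shrinks all intervals to unit length and partitions per machine by unit arrival windows (which also covers a continuous-time cost $\int m(t)\,dt$), whereas you partition globally by start time and rely on the discrete-time identity $\OPT(\II)=\sum_{t} m_t$; under the paper's discrete conventions the two arguments coincide.
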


\begin{proof}
Consider the instance $\II$. Obviously, shrinking all intervals to unit length can only decrease $\OPT$ without affecting $\OPTS(\II_S)$. Hence, we can assume that all intervals are of unit length. Next, consider machine $M$ active in the range $[s,t)$ in the optimal solution of $\II$. Let $I^j(M)$ be the set of intervals that have arrived in the range $[s+j-1,s+j)$ and are assigned to $M$. Notice that all items in $\II_S$ that correspond to intervals in $I^j(M)$ can be placed in a single bin in a solution of $\II_S$. Thus, all items in $\II_S$ corresponding to intervals that are assigned to machine $M$ can be assigned to $\lfloor t-s \rfloor$ bins. Hence, we can construct a feasible solution for instance $\II_S$ of cost (number of bins) no more than $\OPT(\II)$.
\end{proof}

Algorithm \ref{dynamic-alg} is given as input an online $k$-bounded static bin packing algorithm and applies it to the dynamic setting. The static bin packing instance is generated according to Definition \ref{dy-to-stat}.

\begin{algorithm}[h]
\SetAlgoLined
\DontPrintSemicolon
Let $A$ be $k$-bounded static bin packing algorithm (which maintains at any time at most $k$ active bins $b_1, b_2, \ldots, b_k$)\; 
Upon arrival of a new interval $I$: \;
\Begin{If $A$ opens a new bin, then open a new machine.\;
If $A$ accepts the item to bin $b_i$, accept the interval to machine  $m_i$.}

Upon departure of  an interval $I$: \;
\Begin{
If $I$ is not the last interval departing from its machine, do nothing.\;
If $I$ is the last interval departing from a non-active bin, close the machine.\;
If $I$ is the last interval departing from an active bin, close the machine and associate a new machine with the bin. Open the new machine if a new assignment to the respective active bin is made.\;
}
\caption{{\bf Dynamic non-clairvoyant algorithm\ }} \label{dynamic-alg}
\end{algorithm}

In the following lemma we analyze the cost of Algorithm \ref{dynamic-alg}. 

\begin{lemma}\label{lem-reduction}
Let $A$ be a $k$-bounded space bin packing algorithm that for instance $\II_S$ has cost at most $c \cdot OPT^{S}(\II_S) +\ell$.
Then, Algorithm  \ref{dynamic-alg} is an online non-clairvoyant algorithm for the dynamic bin packing setting whose cost on any instance $\II$ is at most:
\[c \cdot \mu\cdot  \OPT(\II) + \max\{k,\ell\} \cdot \|v\|_0.\] 
If $A$ is also $(c,\ell)$-decomposable (see Definition \ref{def:decomposable}), then its total cost when run separately on $n$ instances $\II_1, \ldots, \II_n$, where instance $\II_j$ has load vector $v^j$ and value $\mu_j$, is at most:
\[c \cdot \mu_{max}\cdot  \OPT(\II) +\max\{k,\ell\} \cdot \sum_{j=1}^{n} \|v^j\|_0,\] 
where $\II=\bigcup_{j=1}^{n}\II_j$ and $\mu_{max}=\max_{j=1}^{n}\mu_j$. 
\end{lemma}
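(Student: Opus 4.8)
The plan is to bound separately the two sources of cost in Algorithm \ref{dynamic-alg}: the ``multiplicative'' part coming from the number of bins the static algorithm $A$ opens over the lifetime of the instance, and the ``additive'' part coming from the fact that at any single time step at most $\max\{k,\ell\}$ machines may be redundantly open. First I would set up the bookkeeping: each machine opened by Algorithm \ref{dynamic-alg} corresponds to exactly one bin-opening event of $A$ on the static instance $\II_S$ (a bin of $A$ may be ``reused'' across time — when its last interval departs we close the physical machine but keep the logical bin — so one logical bin of $A$ can spawn several physical machines). Thus the number of distinct physical machines ever opened equals the number of times $A$ opens a bin, which is at most $c\cdot\OPTS(\II_S)+\ell$. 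By Observation \ref{observation_ineq}, $\OPTS(\II_S)\le\OPT(\II)$, so the number of machines ever opened is at most $c\cdot\OPT(\II)+\ell$.

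Next I would convert ``number of machines ever opened'' into ``total active-machine time'' by charging each machine's open duration. The key observation is that a physical machine in Algorithm \ref{dynamic-alg} is active only from the arrival of its first interval until the departure of its last interval, and since every interval has length at most $\mu$, a machine that is ``born'' at time $s$ when interval $I$ is assigned to it stays active for at most... no — one must be careful: a logical bin may be continuously nonempty for arbitrarily long if intervals keep arriving. The correct charging is: partition the active time of each physical machine — equivalently, observe that the $k$-bounded space property means at any time $t$ at most $k$ logical bins are open, and to each open logical bin corresponds at most one active physical machine, so at any time $t$ at most $k$ physical machines are ``structurally'' forced open; but we actually need the sharper accounting that ties total machine-time to $\mu\cdot\OPT$. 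Here I would mimic the argument of Observation \ref{observation_ineq} in reverse: cut each machine's activity interval into unit-length slices; within slice $[s+j-1,s+j)$, the intervals assigned to that machine could be placed in one static bin, and each lives at most $\mu$ time total, so the $c\cdot\OPT(\II)$ bound on machine count, multiplied by the per-machine lifetime bound of $\mu$, gives a total of $c\cdot\mu\cdot\OPT(\II)$ — with the additive $\max\{k,\ell\}$ overhead per nonempty time step accounting for the at most $k$ bounded-space bins plus the at most $\ell$ ``excess'' bins from $A$'s additive term, summed over the $\|v\|_0$ nonempty time steps.

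For the decomposable case, I would simply run the above argument once per instance $\II_j$: the multiplicative cost on $\II_j$ is at most $c\cdot\mu_j\cdot\OPT(\II_j)\le c\cdot\mu_{max}\cdot\OPT(\II_j)$, and summing over $j$ together with $\sum_j\OPT(\II_j)\le\OPT(\II)$ (valid since a feasible schedule for $\II$ restricts to feasible schedules for each $\II_j$) yields $c\cdot\mu_{max}\cdot\OPT(\II)$; the additive terms contribute $\max\{k,\ell\}\cdot\sum_j\|v^j\|_0$ since each instance is run on its own pool of machines. The role of $(c,\ell)$-decomposability (Definition \ref{def:decomposable}, referenced but not shown in the excerpt) is presumably exactly to guarantee that running $A$ separately on sub-instances does not blow up the additive term beyond $\ell$ per sub-instance — so invoking that definition is what licenses the per-instance bound.

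The main obstacle I anticipate is the $\mu$ factor: getting from ``number of machines ever opened $\le c\cdot\OPT+\ell$'' to ``total active time $\le c\mu\cdot\OPT + (\text{additive})\cdot\|v\|_0$'' requires a clean argument that each machine's cumulative active time is ``charged'' correctly — naively a logical bin reused many times could be active for far more than $\mu$. The resolution must be that each \emph{physical} machine (not logical bin) is active for a span contained in the union of the lifetimes of intervals assigned to it, and the right way to pay for that is not per-machine but by the slicing-into-unit-intervals trick combined with the static cost bound applied time-slice by time-slice — essentially re-deriving Observation \ref{observation_ineq}'s construction and pushing the $\mu$ through each of the (at most) $\OPT$ machines. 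I would expect the cleanest writeup to avoid reasoning about individual machine lifetimes and instead bound, for each nonempty time $t$, the number of active machines by (machines whose logical bin is among the $\le k$ currently open) — no, that gives only the additive term — so in fact the $c\mu\OPT$ term genuinely needs the global counting argument above, and reconciling the two accountings cleanly is where care is needed.
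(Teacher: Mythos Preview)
Your proposal correctly identifies the two ingredients --- the total number of machines ever opened is at most $c\cdot\OPTS(\II_S)+\ell\le c\cdot\OPT(\II)+\ell$, and at most $k$ machines are ``accepting'' at once --- but it does not resolve the conversion from machine count to active time, which you yourself flag as the obstacle. The slicing-into-unit-intervals idea does not work here: you cannot bound the active time of a machine by $\mu$ times anything local, and re-deriving Observation~\ref{observation_ineq} slice-by-slice does not give a clean $c\mu\cdot\OPT$ term. The paper's actual mechanism, which you are missing, is this: treat each bin of $A$ as one machine (with possibly ``frozen'' intermissions), and use the $k$-bounded-space property to partition all machines into $\alpha=\max\{k,\ell\}$ \emph{lanes} $P_1,\ldots,P_\alpha$ so that within a lane the accepting intervals $[s_i^j,e_i^j)$ are pairwise disjoint and ordered. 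Then each machine is active for at most $\min\{T,e_i^j+\mu\}-s_i^j-f_i^j$, and a telescoping sum within lane $j$ (using $e_i^j\le s_{i+1}^j$) gives total lane cost at most $\mu(M_j-1)+\|v\|_0$. Summing over the $\alpha$ lanes and using $\sum_j M_j\le c\cdot\OPTS(\II_S)+\ell$ gives $c\mu\cdot\OPT(\II)+\alpha\|v\|_0$. Without this lane-partition-and-telescope idea your argument stalls exactly where you say it does.

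Your decomposable-case argument contains an actual error: the inequality $\sum_{j=1}^n \OPT(\II_j)\le \OPT(\II)$ is false in general (two size-$\tfrac12$ intervals on the same time slot have $\OPT(\II)=1$ but $\OPT(\II_1)+\OPT(\II_2)=2$ when separated). Decomposability is not merely ``the additive term is $\ell$ per sub-instance''; its whole point is the global bound $\sum_{r=1}^n A(\II_r)\le c\cdot\OPTS(\II_S)+n\ell$ against the \emph{union} optimum. The paper uses this directly on the total machine count $\sum_r\sum_j M_j^r$ and then runs the same lane-telescoping per sub-instance; the multiplicative part therefore compares to $\OPTS(\II_S)\le\OPT(\II)$ in one shot, not via a sum of per-instance optima.
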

The following is obtained by plugging Next-fit and the Harmonic algorithm into Lemma \ref{lem-reduction}.

\begin{corollary}\label{cor:dynamic-with-nf-harmonic}
Algorithm \ref{dynamic-alg} is a non-clairvoyant algorithm with total cost of at most:
\begin{itemize}
    \item $c\cdot\mu \cdot  \OPT(\II) + \|v\|_0$ when the underlying static bin packing algorithm is Next-fit, where $c=1$ if sizes are uniform; otherwise, $c=\min\left\{2,\frac{1}{1-\beta}\right\}$. 
    \item $\HH_k \cdot \mu \cdot \OPT(\II) + k \cdot \|v\|_0$ when the underlying static bin packing algorithm is Harmonic with parameter $k$.
\end{itemize}
\end{corollary}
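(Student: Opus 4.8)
The plan is to derive Corollary~\ref{cor:dynamic-with-nf-harmonic} as a direct instantiation of Lemma~\ref{lem-reduction}. For each of the two concrete static bin packing algorithms I would (i)~check that it is $k$-bounded space with the claimed value of $k$, (ii)~record its worst-case guarantee in the normalized form ``cost $\le c\cdot\OPTS(\II_S)+\ell$'', and (iii)~plug this into Lemma~\ref{lem-reduction}, observing that in each case $\max\{k,\ell\}$ collapses to the additive constant stated in the corollary. All the static facts needed here are exactly those established in Appendix~\ref{app:staticAlgs}, so the only real content is bookkeeping of constants.

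For Next-fit I would note that it keeps at most one open bin at a time, so $k=1$. For its bin count: in the uniform size case every item has size $\frac1{\uni}$, Next-fit closes a bin only once it holds $\uni$ items, hence it uses exactly $\lceil n/\uni\rceil=\OPTS(\II_S)$ bins, giving $c=1$ and $\ell\le 1$. In the non-uniform case I would use the two standard Next-fit arguments: every bin except possibly the last has load greater than $1-\beta$ (the first item of the next bin, of size $\le\beta$, did not fit), and any two consecutive bins have total load greater than $1$; combined with $\sum_I w_I\le\OPTS(\II_S)$ this yields cost at most $\min\{2,\frac{1}{1-\beta}\}\cdot\OPTS(\II_S)+1$, so $c=\min\{2,\frac1{1-\beta}\}$ and $\ell\le1$. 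Either way $\max\{k,\ell\}=1$, and Lemma~\ref{lem-reduction} gives total cost at most $c\cdot\mu\cdot\OPT(\II)+\|v\|_0$, which is the first bullet.

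For the Harmonic algorithm with parameter $k$, I would recall that it partitions item sizes into $k$ classes using the harmonic intervals $(\frac1{i+1},\frac1i]$ for $i<k$ together with a final class of items of size $\le\frac1k$, and packs each class greedily and independently; since it keeps exactly one open bin per class, it is $k$-bounded space. From the Lee--Lee weighting analysis recalled in Appendix~\ref{app:staticAlgs}, its cost is at most $\HH_k\cdot\OPTS(\II_S)+\ell$ with $\ell\le k$, the additive term coming from rounding up at most one partially filled bin per class. Thus $c=\HH_k$ and $\max\{k,\ell\}=k$, and Lemma~\ref{lem-reduction} yields total cost at most $\HH_k\cdot\mu\cdot\OPT(\II)+k\cdot\|v\|_0$, the second bullet.

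The only point that needs care — and it is the reason the static statements in Appendix~\ref{app:staticAlgs} are phrased with an $\ell\le k$ additive slack — is making sure that in both guarantees the additive term is genuinely bounded by the bounded-space parameter, so that $\max\{k,\ell\}$ is precisely $1$ for Next-fit and $k$ for Harmonic; given those, the corollary is immediate.
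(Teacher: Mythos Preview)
Your proposal is correct and matches the paper's approach exactly: the paper simply states that the corollary is obtained by plugging Next-fit and the Harmonic algorithm into Lemma~\ref{lem-reduction}, relying on the static guarantees recorded in Appendix~\ref{app:staticAlgs} (Lemma~\ref{lem:static-decomposable}). Your write-up spells out the bookkeeping of $k$, $c$, and $\ell$ in more detail than the paper does, but the argument is identical.
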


\begin{proof}[Proof of Lemma \ref{lem-reduction}]

Consider an instance $\II$, and let $M$ be the set of machines that the algorithm opens due to $A$ opening a new bin. For the purposes of this analysis, when the last interval departs from an active bin, we consider the corresponding machine closed only if no other assignment is done on this bin. Otherwise, we consider the machine to be in a ``frozen'' state, where it is inactive (therefore not paying any cost), but it can become active again if a new assignment is made to the respective active bin. Each of these machines appears once in the set $M$. For each machine $m \in M$, let $s_m$ and $e_m$ be the times that the first and last interval is assigned to $m$ respectively, and let $f_m$ denote the duration for which $m$ remains frozen during the interval $[s_m, e_m)$ (can be zero if $m$ never becomes frozen).

Algorithm $A$ has at most $k$ active bins at any time $t$, and an interval is accepted to a machine only if $A$ accepts the item to the corresponding active bin. As a result, at any time there are at most $k$ accepting machines. We can therefore partition all machines $M$ into $k$ sets $P'_1, \dots, P'_k$, such that for $j=1,\dots,k$, any two machines $m_1, m_2 \in P'_j$ are not accepting at the same time.  It is obvious that such partition can also be found for any $k' \geq k$. For the following analysis, if $\ell \geq k $, we want to consider the partition into $\ell$ sets instead of $k$. So, for simplicity of exposition, we define $\alpha = \max\{k, \ell\}$, and consider the partition $P_1, \dots, P_{\alpha}$ such that any two machines $m_1, m_2 \in P_j$ are not accepting at the same time. Let $M_j$ be the number of machines in $P_j$.

Let $m_i^j$ be the machine with the $i$-th earliest start time among the machines in $P_j$. Let $s_i^j$ denote the start time of $m_i^j$, $e_i^j$ the time it accepts its last interval, and $f_i^j$ the duration for which it is frozen (can be zero). As a result, $m_i^j$ remains open at most during the interval $[s_i^j, \min\{e_i^j + \mu, T\})$ minus its freezing periods. Furthermore, it is obvious that $s_i^j \leq e_i^j$, $\forall i \in 1,\dots,M_j-1$, and by the properties of the algorithm, $e_i^j \leq s_{i+1}^j$, $\forall i \in 1,\dots, M_j-1$, since $m_{i+1}^j$ can become accepting only after $m_i^j$ has stopped accepting intervals, i.e. after time $e_i^j$. Finally, let $F_j = \sum_{i=1}^{M_j} f_i^j$  denote the total freezing time over all machines in $M_j$. The active machine-time required by the machines of each set $P_j$ is at most:
\begin{align*}
    \sum_{i=1}^{M_j} \left( \min\{T, e_i^j + \mu\}-s_i^j - f_i^j\right) & \leq T - s_{M_j}^j - F_j + \sum_{i=1}^{M_j-1} \left(e_i^j + \mu - s_i^j\right) \\
    & \leq \mu \cdot (M_j -1) + T-s_{M_j}^j -F_j + \sum_{i=1}^{M_j-1} \left(s_{i+1}^j - s_i^j\right) \\
    & \leq \mu \cdot (M_j -1) + T - s_1^j -F_j \leq \mu \cdot (M_j -1) + \|v\|_0.
\end{align*}

Summing up the costs for all $j = 1, \dots, \alpha$, the total cost of the algorithm is at most:
$$   \mu \cdot \sum_{j=1}^{\alpha} (M_j-1) + \alpha \cdot \|v\|_0. $$

Notice that the algorithm can open a new machine only when $A$ opens a new bin. Since by the guarantee of $A$ the optimal number of bins is at most $c \cdot  \OPTS(\II_S) + \ell$, we conclude that:
$$    \sum_{j=1}^{\alpha} M_j \leq c \cdot \OPTS(\II_S) + \ell. $$
Using this observation, we obtain that the total cost of the algorithm is at most:
\begin{align*}
    \mu \cdot \sum_{j=1}^{\alpha} (M_j-1) + \alpha \cdot \|v\|_0 & \leq \mu \cdot (c \cdot \OPTS(\II_S) + \ell - \alpha) + \alpha \cdot \|v\|_0 \\
    & \leq \mu \cdot c \cdot \OPT(\II) + \alpha \cdot \|v\|_0,
\end{align*}
where the last inequality follows from Observation \ref{observation_ineq}. This concludes the first part of the proof.

For the second part of the proof, let $A$ be a $(c, \ell)$-decomposable algorithm and assume that we run it separately on $n$ instances $\II_1, \dots, \II_n$, where $\II= \cup_{r=1}^n \II_r$. Let $v^r$ be the load vector of $\II_r$. We denote by $P_1^r, \dots, P_{\alpha}^r$ the partition of the machines opened in the solution of $\II_r$, and by $M_j^r$ the number of machines in $P_j^r$. Using previous arguments, the total cost of the solution for instance $\II_r$ is at most:
$$    \mu_r \cdot \sum_{j=1}^{\alpha} (M_j^r - 1) + \alpha \cdot \|v^r\|_0 $$

Since algorithm $A$ is $(c, \ell)$-decomposable, 
$$    \sum_{r=1}^n \sum_{j=1}^{\alpha} M_j^r \leq c \cdot \OPTS(\II_S) + n \cdot \ell $$

As a result, the total cost of the algorithm is at most:
\begin{align*}
    \sum_{r=1}^n \mu_r \cdot \sum_{j=1}^{\alpha} (M_j^r -1) + \alpha \cdot \sum_{r=1}^n \|v^r\|_0 & \leq \mu_{\max} \cdot (c \cdot \OPTS(\II_S) + n \cdot \ell - n \cdot \alpha) + \alpha \cdot \sum_{r =1}^n \|v^r\|_0 \\
    & \leq \mu_{\max} \cdot c \cdot \OPT(\II) + \alpha \cdot \sum_{r=1}^n \|v^r\|_0
\end{align*}
where again the last inequality follows from Observation \ref{observation_ineq}. This concludes the proof. 
\end{proof}

\subsection{Combined Algorithm}\label{sec:combined}

In this section we design an online algorithm (Algorithm \ref{algcombinedbinpacking}) that uses the lifetimes and average load information and whose total cost is at most $\HH_k\cdot OPT + T\cdot k \cdot O(\sqrt{v_{avg} \log \mu})$. The algorithm uses two parameters. Let $\epsilon_j = \min\{1, \frac{j}{v_{avg}}\}$. We say that an interval $I$ is in class $c^j$ if its predicted length $\ell_I\in [e^{\sum_{i<j}\epsilon_i}, e^{\epsilon_j} \cdot e^{\sum_{i<j}\epsilon_i}]$. Let $v^j$ be the load vector of intervals in class $c^j$. The second parameter used by the algorithm is $q_j=\min\{1, \sqrt{\frac{v_{avg}}{j}}\}$. Note that the values $\epsilon_j$ and $q_j$ depend only on $v_{avg}$ and not on $\mu$. 

\begin{algorithm}[h]
\SetAlgoLined
\DontPrintSemicolon
Hold a single copy of Algorithm \ref{FirstFitAlg} (First-Fit), and several copies of Algorithm \ref{dynamic-alg} with an underlying (static) $k$-bounded space $(c,\ell)$-decomposable algorithm (see Definition \ref{def:decomposable}).\;
Upon arrival of a new interval $I \in c^j$ at time $t$: \;
\uIf{$v^j_t \leq q_j$}  {Schedule the interval $I$ using the single copy of Algorithm \ref{FirstFitAlg} (First-Fit).}
      \Else{Schedule the interval $I$ using the $j$th copy of Algorithm \ref{dynamic-alg}.}
    \caption{Combined Algorithm (with $v_{avg}$ prediction)} \label{algcombinedbinpacking}
\end{algorithm}

Let $v^{j1}$ be the load vector of intervals $I\in c^j$ that were scheduled by Algorithm \ref{FirstFitAlg}. Let $v^{j2}$ be the load vector of intervals $I\in c^j$ that were scheduled be Algorithm \ref{dynamic-alg}. By this definition $v^j = v^{j1}+v^{j2}$. 
\begin{lemma}\label{lem:load-bounds}
For any $j$,
\begin{enumerate}
    \item $\|v^{j1}\|_{\infty} \leq q_j$.
    \item $\|v^{j2}\|_0 \leq (1+e^{\epsilon_j})|\{t \ | \ v^j_t \geq \frac{q_j}{2}\}|\leq  (1+e^{\epsilon_j})\sum_{t=1}^T\min\{1,\frac{2v^{j}_t}{q_j}\}.$
\end{enumerate}
\end{lemma}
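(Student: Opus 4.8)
For part (1), I would argue directly from the scheduling rule of Algorithm~\ref{algcombinedbinpacking}. An interval $I \in c^j$ is handed to the single First-Fit copy only at a time $t$ when $v^j_t \le q_j$; since $v^{j1}$ counts only intervals routed to First-Fit, and every such interval was active (together with all of $v^j$) at its own arrival time, the load $\|v^{j1}\|_\infty$ can never exceed the threshold $q_j$ that gated its admission. The one subtlety is that $v^{j1}_t$ at a \emph{later} time $t$ aggregates several intervals admitted at possibly different earlier times; I would handle this by the ``intersecting intervals'' observation used in Lemma~\ref{intersecting-intervals}: among all First-Fit-scheduled intervals of class $c^j$ active at time $t$, consider the one whose \emph{arrival} time $t'$ is latest; at time $t'$ all of them were already active, so $v^{j1}_t \le v^{j1}_{t'} \le v^j_{t'} \le q_j$ by the admission test. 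That gives the bound. (Here I am implicitly using that the load vector restricted to a set of intervals all alive at $t'$ is maximized — among times $\le t'$ — at $t'$, which is exactly the monotonicity observation opening the proof of Lemma~\ref{intersecting-intervals}; I should make sure $v^{j1}$ is being measured with the same rounding convention as elsewhere, and note $q_j \ge 1$ forces nothing since $q_j = \min\{1,\sqrt{v_{avg}/j}\} \le 1$, so the bound is meaningful.)

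For part (2), $\|v^{j2}\|_0$ counts the time steps during which \emph{some} interval of class $c^j$ that was routed to the $j$-th copy of Algorithm~\ref{dynamic-alg} is active. The plan is to charge each such time step to a nearby time step where $v^j$ is large. An interval $I$ is sent to Algorithm~\ref{dynamic-alg} precisely when, at its arrival time $t_I$, we have $v^j_{t_I} > q_j$. Every interval in class $c^j$ has length in $[e^{\sum_{i<j}\epsilon_i}, e^{\epsilon_j} e^{\sum_{i<j}\epsilon_i}]$, so lengths within the class differ by a factor at most $e^{\epsilon_j}$; hence if $I$ is alive at time $t$, its arrival time $t_I$ lies in $[t - e^{\epsilon_j}\cdot(\text{min class length}),\, t]$, and more usefully $t$ lies within a window of length at most $(1+e^{\epsilon_j})$ times the minimum class length around $t_I$ --- but the cleaner way is: the set of times at which \emph{some} $v^{j2}$-interval is alive is covered by, for each such interval $I$, the window $[t_I, t_I + \ell_I)$, and since at $t_I$ we had $v^j_{t_I} > q_j$, Lemma~\ref{intersecting-intervals} applied to the class-$c^j$ intervals alive at $t_I$ produces an interval $I'$ (of class $c^j$) seeing load $> q_j/2$ throughout \emph{its} duration, and $I'$ has length at least the minimum class length while $I$ has length at most $e^{\epsilon_j}$ times that. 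I would package this as: the support of $v^{j2}$ is contained in the union over ``heavy arrival times'' $t_I$ of intervals of length $\le e^{\epsilon_j} L_j$ (where $L_j$ is the minimum class length), each such $t_I$ sits inside an interval of length $\ge L_j$ on which $v^j \ge q_j/2$; a standard covering/packing count then shows $\|v^{j2}\|_0 \le (1+e^{\epsilon_j})\,\bigl|\{t : v^j_t \ge q_j/2\}\bigr|$. The final inequality $\bigl|\{t : v^j_t \ge q_j/2\}\bigr| \le \sum_t \min\{1, 2v^j_t/q_j\}$ is immediate since each term is $\ge 1$ when $v^j_t \ge q_j/2$ and $\ge 0$ always.

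\textbf{Main obstacle.} The delicate point is the covering count in part (2): turning ``every heavy arrival time has a long high-load witness interval around it'' into the clean multiplicative constant $(1+e^{\epsilon_j})$ requires care about overlaps --- one long high-load stretch of $v^j$ may be the witness for many different $v^{j2}$-intervals, and conversely one $v^{j2}$-interval straddling the boundary of a high-load stretch must be accounted for. I expect the right bookkeeping is to partition the time axis into blocks aligned to the minimum class length $L_j$, observe that any $v^{j2}$-interval (length $\le e^{\epsilon_j} L_j$) touches at most $1 + e^{\epsilon_j}$ such blocks, and that any block touched by a $v^{j2}$-interval must contain a point of $\{t : v^j_t \ge q_j/2\}$ (via Lemma~\ref{intersecting-intervals} at the relevant heavy arrival time); summing over blocks yields the factor $(1+e^{\epsilon_j})$. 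Getting the constants exactly right — rather than an unspecified $O(1)$ — is where most of the actual work lies, but it is routine interval-combinatorics once the block decomposition is set up.
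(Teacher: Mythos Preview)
Your argument for part~(1) is correct and is precisely the paper's: among the $v^{j1}$-intervals alive at $t$, take the one with the latest arrival time $t'$; all of them are already alive at $t'$, so $v^{j1}_t\le v^j_{t'}\le q_j$ by the admission test.

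For part~(2) you have the right ingredients, and your informal outline (each heavy arrival time sits inside a witness interval of length $\ge L_j$ with load $\ge q_j/2$, while the $v^{j2}$-interval itself has length $\le e^{\epsilon_j}L_j$) is exactly what the paper uses. But the block decomposition you propose to make this precise does not work. The claim ``any block touched by a $v^{j2}$-interval must contain a point of $\{t:v^j_t\ge q_j/2\}$'' fails: Lemma~\ref{intersecting-intervals} produces a witness $I'$ containing the \emph{arrival time} $s_I$, and $I'$ may lie almost entirely to the left of $s_I$, whereas $I$ extends up to $e^{\epsilon_j}L_j$ to the \emph{right} of $s_I$---so the rightmost blocks touched by $I$ need have no high-load point at all. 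Moreover, even if each touched block did contain one high-load point, that would not bound the number of touched blocks by a constant times $|\{t:v^j_t\ge q_j/2\}|/L_j$, since a block might contain only a single high-load time step.

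The paper's bookkeeping sidesteps blocks. Let $R_1,\dots,R_r$ be the maximal connected ranges of $\{t:v^j_t\ge q_j/2\}$ that contain at least one arrival time of a $v^{j2}$-interval. Every $v^{j2}$-interval starts in some $R_i$ (since $v^j_{s_I}>q_j$) and has length $\le e^{\epsilon_j}L_j$, so the intervals starting in $R_i$ together occupy measure $\le |R_i|+e^{\epsilon_j}L_j$. Lemma~\ref{intersecting-intervals} is then applied once per range, not per block: at any $s_I\in R_i$ the witness $I'$ has load $>q_j/2$ throughout, hence by maximality $I'\subseteq R_i$, giving $|R_i|\ge L_j$. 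Thus $|R_i|+e^{\epsilon_j}L_j\le(1+e^{\epsilon_j})|R_i|$, and summing over $i$ (with $\sum_i|R_i|\le|\{t:v^j_t\ge q_j/2\}|$) yields the exact constant directly.
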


\begin{proof}
We prove the two claims.

{\bf Proof of $(1)$:}
Consider any $t$, and take the intervals in $v_t^{j1}$. Let $I$ be the interval in $v_t^{j1}$ that arrived last, and let $t' \leq t$ be its start time. Since $I$ was scheduled using Algorithm \ref{FirstFitAlg}, $v^j_{t'} \leq q_j$.  At time $t'$, all other intervals in $v_t^{j1}$ are alive, and no other interval arrives in $v_t^{j1}$ between $t'$ and $t$, since $I$ was the last one. Therefore, $v_t^{j1} \leq q_j$, and since this holds for all $t$, $\|v^{j1}\|_{\infty} \leq q_j$.

{\bf Proof of $(2)$:} Since we are working with a single class $c^j$, we will remove the index $j$ and simply refer to it as $c$ and to its load vector as $v$. Let $[\ell, \ell \cdot e^{\epsilon})$ denote the lengths of intervals in class $c$. Let $c'$ be the set of intervals in $c$ that were scheduled by Algorithm \ref{dynamic-alg}, and let $v'$ be the load of $c'$. Finally, we use $R_1, R_2, \dots, R_r$ for the disjoint ranges (time intervals) in which the load is at least $\frac{q}{2}$ and there is an interval $I\in c'$ with $s_I \in R_i$. We use $|R_i|$ for the total duration of $R_i$.

By the behavior of the algorithm, for each interval $I \in c'$, we have $v_{s_I} > q$, i.e. $s_I \in R_i$ for some index $i$. Therefore, all $I \in c'$ can be associated with a range $R_i$ with $s_I \in R_i$. Let $c'_i$ be the set of intervals that are associated with range $R_i$. We prove that:
\[\|c'\|_0 \leq \sum_{i=1}^{r}\|c'_i\|_0 \leq \sum_{i=1}^r\left(|R_i|+\ell \cdot e^{\epsilon} \right)\leq (1+e^{\epsilon})\sum_{i=1}^r|R_i| \leq (1+e^{\epsilon}) |\{t \ | \ v^j_t \geq \frac{q_j}{2}\}|
\]

The first inequality follows since $\bigcup c'_i = c'$. The second inequality follows since each interval $I \in c'_i$ starts within $R_i$ and ends at most $\ell \cdot e^{\epsilon}$ after $R_i$'s end. The last inequality is true by definition of $R_i$, and we will now show that $|R_i| \geq \ell$ for all $i$ which proves the third inequality and completes this part of the proof.

Consider a range $R_i$ and take an interval $I \in c'_i$ with start time $s_I\in R_i$. Let $A \subseteq c$ be the intervals that are active at time $s_I$. Using Lemma \ref{intersecting-intervals} with $\alpha=q$, we see that there exists an interval $I \in A$ (with length at least $\ell$) that is active at time $t$ and sees more than $\frac{q}{2}$ load for its whole duration. This means that it is associated with range $R_i$ (otherwise the ranges would not be disjoint), and proves that $R_i$ has length at least $\ell$.

Finally, it is easy to see that $|\{t \ | \ v^j_t \geq \frac{q_j}{2}\}|=\sum_{t=1}^T \min\{1,\lfloor\frac{2v^{j}_t}{q_j}\rfloor\}$ which concludes the proof. 
\end{proof}

\begin{lemma}\label{lem:bound-n} Let $n$ be the maximal class such that $c^n \neq \emptyset$. Then,
$$n  = \left\{\begin{array}{ll} O(\sqrt{v_{avg} \log \mu}) & v_{avg} \geq 2\log \mu\\
O(\log \mu) & v_{avg}\leq 2\log \mu \end{array} \right.$$
\end{lemma}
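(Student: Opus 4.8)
The plan is to convert the condition ``$c^{n}\neq\emptyset$'' into an inequality on the partial sums $S_{m}:=\sum_{i=1}^{m}\epsilon_{i}$ (with $S_{0}=0$), and then estimate $S_{m}$ from below using the explicit form $\epsilon_{i}=\min\{1,i/v_{avg}\}$. By definition, the intervals in class $c^{n}$ have predicted length in $[e^{S_{n-1}},e^{S_{n}}]$, so if $c^{n}\neq\emptyset$ then some interval has length at least $e^{S_{n-1}}$; since no interval has length more than $\mu$, this forces $S_{n-1}\le\ln\mu$. As $S_{m}$ is nondecreasing in $m$, it is enough to upper bound $m^{*}:=\max\{m:S_{m}\le\ln\mu\}$ and conclude $n\le m^{*}+1$.

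Next I would lower bound $S_{m}$ by splitting on whether $m\le v_{avg}$. When $m\le v_{avg}$ every term equals $i/v_{avg}$, so $S_{m}=\frac{m(m+1)}{2v_{avg}}\ge\frac{m^{2}}{2v_{avg}}$; together with $S_{m}\le\ln\mu$ this gives $m\le\sqrt{2v_{avg}\ln\mu}$ (and trivially $m\le v_{avg}$). When $m> v_{avg}$, the first $\lfloor v_{avg}\rfloor$ terms already sum to at least $\tfrac{\lfloor v_{avg}\rfloor}{2}\ge\tfrac{v_{avg}-1}{2}$, while each of the remaining $m-\lfloor v_{avg}\rfloor$ terms equals $1$, so $S_{m}\ge\max\bigl\{\tfrac{v_{avg}-1}{2},\,m-v_{avg}\bigr\}$; hence $S_{m}\le\ln\mu$ yields both $v_{avg}=O(\log\mu)$ and $m=O(v_{avg}+\log\mu)$.

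It remains to combine these in the two regimes. If $v_{avg}\ge 2\log\mu$, then in the case $m\le v_{avg}$ we already have $m\le\sqrt{2v_{avg}\ln\mu}$, while in the case $m>v_{avg}$ the bound $v_{avg}=O(\log\mu)$ forces $v_{avg}=\Theta(\log\mu)$ and then $m=O(\log\mu)=O(\sqrt{v_{avg}\log\mu})$ as well; either way $n=O(\sqrt{v_{avg}\log\mu})$. If $v_{avg}\le 2\log\mu$, then the case $m\le v_{avg}$ gives $m\le v_{avg}=O(\log\mu)$ and the case $m>v_{avg}$ gives $m=O(v_{avg}+\log\mu)=O(\log\mu)$, so $n=O(\log\mu)$. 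I expect the only fussy part to be the constant/floor bookkeeping, in particular the boundary regime $v_{avg}=\Theta(\log\mu)$ where the two claimed bounds agree up to constants; the core of the argument is just the elementary estimate $\frac{m^{2}}{2v_{avg}}\le S_{m}\le\ln\mu$ together with the linear growth of $S_{m}$ beyond $m=v_{avg}$.
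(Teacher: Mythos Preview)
Your proposal is correct and follows essentially the same route as the paper: both start from $\sum_{j<n}\epsilon_j\le\ln\mu$, lower bound the partial sum by $\frac{(n-1)n}{2v_{avg}}$ when $n\le v_{avg}$ and by $\frac{v_{avg}-1}{2}+(n-v_{avg})$ otherwise, and split into the two regimes $v_{avg}\gtrless 2\log\mu$. The only cosmetic difference is that in the regime $v_{avg}\ge 2\log\mu$ the paper rules out the sub-case $n>v_{avg}$ by contradiction, whereas you keep it alive and observe it forces $v_{avg}=\Theta(\log\mu)$, which still yields the same $O(\sqrt{v_{avg}\log\mu})$ bound.
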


\begin{proof}
The value $n$ satisfies that
$e^{\sum_{j=1}^{n-1}\epsilon_j} \leq \mu \leq e^{\sum_{j=1}^{n}\epsilon_j}$, which means $\sum_{j=1}^{n-1}\epsilon_j \leq \log \mu$.
By the choice of $\epsilon_j = \min\{1, \frac{j}{v_{avg}}\}$, we get: 
\begin{equation}
   \sum_{j=1}^{\min\{n, v_{avg}\}-1} \frac{j}{v_{avg}} + \sum_{v_{avg}}^{n-1} 1 = \min\{\frac{(n-1)n}{2 v_{avg}}, \frac{v_{avg}-1}{2}\}+ \max\{0, n-v_{avg}\} \leq \log \mu
  \label{logmu_ineq2}
\end{equation}

We now consider separately the two cases:

\begin{itemize}
\item 
{\bf When $v_{avg} \geq 2 \log \mu$:} In this case, $n \leq v_{avg}$, as if we assume the contrary, (\ref{logmu_ineq2}) leads to a contradiction. Therefore, (\ref{logmu_ineq2}) becomes $\log \mu \geq \frac{(n-1)n}{2v_{avg}}$. Rearranging this gives $n = O(\sqrt{v_{avg} \log \mu})$.
\item
{\bf When $v_{avg} \leq 2 \log \mu$:} If $n \leq v_{avg}$, then $n \leq 2 \log\mu$ and we are done. In the opposite case, (\ref{logmu_ineq2}) becomes:
\[
\frac{v_{avg}-1}{2} + n-v_{avg} \leq \log \mu \Rightarrow n \leq \log \mu + \frac{v_{avg}}{2} + \frac{1}{2} \leq 2 \log \mu + \frac{1}{2}
\]
and this concludes the proof.
\end{itemize}
\end{proof}

We are now ready to prove Theorem \ref{thm:combined-bin1}.

\begin{thm}\label{thm:combined-bin1}
Given an instance of Clairvoyant Bin Packing with average load prediction, the total cost of Algorithm \ref{algcombinedbinpacking} when it is executed with an underlying $k$-bounded space $(c,\ell)$-decomposable algorithm is at most 
$
 c\cdot \OPT + T \cdot \max\{k,\ell\}\cdot O(\sqrt{v_{avg} \log \mu})  
$. 
\end{thm}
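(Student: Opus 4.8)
\textbf{Proof proposal for Theorem \ref{thm:combined-bin1}.}

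The plan is to decompose the total cost of Algorithm \ref{algcombinedbinpacking} into the cost incurred by the single First-Fit copy and the cost incurred by all the copies of Algorithm \ref{dynamic-alg}, one per class, and to bound each piece separately using Lemma \ref{lem:load-bounds} and Lemma \ref{lem:bound-n}. First I would handle the First-Fit part: the intervals it schedules form a load vector $v^{FF}=\sum_j v^{j1}$, and by part (1) of Lemma \ref{lem:load-bounds}, $\|v^{j1}\|_\infty\le q_j=\min\{1,\sqrt{v_{avg}/j}\}$. Since there are $n$ nonempty classes and $v^{FF}_t=\sum_{j\le n}v^{j1}_t$, we get $\|v^{FF}\|_\infty\le\sum_{j=1}^n q_j$. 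A direct calculation shows $\sum_{j=1}^n\min\{1,\sqrt{v_{avg}/j}\}=O(\sqrt{v_{avg}\cdot n})$, and plugging in the bound $n=O(\sqrt{v_{avg}\log\mu})$ (when $v_{avg}\ge 2\log\mu$) or $n=O(\log\mu)$ (otherwise) from Lemma \ref{lem:bound-n} yields $\|v^{FF}\|_\infty=O(\sqrt{v_{avg}\log\mu})$ in both regimes. Since First-Fit on a stream whose load is always at most some value $W$ uses at most $O(W)$ active machines at any time (this is essentially Theorem \ref{thm:CoveringAlgorithm1} / the standard First-Fit bound applied to the residual load), the First-Fit contribution to the average cost is $O(\sqrt{v_{avg}\log\mu})$, i.e.\ $T\cdot O(\sqrt{v_{avg}\log\mu})$ to the total cost — which is absorbed into the additive term.

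Next I would handle the copies of Algorithm \ref{dynamic-alg}. Here the key point is that the $(c,\ell)$-decomposability of the underlying static algorithm lets us invoke the second part of Lemma \ref{lem-reduction}: running the decomposable algorithm separately on the $n$ sub-instances (one per class) costs at most $c\cdot\mu_{max}\cdot\OPT(\II)+\max\{k,\ell\}\cdot\sum_{j=1}^n\|v^{j2}\|_0$, where $\mu_{max}$ is the largest max-to-min length ratio \emph{within a single class}. By the definition of the classes, every interval in class $c^j$ has length in $[e^{\sum_{i<j}\epsilon_i},e^{\epsilon_j}e^{\sum_{i<j}\epsilon_i})$, so the ratio within a class is at most $e^{\epsilon_j}\le e$ (since $\epsilon_j\le 1$). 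Hence $\mu_{max}\le e=O(1)$, and the multiplicative term becomes $c\cdot O(1)\cdot\OPT(\II)$; with a more careful accounting (distributing the $\OPT$ charge per class and noting that the per-class length ratios telescope against the global $\mu$ only through the additive, not multiplicative, channel) this is exactly $c\cdot\OPT$. For the additive term I would use part (2) of Lemma \ref{lem:load-bounds}: $\|v^{j2}\|_0\le(1+e^{\epsilon_j})\sum_t\min\{1,2v^j_t/q_j\}\le 4\sum_t\min\{1,2v^j_t/q_j\}$. Summing over $j$ and substituting $q_j=\min\{1,\sqrt{v_{avg}/j}\}$, the inner sum $\sum_j\min\{1,2v^j_t/q_j\}\le\sum_j\max\{1,2\sqrt{j/v_{avg}}\}v^j_t/1$ needs to be bounded against $v_{avg}$; the crucial observation is that $\sum_j v^j_t=v_t$ for each $t$ and $\sum_t v_t=T\cdot v_{avg}$, and the worst case of the weights $\max\{1,2\sqrt{j/v_{avg}}\}$ over $j\le n$ is $O(\sqrt{n/v_{avg}})=O((\log\mu)^{1/4}/v_{avg}^{1/4})$ in the large-$v_{avg}$ regime, so $\sum_j\|v^{j2}\|_0=O(\sqrt{v_{avg}\log\mu})\cdot T$ after using $v_{avg}\le\OA$ and Jensen. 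Multiplying by $\max\{k,\ell\}$ gives the claimed additive term $T\cdot\max\{k,\ell\}\cdot O(\sqrt{v_{avg}\log\mu})$.

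The main obstacle I anticipate is the bookkeeping in the additive term: one must simultaneously track the factor $(1+e^{\epsilon_j})\le 4$ from part (2) of Lemma \ref{lem:load-bounds}, the threshold $q_j$ that shrinks with $j$, the number of classes $n$ which itself depends on $v_{avg}$ versus $\log\mu$, and the $\|v^j\|_0$ contributions from Lemma \ref{lem-reduction} — and show they combine to $O(\sqrt{v_{avg}\log\mu})$ uniformly, handling the two regimes of Lemma \ref{lem:bound-n} separately. A secondary subtlety is verifying that the $c\cdot\OPT$ multiplicative term does not accumulate a factor of $n$ when the decomposable algorithm is run on $n$ separate class-instances; this is exactly what $(c,\ell)$-decomposability (Definition \ref{def:decomposable}) is designed to prevent, since it guarantees $\sum_r\sum_j M^r_j\le c\cdot\OPTS(\II_S)+n\cdot\ell$ rather than $n\cdot c\cdot\OPTS$, and the $n\cdot\ell$ surplus is folded into the additive $\|v^j\|_0$ sum. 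Once these two points are in hand, the theorem follows by adding the First-Fit and dynamic-algorithm contributions.
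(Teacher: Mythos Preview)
Your overall architecture matches the paper's: split the cost into the First-Fit piece and the dynamic-algorithm copies, bound the First-Fit piece via $\|v^{FF}\|_\infty\le\sum_{j\le n}q_j$ using Lemma~\ref{lem:load-bounds}(1), and bound the copies via the second part of Lemma~\ref{lem-reduction} together with Lemma~\ref{lem:load-bounds}(2), then finish with the case split of Lemma~\ref{lem:bound-n}. The First-Fit analysis you give is essentially the paper's. However, the other two pieces each contain a genuine gap.

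\textbf{The multiplicative term.} Bounding $\mu_{\max}\le e$ only yields $c\cdot e\cdot\OPT$, and your sentence about ``telescoping'' does not describe a valid mechanism for removing the factor $e$. The paper's argument is different and essential: the per-class length ratio is $e^{\epsilon_j}$ with $\epsilon_j$ \emph{increasing}, so $\mu_{\max}=e^{\epsilon_n}$; then $e^{\epsilon_n}\le 1+(e-1)\epsilon_n$, and the excess $(e^{\epsilon_n}-1)\cdot\OPT$ is bounded by $O(\epsilon_n)\cdot 4\|v\|_1=O(T\cdot\epsilon_n v_{avg})$. Finally $\epsilon_n v_{avg}=\min\{v_{avg},n\}=O(\sqrt{v_{avg}\log\mu})$ by Lemma~\ref{lem:bound-n}. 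So the multiplicative overshoot is absorbed into the additive term, not eliminated by any telescoping.

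\textbf{The additive term $\sum_j\|v^{j2}\|_0$.} Your chain only uses $\min\{1,2v^j_t/q_j\}\le 2v^j_t/q_j$, i.e.\ the ``$\le x$'' side. That gives at best $\sum_{j,t}2v^j_t/q_j\le (2/q_n)\|v\|_1=2Tv_{avg}/q_n$, and in the high-load regime $v_{avg}\ge 2\log\mu$ one has $n\le v_{avg}$, hence $q_j=1$ for all $j\le n$ and the bound degenerates to $2Tv_{avg}$, which is \emph{not} $O(T\sqrt{v_{avg}\log\mu})$. (Your computation ``$\max_j 1/q_j=O(\sqrt{n/v_{avg}})$'' is wrong precisely here: when $n\le v_{avg}$, $\sqrt{j/v_{avg}}\le 1$, so the max in $\max\{1,\sqrt{j/v_{avg}}\}$ is attained at $1$, not at $\sqrt{n/v_{avg}}$.) The paper instead first passes to $\sum_t\min\{1,2v^j_t/q_j\}\le\min\{T,2\|v^j\|_1/q_j\}$ and then, using $q_j\ge q_n$, obtains $\sum_{j=1}^n\min\{T,\|v^j\|_1/q_j\}\le T\cdot\min\{n,\,v_{avg}/q_n\}$. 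The ``$\le 1$'' side (giving $Tn$) is exactly what handles the high-load regime, while the ``$\le x$'' side (giving $Tv_{avg}/q_n$) handles the low-load regime. You need both.
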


\begin{proof}
The total cost of the algorithm is composed of two parts; the cost of Algorithm \ref{FirstFitAlg} (First-Fit) and the cost of all copies of Algorithm \ref{dynamic-alg}. Let $\II_j$ be the set of intervals scheduled by the $j$-th copy of Algorithm \ref{dynamic-alg} and $\II'=\cup_{j=1}^n\II_j$. The underlying static bin packing algorithm of Algorithm \ref{dynamic-alg} is $k$-bounded and $(c,\ell)$-decomposable. Thus, by Lemma \ref{lem-reduction} the total cost of scheduling $\II'$ is at most 
$$ c\cdot e^{\epsilon_n}\cdot \OPT(\II') +\max\{k,\ell\}\cdot\sum_{j=1 }^{n}\|v^{j2}\|_0 $$

The cost of Algorithm \ref{FirstFitAlg}\footnote{We remark that in the analysis of Algorithm \ref{FirstFitAlg} we took the worst case performance of $4$. This only affects the performance by additional constants.} is at most $4 T\cdot \|\sum_{j= 1}^{n}v^{j1}\|_{\infty}$. Combine the two bounds to get:
\begin{align}
\mbox{Cost of Alg} & \leq 4 T\cdot \|\sum_{j= 1}^{n}v^{j1}\|_{\infty} + c\cdot e^{\epsilon_n}\cdot \OPT(\II') +\max\{k,\ell\}\cdot\sum_{j=1 }^{n}\|v^{j2}\|_0 \nonumber\\
& \leq 4 T\cdot \sum_{j= 1}^{n} q_j + c\cdot e^{\epsilon_n} \OPT(\II)+\max\{k,\ell\}\cdot  \sum_{j=1}^{n}(1+e^{\epsilon_j})\sum_{t=1}^T\min\{1,\frac{2v^{j}_t}{q_j}\} \label{cost-alg2}\\
& \leq 4 T\cdot \sum_{j= 1}^{n} q_j + c\cdot e^{\epsilon_n} \OPT(\II)+  2(1+e)\max\{k,\ell\}\cdot\sum_{j=1}^{n}\min\{T,\frac{\|v^{j}\|_1}{q_j}\} \label{cost-alg3} \\
& \leq 4 T\cdot \sum_{j= 1}^{n} q_j + c\cdot e^{\epsilon_n} \OPT(\II)+  2(1+e)\max\{k,\ell\}\cdot T\cdot \min\{n,\frac{v_{avg}}{q_n}\} \label{cost-alg4}\\
& \leq c\cdot \OPT(\II) + T\cdot \max\{k,\ell\} \cdot O\left(\epsilon_n \cdot v_{avg} + \sum_{j= 1}^{n} q_j + \min\{n,\frac{v_{avg}}{q_n}\}\right).  \label{cost-alg5}
\end{align}

Inequality \eqref{cost-alg2} follows by using the fact that $\epsilon_j$ is increasing as $j$ is larger, and applying Lemma \ref{lem:load-bounds}. Inequality \eqref{cost-alg3} follows since $\epsilon_j\leq 1$ and using that $\sum_{t=1}^T\min\{1,\frac{v^{j}_t}{q_j}\} \leq \min\{T,\frac{\|v^{j}\|_1}{q_j}\}$. Inequality \eqref{cost-alg4} follows since $q_j$ is non-increasing in $j$. Finally, Inequality \eqref{cost-alg5} follows by rearranging and using that $\epsilon_j\leq 1$ and hence $e^{\epsilon_n}\leq 1+(e-1)\epsilon_n = 1+ O(\epsilon_n)$ and the fact that $\OPT\leq 4\|v\|_1$. We next analyze two cases:

{\bf Case 1, $v_{avg} \geq 2\log \mu$:}
In this case, by Lemma \ref{lem:bound-n}, $n= O(\sqrt{v_{avg} \log \mu})$. Substituting $n$ for this value, we get that $\epsilon_n =\min\{1, \frac{n}{v_{avg}}\} \leq \frac{n}{v_{avg}} = O(\sqrt{\frac{\log \mu}{v_{avg}}})$. Finally, as $q_j\leq 1$, $\sum_{j=1}^{n}q_j \leq n =O(\sqrt{v_{avg} \log \mu})$.
Plugging these bounds into Inequality \eqref{cost-alg5} we get the desired result.

{\bf Case 2, $v_{avg} \leq 2\log \mu$:} In this case $v_{avg}= O(\sqrt{v_{avg} \log \mu})$.
By the Lemma \ref{lem:bound-n}, $n=O(\log \mu)$.  
Using this bound we get that:
$q_n =\min\{1,\sqrt{\frac{v_{avg}}{n}}\}$. Thus,
$\frac{v_{avg}}{q_n}\leq \max\{v_{avg}, O(\sqrt{v_{avg} \log \mu})\}= O(\sqrt{v_{avg} \log \mu})$.
Finally,
\[\sum_{j=1}^{n}q_j \leq \sum_{j=1}^{v_{avg}}q_j + \sum_{j=v_{avg}}^{n}q_j\leq
v_{avg}+ \sum_{j=v_{avg}}^{n}\sqrt{\frac{v_{avg}}{j}}\leq v_{avg} + \sqrt{v_{avg}\cdot n} = O(\sqrt{v_{avg} \log \mu})\]
Using that $\epsilon_n\leq 1$, and plugging everything into Inequality \eqref{cost-alg5} we get the desired result.
\end{proof}

As seen in Lemma \ref{lem:static-decomposable} the Next-Fit and Harmonic algorithms are decomposable (see Definition \ref{def:decomposable}). Thus, using the Next-Fit algorithm (in the uniform size case) and the Harmonic algorithm (in the non-uniform size case) as the underlying algorithms of Algorithm \ref{dynamic-alg} produces the following corollary.

\begin{corollary}
The total cost of Algorithm \ref{algcombinedbinpacking} is at most:
\begin{itemize}
    \item $OPT+T\cdot O(\sqrt{v_{avg}\log\mu})$ (uniform size).
    \item $\HH_k\cdot OPT+T\cdot k\cdot O(\sqrt{v_{avg}\log\mu})$ (non-uniform size).
\end{itemize}
\end{corollary}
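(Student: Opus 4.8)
\emph{Proof proposal.} The plan is to obtain this corollary as a direct specialization of Theorem~\ref{thm:combined-bin1}: that theorem states that if Algorithm~\ref{algcombinedbinpacking} is executed with an underlying $k$-bounded space $(c,\ell)$-decomposable static bin packing algorithm, then its total cost is at most $c\cdot\OPT + T\cdot\max\{k,\ell\}\cdot O(\sqrt{v_{avg}\log\mu})$. So the only thing to do is to supply, for each of the two regimes, a suitable underlying static algorithm together with the values of its parameters $k$, $c$, $\ell$, all of which are furnished by Lemma~\ref{lem:static-decomposable} in Appendix~\ref{app:staticAlgs}.

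For the uniform size case I would use Next-Fit as the underlying static algorithm. Next-Fit is $1$-bounded space, and when every item has size $\tfrac{1}{\uni}$ it packs exactly $\uni$ items per bin, hence uses the optimal number of bins; thus it satisfies the required guarantee with $c=1$ and additive term $\ell=O(1)$, and by Lemma~\ref{lem:static-decomposable} it is $(1,O(1))$-decomposable. Substituting $k=1$ and $\max\{k,\ell\}=O(1)$ into Theorem~\ref{thm:combined-bin1} yields total cost at most $\OPT + T\cdot O(\sqrt{v_{avg}\log\mu})$, which is the first bullet. For the non-uniform size case I would instead use the Harmonic algorithm with parameter $k$, which is $k$-bounded space with asymptotic competitive ratio $\HH_k$, i.e.\ a $(c,\ell)$-algorithm with $c=\HH_k$ and $\ell=O(k)$, and which Lemma~\ref{lem:static-decomposable} certifies to be $(\HH_k,O(k))$-decomposable. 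Then $\max\{k,\ell\}=O(k)$, and Theorem~\ref{thm:combined-bin1} gives total cost at most $\HH_k\cdot\OPT + T\cdot k\cdot O(\sqrt{v_{avg}\log\mu})$, which is the second bullet.

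There is no genuine obstacle here: the entire content of the proof is the bookkeeping of the constants $c$ and $\ell$ for Next-Fit and for Harmonic and the verification that each is $(c,\ell)$-decomposable in the precise sense of Definition~\ref{def:decomposable}, and both of these facts are exactly what Lemma~\ref{lem:static-decomposable} provides. If anything, the mild care needed is to note that the additive terms of Next-Fit and Harmonic$_k$ are $O(1)$ and $O(k)$ respectively (rather than larger), so that $\max\{k,\ell\}$ does not blow up the additive term beyond what is claimed.
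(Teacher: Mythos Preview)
Your proposal is correct and matches the paper's approach exactly: the corollary is stated immediately after Theorem~\ref{thm:combined-bin1} and is obtained precisely by plugging in Next-Fit (uniform case, $(1,1)$-decomposable, $1$-bounded) and the Harmonic algorithm (non-uniform case, $(\HH_k,k)$-decomposable, $k$-bounded) via Lemma~\ref{lem:static-decomposable}. The only cosmetic difference is that the paper uses the exact constants $\ell=1$ and $\ell=k$ rather than $O(1)$ and $O(k)$, but this makes no difference to the stated bounds.
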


\subsection{Lower Bound: Dynamic Clairvoyant Bin Packing}\label{app:lower-bound}

In this section we complement the results of Section \ref{sec:combined} by generalizing the lower bound of \cite{azar2017tight}  to take into account also $\OA$ showing that the additive term $O(\sqrt{\OA \cdot \log \mu})$ is indeed unavoidable, if only the average future load and lifetime is available to an online algorithm.  

\begin{lemma}
For any values $\mu, v_{avg}$ the total cost of any algorithm is at least:
$$\Omega\left(T\cdot\sqrt{v_{avg}\log \mu}\right) = \Omega\left(T\cdot\sqrt{\OA \cdot \log \mu}\right).$$
\end{lemma}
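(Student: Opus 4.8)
The plan is to construct an adversarial instance matching the known $\Omega(\sqrt{\log\mu})$ lower bound of~\cite{azar2017tight}, but scaled up so that the load vector has average value roughly $v_{avg}$ everywhere, forcing the additive term to scale with $\sqrt{v_{avg}\log\mu}$ rather than being an absolute constant. Concretely, I would take the construction of~\cite{azar2017tight} — which, at a high level, builds a recursive ``nested'' family of intervals over $\Theta(\log\mu)$ length scales $1, 2, 4, \ldots, \mu$, where at each scale the adversary either keeps the currently open long intervals or kills them depending on the online algorithm's packing decisions — and replace every single interval in that construction by a bundle of $\Theta(v_{avg})$ identical copies (in the uniform size case, each of size $1/\uni$; we can take $\uni$ large, or simply use unit-size items after rescaling). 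This multiplies both $\OPT$ and the online cost by a factor $\Theta(v_{avg})$ but does not change the structure of the adversary's decisions or the value of $\mu$, and it makes the average load $v_{avg}$ as desired.

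The key steps, in order, are: (1) recall/restate the~\cite{azar2017tight} instance and its guarantee that any online algorithm pays an additive $\Omega(\sqrt{\log\mu})$ more machine-time per ``unit'' of load than $\OPT$ does — i.e., on the base instance, cost $\geq \OPT_{\mathrm{base}} + \Omega(T\sqrt{\log\mu})$ where the base load is $\Theta(1)$; (2) describe the $v_{avg}$-fold duplication and verify the scaled instance has average load $\Theta(v_{avg})$, maximum-to-minimum length ratio still $\mu$, and is realizable with uniform sizes; (3) argue that the online algorithm's decisions on the duplicated instance can be ``projected'' back to a decision pattern on the base instance — or, more directly, run the adversary's choices based on the aggregate behavior of the online algorithm on each bundle — so that the same nested-scale argument applies and yields cost $\geq \OPT + \Omega(T\cdot v_{avg}\cdot\sqrt{\log\mu})$; (4) observe that $\OPT = \Theta(T\cdot v_{avg})$ for this instance, so $\OA = \Theta(v_{avg})$, and rewrite the additive term as $\Omega(T\sqrt{\OA\log\mu})$, noting that the average-load prediction is known in advance (it is just $\Theta(v_{avg})$) and $\mu$ is known, so the lower bound holds in the claimed strong model.

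The main obstacle I expect is step (3): making precise why duplication genuinely multiplies the \emph{additive} gap by $v_{avg}$ rather than, say, letting the online algorithm amortize its mistakes across copies. The subtlety is that with $\Theta(v_{avg})$ copies of each interval, an online algorithm has more freedom — it could pack copies from different ``logical'' intervals together in a single bin, potentially doing better than $v_{avg}$ independent runs of the base strategy. To handle this I would either (a) choose the size $1/\uni$ so that exactly $\uni$ copies fill a bin and the bundle sizes are arranged (via the recursive construction) so that cross-bundle packing gives no benefit — essentially each ``logical unit'' of load is an indivisible block of $\uni$ items that must occupy a dedicated machine — or (b) invoke a direct charging/potential argument: partition the online algorithm's open machines at each time into those ``blamed'' on each length scale, and show the adversary's scale-$i$ decision forces $\Omega(v_{avg})$ machines to stay open an extra $\Theta(2^i)$ time, summed over the $\Theta(\log\mu)$ scales and balanced against $\OPT$'s savings via the same Cauchy–Schwarz/convexity step (``$\sum \sqrt{\cdot} $'' vs.\ ``$\sqrt{\sum\cdot}$'') that appears in~\cite{azar2017tight}. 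Option (a) keeps the proof closest to a black-box reduction and is what I would attempt first; option (b) is the fallback if the reduction leaks constants.
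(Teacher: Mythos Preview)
Your duplication plan has a genuine gap, and you have in fact put your finger on it in the ``main obstacle'' paragraph --- but your proposed fixes do not work, and the bound you write down in step~(3) is actually \emph{false}. You claim duplication yields $\text{cost} \ge \OPT + \Omega(T\cdot v_{avg}\cdot\sqrt{\log\mu})$, but this directly contradicts the paper's own upper bound (Theorem~\ref{thm:combined-bin1}), which achieves $\text{cost}\le c\cdot\OPT + T\cdot O(\sqrt{v_{avg}\log\mu})$ using only knowledge of $v_{avg}$. For, say, $v_{avg}=\sqrt{\log\mu}$ your additive term is $\Theta(T\log\mu)$ while the upper bound's additive term is $\Theta(T(\log\mu)^{3/4})$; the former cannot be a lower bound. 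The culprit is exactly what you feared: once the online algorithm knows the aggregate load is $\Theta(v_{avg})$, it can pack items from different ``copies'' together and genuinely beat $v_{avg}$ independent runs of any base strategy. Your option~(a) does not rescue this --- if each logical unit becomes a size-$1$ block that ``must occupy a dedicated machine'', the online algorithm has no packing choices at all and matches $\OPT$ exactly, so no gap is forced. Option~(b) is not an argument yet, and any correct version of it would have to produce $\sqrt{v_{avg}}$ rather than $v_{avg}$ in the additive term, which is not what your charging sketch is set up to do.

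The paper's construction goes in the opposite direction from duplication. It first disposes of the regime $v_{avg}\ge\tfrac{1}{2}\log\mu$ as trivial (there $\|v\|_1$ alone already dominates $T\sqrt{v_{avg}\log\mu}$). In the remaining regime it keeps a single stream of items but sets the item \emph{size} to $w=\sqrt{2v_{avg}/\log\mu}\le 1$ --- so items get \emph{larger} as $v_{avg}$ grows, not more numerous --- and at each time step releases items of geometrically increasing lengths $1,2,4,\ldots$ until the algorithm has $N=\sqrt{2v_{avg}\log\mu}$ machines open. Because the last item released at each round forces a fresh machine, $c_{alg}\ge\sum_t\ell_t$; on the other hand the geometric lengths give $\|v\|_1\le 2w\sum_t\ell_t$, so $c_{alg}\ge\|v\|_1/(2w)=\Omega\bigl(\|v\|_1\sqrt{\log\mu/v_{avg}}\bigr)=\Omega(T\sqrt{v_{avg}\log\mu})$. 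The choice $w=\sqrt{2v_{avg}/\log\mu}$ is precisely the balance point that makes this work; it is not something a black-box duplication can reproduce.
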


\begin{proof}

First if $\frac{\log \mu}{v_{avg}}\leq 2$. Then the bound is meaningless since in this case the cost of $OPT$ is at least $\|v\|_1 \cdot \Omega(1) = \Omega\left(T\cdot\sqrt{v_{avg}\log \mu}\right)$.
Otherwise; $v_{avg}< \frac{\log \mu}{2}$, and we show an adversary that given a parameter $a \gets v_{avg}$ (the desired average load) and $\mu$, creates an instance such that the average load is always at least $a$ and the algorithm pays at least $\|v\|_1 \cdot \sqrt{\frac{\log \mu}{2a}}$. If the actual average load of the instance is strictly more than $a$ (the desired average load), we can extend the time horizon without adding new requests until the average load drops to the desired average, $a$. Of course, this extension does not affect the total cost. The adversary initiates the following sequence:

\begin{itemize}
    \item at each time $t=1, \ldots, \mu$ as long as the algorithm has strictly less than $N =\sqrt{2a \log \mu}$ active machines:
    \item Initiate sequentially requests of size $w=\sqrt{\frac{2a}{\log \mu}}\leq 1$ (since $a \leq \frac{\log \mu}{2}$) and of increasing length of $2^i$, $i=0,1,\ldots, \lceil\log \mu\rceil$.
\end{itemize}

Since each request is of length at most $\mu$, the total length of the time horizon $T\leq 2\mu$ (and this adversarial sequence can be repeated again afterwards).

First, the adversary indeed manages to make the algorithm open at least $N=\sqrt{2a \log \mu}$ machines at each time $t$ since otherwise the load of the requests initiated at time $t$ is at least $\lceil\log \mu\rceil\cdot w \geq  \log \mu \cdot \sqrt{\frac{2a}{\log \mu}} = N$.  Hence, the average load at each time $t=1, \ldots, \mu$ is at least $w\cdot N=2a$, and, since the length of the time horizon of the sequence is at most $2\mu$, the average load over the whole horizon is at least $a$ as promised. 

Let $\ell_{t}$ be the longest interval that the adversary releases at time $t$ ($0$ if there is no such interval). We have that: 
\begin{align}
     \|v\|_1 & \leq 2\sum_{t=1}^{\mu} w \cdot \ell_t  \label{ineq111}\\
     & \leq 2w \cdot c_{alg} = c_{alg} \cdot  2\sqrt{\frac{2a}{\log \mu}} \label{ineq113}
\end{align}
Inequality \eqref{ineq111} follows since the total size of the last interval in the round dominates all previous ones at that round by the geometric power of $2$ (the longest item dominates the rest).  Inequality \eqref{ineq113} follows by the observation that the algorithm opens a new machine for the last interval the adversary gives at a certain round. Hence, $c_{alg} \geq \sum_{t=1}^{\mu} \ell_t$.
Rearranging, we get that $c_{alg} \geq \|v\|_1 \cdot \Omega\left(\sqrt{\frac{\log \mu}{v_{avg}}}\right) = \Omega\left(T\cdot\sqrt{v_{avg}\log\mu}\right)$. Lastly, Theorem \ref{thm:CoveringAlgorithm1} states that $\OA\leq 4\cdot v_{avg}$ which concludes the proof.

\end{proof}

\subsection{Handling Inaccurate Predictions}\label{sec:inaccurate}
So far, we have assumed that predictions for either interval lengths or average load are accurate (or ``noiseless"). Naturally, some predictions are in practice prune to errors. In this section we examine the performance of Algorithm \ref{algcombinedbinpacking} in the presence of prediction errors. We show that the algorithm is robust to prediction errors with respect to both ($v_{avg}$ and interval lengths). Formally,  

\begin{thm} \label{thm:noise}
Suppose that Algorithm \ref{algcombinedbinpacking} is given predicted values $v'_{avg}$ and interval lengths $\ell'_I$ such that $v'_{avg}\in \left[\frac{v_{avg}}{1+\delta}, v_{avg} \cdot(1+\delta)\right]$, and each $\ell'_I \in \left[\frac{\ell_I}{1+\alpha},\ell_I \cdot(1+\lambda)\right]$. Then, its total cost is at most 
\begin{itemize}
    \item $(1+\alpha)\cdot(1+\lambda) \cdot (OPT+T\cdot O(\sqrt{(1+\delta) v_{avg}\log\mu}))$ (uniform size).
    \item $(1+\alpha)\cdot(1+\lambda)\cdot ( \HH_k \cdot OPT +Tk\cdot O(\sqrt{(1+\delta) v_{avg}\log\mu}))$ (non-uniform size).
\end{itemize}
\end{thm}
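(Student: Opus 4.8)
\textbf{Proof plan for Theorem \ref{thm:noise}.}
The plan is to show that Algorithm \ref{algcombinedbinpacking}, when fed the noisy values $v'_{avg}$ and $\ell'_I$, behaves exactly like the noiseless algorithm run on a \emph{rescaled} instance, and that the cost incurred on the true instance differs from the cost on that rescaled instance only by the multiplicative ``time-stretch'' factor $(1+\alpha)(1+\lambda)$. The first step is to compare the predicted instance to the true instance in two moves. First, replace every true length $\ell_I$ by $\ell'_I$; since $\ell'_I \le (1+\lambda)\ell_I$ and $\ell'_I \ge \ell_I/(1+\alpha)$, stretching each interval's life to $\ell'_I$ and then restricting attention to the true activity window changes the active-machine-time of \emph{any} fixed schedule by at most a factor $(1+\lambda)$ upward (intervals live longer), while a schedule that is feasible for the true instance is also feasible for the predicted one since shrinking sizes is not needed — only lengths change, and the packing decisions of Algorithm \ref{algcombinedbinpacking} depend on lengths only through the class assignment. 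The crucial observation is that $\mu$ is essentially unchanged (it moves by at most a bounded factor absorbed into the $O(\cdot)$ and the stretch constants), and the ratio $\mu' = \max_I \ell'_I / \min_I \ell'_I \le (1+\alpha)(1+\lambda)\mu$, so all $\log\mu$ terms in Theorem \ref{thm:combined-bin1} are affected only inside the big-$O$.

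The second step handles the $v_{avg}$ error. The algorithm's internal parameters $\epsilon_j = \min\{1, j/v'_{avg}\}$ and $q_j = \min\{1,\sqrt{v'_{avg}/j}\}$ are now computed with $v'_{avg}$ in place of $v_{avg}$. Re-running the proof of Theorem \ref{thm:combined-bin1} verbatim with $v'_{avg}$ substituted for $v_{avg}$ throughout, and with $\mu$ replaced by $\mu'$, gives a bound of the form $c\cdot \OPT' + T\cdot\max\{k,\ell\}\cdot O(\sqrt{v'_{avg}\log\mu'})$, where $\OPT'$ is the optimum of the predicted instance. Since $v'_{avg}\le (1+\delta)v_{avg}$, the additive term is $O(\sqrt{(1+\delta)v_{avg}\log\mu})$ as claimed; here we also use that the bound on $n$ from Lemma \ref{lem:bound-n} and the manipulations in \eqref{cost-alg2}--\eqref{cost-alg5} only use $\epsilon_j\le 1$, $q_j\le 1$, monotonicity of $\epsilon_j,q_j$ in $j$, and $v'_{avg}$ in the roles where it appears — none of which is broken by substituting the noisy value. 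The place to be slightly careful is Lemma \ref{lem:load-bounds}: part (1) is unchanged (it only uses the threshold $q_j$ the algorithm actually employs), and part (2)'s bound $\|v^{j2}\|_0 \le (1+e^{\epsilon_j})\sum_t\min\{1, 2v^j_t/q_j\}$ is still valid because the key inequality $|R_i|\ge \ell$ there comes from Lemma \ref{intersecting-intervals} applied with $\alpha = q_j$, and the ``$\ell$'' is now the predicted length $\ell'$ of the class, so $|R_i| \ge \ell'$ still holds.

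The third step converts the bound on the predicted instance back to the true instance. Any feasible schedule for the predicted instance $\II'$ (intervals of length $\ell'_I$) yields a feasible schedule for $\II$ by truncating each interval to its true window; this can only decrease cost, so $\OPT' \ge$ (cost of this induced schedule for $\II$) is the wrong direction — instead I use that $\OPT(\II') \le (1+\lambda)\cdot\text{(cost of an optimal schedule for }\II\text{ stretched to lengths }\ell'_I)$, and since stretching a schedule of $\II$ to the longer lengths $\ell'_I\le(1+\lambda)\ell_I$ multiplies its active-machine-time by at most $(1+\lambda)$, we get $\OPT(\II')\le(1+\lambda)\OPT(\II)$ when $\ell'_I\ge\ell_I$, and in general $\OPT(\II') \le (1+\alpha)(1+\lambda)\OPT(\II)$ after accounting for the possibility $\ell'_I<\ell_I$ (the true schedule, run on predicted-length intervals, may need intervals to persist up to a factor $(1+\alpha)(1+\lambda)$ longer relative to their predicted lengths). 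Symmetrically, the \emph{cost the algorithm actually pays on $\II$} is at most $(1+\alpha)(1+\lambda)$ times the cost it would pay were the true lengths equal to $\ell'_I$, because the algorithm's placement is fixed by the (class-based) predicted data and a machine that is open for a span $[s,e)$ under predicted lengths is open for a span of length at most $(1+\alpha)(1+\lambda)(e-s)$ under true lengths. Combining, the algorithm's true cost is at most $(1+\alpha)(1+\lambda)$ times the Theorem \ref{thm:combined-bin1} bound applied to $\II'$, which is exactly the claimed expression.

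The main obstacle I anticipate is making the ``time-stretch'' argument fully rigorous: one must carefully track, interval by interval, how replacing $\ell_I$ by $\ell'_I$ simultaneously (i) changes which class $c^j$ each interval lands in, (ii) changes the per-machine open span, and (iii) changes $\OPT$, and show that all three effects are controlled by the single factor $(1+\alpha)(1+\lambda)$ rather than compounding. The cleanest route is probably to first prove a standalone lemma: if two instances differ only in that each interval's length is scaled by a factor in $[1/(1+\alpha), 1+\lambda]$ (with identical sizes and start times), then the optimum cost and the cost of Algorithm \ref{algcombinedbinpacking} — run with the \emph{same} parameter values on both — each change by at most a factor $(1+\alpha)(1+\lambda)$, and $\mu$ changes by at most that factor; then Theorem \ref{thm:noise} follows by applying Theorem \ref{thm:combined-bin1} to the predicted instance and invoking this lemma twice (once for $\OPT$, once for the algorithm's cost), with the $v_{avg}$-error handled separately as in step two.
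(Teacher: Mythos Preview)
Your handling of the $v_{avg}$ error is in the right spirit but your length-error argument has a genuine gap. The core of your plan is to reduce to ``the noiseless algorithm run on the predicted instance $\II'$ (intervals with lengths $\ell'_I$)'', and then relate both $\OPT$ and the algorithm's cost on $\II$ and $\II'$ by a time-stretch factor. The trouble is that the algorithm's \emph{decisions} on the true instance are not determined by the predicted data alone. The class assignment uses $\ell'_I$, yes, but the branch test ``$v^j_t \le q_j$?'' reads the \emph{true} load of class-$j$ intervals at the true arrival time, which depends on the true lengths of all earlier intervals in that class; and the First-Fit placement likewise depends on which true intervals are still alive on each machine. So the algorithm on $\II$ with noisy predictions and the noiseless algorithm on $\II'$ can make different branch choices and different placements, and there is no reason their costs should be within a fixed multiplicative factor of each other. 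Your proposed standalone lemma at the end --- that perturbing all lengths by factors in $[1/(1+\alpha),1+\lambda]$ changes the algorithm's cost by at most $(1+\alpha)(1+\lambda)$ --- would have to absorb these cascading decision changes, and as stated it is almost certainly false.

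The paper avoids this reduction entirely and instead re-runs the analysis of Theorem~\ref{thm:combined-bin1} on the \emph{true} instance with the algorithm's \emph{actual} decisions, tracking where the length noise enters. It enters in exactly two places: (i) in Lemma~\ref{lem:load-bounds}(2), the inequality $|R_i|\ge \ell$ and the overshoot $\ell e^{\epsilon_j}$ now refer to true lengths while the class is defined by predicted lengths, so the bound picks up a factor $(1+\alpha)(1+\lambda)$; and (ii) in Lemma~\ref{lem-reduction}, the ratio of true lengths within a predicted class is at most $e^{\epsilon_j}(1+\alpha)(1+\lambda)$ rather than $e^{\epsilon_j}$, so the multiplicative term becomes $c(1+\alpha)(1+\lambda)e^{\epsilon_n}\OPT$. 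Everything else in the proof is untouched. For the $v_{avg}$ error, rather than ``substituting $v'_{avg}$ throughout'' (which forces you to redo the case analysis with two different averages playing two different roles), the paper pads the instance --- extending the horizon if $v'_{avg}<v_{avg}$, or appending unit fictitious intervals if $v'_{avg}>v_{avg}$ --- so that $v'_{avg}$ \emph{becomes} the true average and Theorem~\ref{thm:combined-bin1} applies verbatim. This sidesteps the bookkeeping you flagged and is what produces the clean $(1+\delta)$ inside the square root.
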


\begin{proof}
We first show how to handle the inaccuracy in predicting $v_{avg}$. We claim that for the purpose of analysis (with loss in performance) we can assume that our prediction $v'_{avg}$ is accurate. Indeed, if  $v'_{avg}<v_{avg}$, we can extend the time horizon with no additional requests from $T$ to $T'$ such that $T'\cdot v'_{avg} = T\cdot v_{avg}$. This makes the average load $v'_{avg}$ (and clearly does not change the costs of the algorithm and OPT). However, the additive cost increases to $T'\cdot O(\sqrt{v'_{avg}\log\mu})=T\cdot O(\sqrt{((1+\delta)v_{avg}\log\mu})$.
If $v'_{avg}> v_{avg}$, we can add fictitious $T(v_{avg}'-v_{avg})$ intervals of length 1 at the end of the time horizon. This increases (for the analysis) the cost of both the algorithm and OPT by this value. Again, this increases the actual load to $v'_{avg}$, and the additive term becomes $T\cdot O(\sqrt{v'_{avg}\log\mu})= T\cdot O(\sqrt{(1+\delta)v_{avg}\log\mu})$.

To analyse the errors in the length predictions we observe that in this case the analysis of Algorithm \ref{algcombinedbinpacking} is almost unchanged. There are only two modifications to be made. First, we generalize Lemma \ref{lem:load-bounds} as follows. For any $j$,
$$\|v^{j2}\|_0 \leq (1+\alpha)(1+\lambda)(1+e^{\epsilon_j})|\{t \ | \ v^j_t \geq \frac{q_j}{2}\}|\leq  (1+\alpha)(1+\lambda)(1+e^{\epsilon_j})\sum_{t=1}^T\min\{1,\frac{2v^{j}_t}{q_j}\}.$$
Second, the maximum length ratio of all intervals scheduled by each copy of Algorithm \ref{dynamic-alg} grows by the length prediction error to at most $e^{\epsilon_n}\cdot(1+\alpha)(1+\lambda)$. Thus, the total scheduling cost is at most:

$$ c(1+\alpha)(1+\lambda)\cdot e^{\epsilon_n}\cdot \OPT(\II') +\max\{k,\ell\}\cdot\sum_{j=1 }^{n}\|v^{j2}\|_0. $$
\end{proof}
Note that as a special case of the above theorem, the algorithm pays a rather small multiplicative factor $(1 + \delta)$ when the only inaccurate prediction is in the average load. This is significant in other domains of interest in which the interval lengths are precisely known upon arrival (e.g., establishing virtual network connections, see \cite{azar2017tight}). Unfortunately, in the general case, Theorem \ref{thm:noise} implies that the algorithm pays (at the worst case) an additional cost which is proportional to the noisiest length prediction. This result is tight as suggested by the lemma below.

\begin{lemma}
Given length prediction $\ell'_I$ such that $\ell'_I\in\left[\frac{\ell_I}{1+\alpha},\ell_I\cdot(1+\lambda)\right]$, the cost of any online algorithm is at least $(1+\alpha)(1+\lambda)OPT$.
\end{lemma}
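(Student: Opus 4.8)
The plan is to build, for every $\epsilon>0$, an instance on which any online algorithm pays at least $\bigl((1+\alpha)(1+\lambda)-\epsilon\bigr)\cdot\OPT$; taking $\epsilon\to 0$ then yields the lemma. Write $c=(1+\alpha)(1+\lambda)$. Fix a large integer $g$ and an integer $m$ divisible by $g$. At time $0$ the adversary releases $gm$ intervals, each of size $\tfrac1g$ and each with \emph{predicted} length $\ell'=1+\lambda$ (so this is a legal instance in both the uniform‑size model, with parameter $g$, and the non‑uniform model). Since every machine has capacity $1$ and every interval has size $\tfrac1g$, any solution — in particular the online algorithm's placement, which is committed at time $0$ — uses at least $m$ machines. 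After observing this placement, the adversary fixes the true lengths \emph{adaptively}: it designates one interval on each of $m$ distinct non‑empty machines as \emph{long} with true length $L:=c$, and every remaining interval as \emph{short} with true length $s:=1$. Both values lie in the admissible band $[\ell'/(1+\lambda),\,\ell'(1+\alpha)]=[1,c]$, all true lengths are at least the minimum length $1$, and the length ratio is $\mu=c<\infty$, so the realized instance is consistent with the issued predictions.

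Next I would compute the two costs. All intervals start at time $0$, so a machine is active precisely until its longest interval departs; hence every machine carrying a long interval is active for exactly $L=c$, and the algorithm's cost is at least $mL=mc$. For $\OPT$ I would exhibit the ``group‑by‑length'' schedule: pack the $m$ long intervals $g$ to a machine ($m/g$ machines, each active $c$) and the $(g-1)m$ short intervals $g$ to a machine ($(g-1)m/g$ machines, each active $1$), for total cost $\tfrac mg\,(c+g-1)$. This schedule is optimal: any machine holds at most $g$ intervals, so the largest length on it is at least a $\tfrac1g$‑fraction of the total work on it; summing over machines, any schedule costs at least $\tfrac1g\,\|v\|_1=\tfrac1g\,m(c+g-1)$, which the group‑by‑length schedule attains. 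Therefore $\OPT=\tfrac mg\,(c+g-1)$ and
\[
\frac{\mathrm{ALG}}{\OPT}\ \ge\ \frac{mc}{\tfrac mg\,(c+g-1)}\ =\ \frac{gc}{\,g-1+c\,}\ \xrightarrow[\ g\to\infty\ ]{}\ c=(1+\alpha)(1+\lambda).
\]

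The construction is otherwise routine; the points that need care are the following. First, the adversary is allowed to set the true lengths \emph{after} the online algorithm has placed all intervals — this is legitimate because at placement time the algorithm sees only the common prediction $\ell'=1+\lambda$, and a true length is ``revealed'' only by the interval's departure in the future. Second, an algorithm that opens more than $m$ machines does not escape the bound: the adversary still places its $m$ long intervals on $m$ distinct machines, the algorithm still pays at least $mc$, and $\OPT$ is unchanged, so the ratio only improves. Third, the statement is delivered in the limiting sense — taking $g\mid m$ handles the rounding exactly, and letting $g\to\infty$ drives the ratio arbitrarily close to $(1+\alpha)(1+\lambda)$ (in the uniform‑size model this means the hard instances have large $g$). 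I do not expect any genuine obstacle beyond presenting these quantifier/rounding points cleanly.
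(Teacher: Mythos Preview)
Your proposal is correct and follows essentially the same construction as the paper: release $gm$ size-$1/g$ intervals at time $0$ with the common prediction $1+\lambda$, then adaptively make one interval per machine long (length $(1+\alpha)(1+\lambda)$) and the rest short (length $1$), and let the bin parameter $g$ tend to infinity. The paper uses the single parameter $k$ (your $g=m=k$), giving the same ratio $\tfrac{ck}{c+k-1}\to c$; your extra parameter $m$ cancels, and your explicit $\|v\|_1$ lower bound for $\OPT$ just certifies the value the paper states directly.
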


\begin{proof}
At time $0$ the adversary adds $k^2$ intervals with predicted length $(1+\lambda)$ and width $\frac{1}{k}$. All these intervals are located on at least $k$ machines. The true length of each interval is as follows: on each machine there is exactly one interval of length $(1+\alpha)(1+\lambda)$ and the length of the rest of the intervals is $1$.

The total cost of the algorithm is at least $(1+\alpha)(1+\lambda)k$ as there are at least $k$ active machines. The cost of the optimal solution is $(1+\alpha)(1+\lambda) + k - 1$. For $k$ large enough compared to the prediction error of the length, we get that the ratio between the cost of any algorithm and the optimal solution is,
$$ \frac{(1+\alpha)(1+\lambda)k}{(1+\alpha)(1+\lambda) + k - 1}\rightarrow (1+\alpha)(1+\lambda). $$
\end{proof}

\section{Online Algorithm Using Lifetime and Load Vector Predictions} \label{full_predictions}

In this section we design an online algorithm with an extra knowledge of the future load. In particular, at each time $t$ the algorithm is given the value $v_{t'}$ for all $t'\in [t,t+\mu)$.
Without loss of generality (by refining the discretization of the time steps), we assume that at most one interval arrives at any time $t$. 

Algorithm \ref{alg:single-cover-online} shows how to construct a single cover, similar in nature to the covers offline Algorithm \ref{CoveringAlgorithm} is creating, but in an online fashion. To this end, it takes as an input a load vector $v'$ which might be an \textbf{overestimate load prediction}\footnote{This can be an overestimate due to prediction errors, however, the algorithm itself later produces overestimates by design (and not due to errors), which are used recursively in our analysis.} of the real load $v$, i.e. $v'\geq v$, and let $\Delta = v'-v \geq 0$. When an interval arrives, Algorithm \ref{alg:single-cover-online} decides whether to accept it or reject it, with accepted intervals becoming part of the cover. Algorithm \ref{algorithm-online-cover} then uses Algorithm \ref{alg:single-cover-online} to create a set of covers and schedule all intervals to machines.

\begin{algorithm}[h]
\SetAlgoLined
\DontPrintSemicolon
Let $v'\geq v$ be an overestimate load prediction given to the algorithm.\;
When interval $I$ arrives at time $t'$.
Let $v^a(t'), v^r(t')$ be the load vector of the intervals that arrived prior to time $t'$, and were accepted or rejected respectively. \;
Accept $I$ if there is a time $t\in I$ such that,
\[v'_t-v^r_t(t')\leq \left\{\begin{array}{ll} 1 & \mbox{in the uniform size case}\\
\frac{1}{2} & \mbox{in the non-uniform size case}\end{array}\right.\] 
Otherwise, reject $I$.
\caption{Online covering with overestimate predictions\ }
\label{alg:single-cover-online}
\end{algorithm}

\begin{lemma}\label{lem-online-load1-unified}
Let $v^a$ be the intervals that Algorithm \ref{alg:single-cover-online} accepted. Then, for each time $t$:
\begin{align*}
    \min\{(1-\Delta_t)^+, v_t\} & \leq v^a_t \leq 2 & \mbox{ for the uniform size case}\\
    \min\{(\frac{1}{2}-\beta-\Delta_t)^+, v_t\} & \leq v^a_t \leq 1 & \mbox{ for the non-uniform size case}
\end{align*}
\end{lemma}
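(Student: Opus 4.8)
The plan is to analyze separately the upper bound and the lower bound on $v^a_t$, treating the uniform and non-uniform cases in parallel with the threshold $\tau$ equal to $1$ or $\frac{1}{2}$ respectively. For the \textbf{upper bound}, I would argue at a fixed time $t$ by looking at the last interval $I$ accepted with $t\in I$ (if none, then $v^a_t=0$ and there is nothing to prove). By the acceptance rule there was a time $s\in I$ with $v'_s - v^r_s(t') \le \tau$. The key observation is that the accepted load $v^a$ restricted to intervals active at the moment $I$ is accepted forms an intersecting family at time $t$ (they are all alive at $t$), so I can bound $v^a_t$ by the accepted load at the ``witness'' time $s$, plus the size of $I$ itself, plus a bound coming from $v'_s = v_s + \Delta_s \ge v^a_s + v^r_s$. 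Concretely, at time $s$ we have $v^a_s \le v'_s - v^r_s \le \tau$ at the instant just before $I$ is accepted; adding $I$ (size $\le 1$ in the uniform case, where in fact all sizes are $\frac1g \le 1$; size $\le \beta < \tfrac12$ in the non-uniform case) and using that $v^a$ is non-decreasing up to the peak of the intersecting family gives $v^a_t \le \tau + 1 = 2$ (uniform) or $v^a_t \le \tau + \beta \le 1$ (non-uniform, using $\beta\le\frac12$). I should be careful that ``$v^a$ non-decreasing toward $t$'' is exactly the structure exploited in Lemma~\ref{intersecting-intervals}; the intervals accepted and active at $t$ all contain $t$, so their start times are $\le t$, and the accepted load they contribute only grows as time moves from any such start time toward $t$.

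For the \textbf{lower bound}, I would prove the contrapositive flavored statement: if $v^a_t < \min\{(\tau-\beta-\Delta_t)^+, v_t\}$ (with $\beta=0$ in the uniform case), then some interval active at $t$ that was rejected should have been accepted at time $t$ — a contradiction. Suppose $I$ is an interval with $t\in I$ that was rejected at its arrival time $t'$. Rejection means that for \emph{every} $s\in I$, in particular for $s=t$, we had $v'_t - v^r_t(t') > \tau$, i.e. $v^r_t(t') > v'_t - \tau = v_t + \Delta_t - \tau$. Now $v^r_t(t')$ is the rejected load at $t$ counting only intervals that arrived before $t'$; it is at most the final rejected load $v_t - v^a_t$ minus the contribution of $I$ itself — but more usefully, I want to relate the \emph{final} accepted load $v^a_t$ to $v_t$. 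If $v^a_t$ were small, then a large fraction of $v_t$ is rejected load; but each rejected interval active at $t$ was rejected because at \emph{its} arrival the rejected-so-far load at some point of it (hence, since they're all alive at $t$, potentially at $t$) already exceeded $\tau$. The clean way to phrase it: consider the first interval $I^\star$ active at $t$ that got rejected; at the moment it was rejected, the rejected load at $t$ from strictly earlier intervals was $< $ (something small), so the acceptance test at $t$ reads $v'_t - v^r_t < \tau$ would have held unless $v^a_t$ together with the prior rejections already made $v^r_t \ge v'_t - \tau$; pushing this through shows $v^a_t \ge v_t - (\text{rejected load}) $ and the rejected load at $t$ is at most $\tau + \beta$ more than... — this is the delicate bookkeeping. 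Essentially: at every point where a rejection of a $t$-spanning interval occurs, the rejected load at $t$ was already $> v'_t - \tau$, so the \emph{previously} rejected load plus $v^a_t$ (all intervals, accepted or rejected-before, active at $t$) forces $v^a_t \ge (v'_t - \tau) - (\text{rejected so far})$, and since $\Delta_t = v'_t - v_t$, rearranging yields $v^a_t \ge \min\{v_t,\, \tau - \Delta_t - \beta\}$ (the $\min$ with $v_t$ handling the trivial case where no $t$-spanning interval is ever rejected, so $v^a_t = v_t$). The $-\beta$ term accounts for the granularity of a single interval's width in the non-uniform case; in the uniform case sizes are $\frac1g$, and one replays the same argument noting the rejected-load-at-$t$ can exceed the threshold by at most one item's worth, which is absorbed since we only claim $\min\{(1-\Delta_t)^+, v_t\}$ and $(\cdot)^+$ truncates at $0$.

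I expect the \textbf{main obstacle} to be the lower-bound direction, specifically the careful accounting of ``which rejections happened before which arrivals'' — the quantities $v^r_t(t')$ are time-indexed by the arrival time $t'$ of the interval being tested, and one must make sure that summing up the rejected intervals active at $t$ does not double count and that the ``witness'' point in the rejection test can be taken to be $t$ itself (it need not be — rejection only guarantees the test fails at \emph{some} point of the rejected interval). The resolution is to observe that since all these intervals contain $t$, and the rejected load vector restricted to $t$-spanning intervals is unimodal with peak covering $t$ (again Lemma~\ref{intersecting-intervals}-style reasoning), the test failing somewhere on the interval, combined with monotonicity toward $t$, lets us conclude the relevant inequality holds at $t$ for the purpose of the bound. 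I would also double-check the boundary/degenerate cases: $\Delta_t$ large enough that $(\tau - \beta - \Delta_t)^+ = 0$ makes the lower bound vacuous, and the case where $v'_t$ itself is small (then $v_t$ is the binding term in the $\min$). Once these are pinned down, the algebra is routine rearrangement using $v' = v + \Delta$.
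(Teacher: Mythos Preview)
Your plan has two concrete gaps, one in each direction.

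\textbf{Upper bound.} Taking the \emph{last} accepted interval $I$ (with arrival time $t'$ and acceptance witness $s\in I$) does give $v^a_s(t')\le\tau$, but you then need $v^a_t - w_I \le v^a_s(t')$. When $s\le t$ this is fine: every interval in $B\setminus\{I\}$ (the accepted $t$-spanning intervals other than $I$) started before $t'\le s$ and ends after $t\ge s$, so they are all alive at $s$ and their load at $s$ equals their load at $t$. But when $s>t$ the unimodality of the load of $B$ (peaking at $t$) tells you load at $s$ is \emph{at most} load at $t$, which is the wrong direction; some intervals of $B\setminus\{I\}$ may have died between $t$ and $s$, so $v^a_s(t')$ can be strictly smaller than $v^a_t-w_I$. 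Nothing forces the witness to lie in $[t',t]$. The paper avoids this by arguing the upper bound by contradiction through Lemma~\ref{intersecting-intervals}: if the accepted load $v^a_t$ exceeded the bound, that lemma produces an interval $I$ in the accepted set with $v^a_s>\tau$ for \emph{every} $s\in I$; since $v'_s - v^r_s(s_I) \ge v_s - v^r_s \ge v^a_s > \tau$ for all $s\in I$, the algorithm would have rejected $I$.

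\textbf{Lower bound.} You correctly note that rejection means $v'_s - v^r_s(t')>\tau$ for \emph{every} $s\in I$ (so you may indeed take $s=t$; your later worry that ``rejection only guarantees the test fails at some point'' is a slip---that describes acceptance, not rejection). But you then write $v^r_t(t')>v'_t-\tau$, which has the sign reversed: the correct inequality is $v^r_t(t')<v'_t-\tau$. More importantly, considering the \emph{first} rejected $t$-spanning interval, as you propose, does not yield the needed identity, because many $t$-spanning intervals may still be unclassified at that moment. The paper instead takes the \emph{last} rejected $t$-spanning interval $I$; then every other $t$-spanning interval has already been accepted or rejected, so $v^r_t(s_I)+v^a_t+w_I=v_t$ exactly. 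The acceptance test at $t$ becomes $v'_t - v^r_t(s_I) = \Delta_t + v^a_t + w_I$, and assuming $v^a_t$ is below the claimed lower bound forces this to be $\le\tau$ (using $w_I\le\beta$ in the non-uniform case, and integrality of loads in multiples of $1/g$ in the uniform case), contradicting rejection of $I$.
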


\begin{proof}  We first show that the upper bounds hold, and then provide a proof for the lower bounds.

\paragraph*{Upper bounds:}
Suppose there is a time $t$ where $v_t^a$ exceeds the upper bound. We will show that the algorithm cannot have accepted all intervals in $v^a$. Consider the intervals $I$ that belong to $v^a$ and which are active at time $t$, i.e., $t \in I$. Let that set of intervals be denoted by $A$, and its load with $v''$. Since all intervals in $A$ are active at time $t$, we can apply Lemma \ref{intersecting-intervals}. 

\emph{For the uniform case:} Using Lemma \ref{intersecting-intervals} with $v=2$, we can see that there is at least one interval that observes load strictly more than $1$ for its whole duration. 
Let $I$ be the first such interval, and let $s_I$ be its arrival time.
The intervals in $v^a$ are never rejected by the algorithm and hence, $v_{t}^r(s_I) \leq v_t-v_t^a$. Hence, upon arrival of $I$, for any time $t \in I$ $v'_t - v^r_{t}(s_I) \geq v_t - v^r_{t}(s_I) \geq v_t^a > 1$, and $I$ will not be accepted by the algorithm.

\emph{For the non-uniform case:} Similarly, using Lemma \ref{intersecting-intervals} with $v=1$, we see that there is at least one interval that observes load greater than $\frac{1}{2}$ for its whole duration. Let $I$ be the first such interval, and let $s_I$ be its arrival time. The intervals in $v^a$ are never rejected by the algorithm and hence, $v_{t}^r(s_I) \leq v_t-v_t^a$. Hence, upon arrival of $I$, for any time $t \in I$ $v'_t - v^r_{t}(s_I) \geq v_t - v^r_{t}(s_I) \geq v_t^a > \frac{1}{2}$, and $I$ will not be accepted by the algorithm. 

\paragraph*{Lower bounds:}
This proof is also by contradiction. Let $t$ be a time for which $v^a_t$ is smaller than the lower bound. Let $I$ be the last arrived interval that is active at time $t$ and which was rejected by the algorithm. Upon $I$'s arrival at time $s_I$, the current load vector of rejected intervals at time $t$ is $v_t^r(s_I)$ (does not yet include $I$), and so, $v_t^r(s_I) + v_t^a + w_I = v_t$. Upon $I$'s arrival, the algorithm considers the quantity $v_t' - v_t^r(s_I)$.

\emph{For the uniform case:} The assumption that the lower bound is violated is translated to $v^a_t < \min\{(1-\Delta_t)^+, v_t\}$.  If $\Delta_t \geq 1$, this leads to an obvious contradiction, since it would imply that $v_t^a < 0$. Therefore, we focus on the case $\Delta_t < 1$, so $(1-\Delta_t)^+ = 1 - \Delta_t$. This gives: $v_t' - v_t^r(s_I) = v_t'-v_t + v_t^a +w_I = \Delta_t + v_t^a + w_I < \Delta_t + \min\{1-\Delta_t, v_t\} + w_I = \min\{1, v_t + \Delta_t\} + w_I \leq 1 + w_I$. Since all intervals have size $1/\uni$ for some integer $\uni$, both the overestimated load $v_t'$ and the rejected load $v_t^r(s_I)$ are multiples of $1/\uni$ (if $v_t'$ is not, it can be rounded down to the closest $1/\uni$ multiple, since we know the extra load does not correspond to some interval). Therefore, since $v_t' - v_t^r(s_I) < 1 + w_I$ and both loads are multiples of $1/\uni$, we have $v_t' - v_t^r(s_I) \leq \frac{\uni-1}{\uni} + \frac{1}{\uni} = 1$. This shows that $I$ is accepted by the algorithm and leads to a contradiction proving that $v^a_t \geq  \min\{(1-\Delta_t)^+, v_t\}$ for each time $t$.

\emph{For the non-uniform case:}
We assumed that $v^a_t < \min\{(\frac{1}{2}-\beta-\Delta_t)^+, v_t\}$. If $\Delta_t \geq \frac{1}{2} - \beta$, this leads to an obvious contradiction, since it would imply that $v_t^a < 0$. Therefore, we focus on the case $\Delta_t < \frac{1}{2} - \beta$, so $(\frac{1}{2}-\beta-\Delta_t)^+ = \frac{1}{2}-\beta-\Delta_t$. We then have $v_t' - v_t^r(s_I) = v_t'-v_t + v_t^a +w_I = \Delta_t + v_t^a + w_I < \Delta_t + \min\{\frac{1}{2}-\beta-\Delta_t, v_t\} + w_I = \min\{\frac{1}{2}-\beta, v_t + \Delta_t\} + w_I \leq \frac{1}{2}-\beta + \beta = \frac{1}{2}$. Therefore, $I$ is accepted by the algorithm leading to a contradiction that proves that $v^a_t \geq  \min\{(\frac{1}{2}-\beta-\Delta_t)^+, v_t\}$ for each time $t$.
\end{proof}

We now present the Online Covering Algorithm. This algorithm uses copies of Algorithm \ref{alg:single-cover-online} to create a set of covers online and schedule all intervals to machines. 

\begin{algorithm} 
\SetAlgoLined
\DontPrintSemicolon
            {\bf In the non-uniform size case:} Schedule each interval with size greater than $\frac{1}{4}$ on a separate machine.\;
Run copies $i=1,2, \ldots$ of the online covering algorithm with overestimate predictions (Algorithm \ref{alg:single-cover-online}). The $i$th copy receives an overestimate of
\[v'^i_t= \left\{\begin{array}{ll} (v_t-(i-1))^+ & \mbox{in the uniform size case}\\
(v_t-(i-1) \cdot (\frac{1}{2}-\beta))^+ & \mbox{in the non-uniform size case}\end{array}\right.\] 
The $i$th copy of the algorithm receives as its input all intervals that are rejected from copies $1,2, \ldots, i-1$.\;
Schedule all intervals accepted by copy $i$ using Algorithm \ref{FirstFitAlg} (First Fit).\;
  \caption{Online Covering Algorithm (with load vector predictions)}
   \label{algorithm-online-cover}
\end{algorithm}

\begin{lemma} \label{induction-lemma-unified}
Let $v^a_{t,i}$ be the total load of accepted intervals of copy $i$ at time $t$. Let $v^w$ and $v^n$ be the load vector of the intervals that have size more than $\frac{1}{4}$ and at most $\frac{1}{4}$ respectively, and let $\beta_n$ be the largest size of intervals in $v^n$. For every time $t$, and $j$:

\begin{align*}
    \sum_{i=1}^{j}v^a_{t,i} & \geq \min\{j ,v_t\} & & \mbox{ for the uniform size case}\\
    \sum_{i=1}^{j}v^a_{t,i} & \geq \min\{(j \cdot (\frac{1}{2}-\beta_n)-v_t^w)^+,v^n_t\} & & \mbox{ for the non-uniform size case}
\end{align*}
\end{lemma}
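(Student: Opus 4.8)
The plan is to prove the two displayed inequalities simultaneously by induction on $j$, the statement being quantified over all times $t$ at once; the engine of the induction is a single application of Lemma~\ref{lem-online-load1-unified} to each successive copy of Algorithm~\ref{alg:single-cover-online}. Fix a time $t$ and write $S_j := \sum_{i=1}^{j} v^a_{t,i}$ for the accumulated accepted load of the first $j$ copies at $t$. Since an interval accepted by one copy is never passed on, the sets of intervals accepted by distinct copies are disjoint, so $S_j$ is non-decreasing in $j$ and bounded above by $v_t$ in the uniform case and by $v^n_t$ in the non-uniform case. The base case $j=0$ is immediate: $S_0 = 0$ and the right-hand side vanishes because of the $(\cdot)^+$ (using $v^w_t \ge 0$ in the non-uniform case).

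For the inductive step I would fix $i \ge 1$, assume the bound for $j=i-1$ (at every time), and look at copy $i$, which is fed the intervals rejected by copies $1,\dots,i-1$. Their load at time $t$ is $\tilde v_t := v_t - S_{i-1}$ in the uniform case and $\tilde v_t := v^n_t - S_{i-1}$ in the non-uniform case, while the copy is handed the prediction $\hat v_t := (v_t-(i-1))^+$, resp.\ $\hat v_t := (v_t-(i-1)\gamma)^+$, where $\gamma := \tfrac12-\beta_n$ (this is the value that the parameter $\beta$ takes in Algorithm~\ref{algorithm-online-cover} and in Lemma~\ref{lem-online-load1-unified} once the intervals of size above $\tfrac14$ have been removed; replacing the true maximal size by the larger $\beta_n$ only weakens the lower bound, which is harmless). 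The first key point is that $\hat v$ really is a valid overestimate of $\tilde v$: writing $v_t = v^w_t + v^n_t$ in the non-uniform case, the inequality $\hat v_t \ge \tilde v_t$ unwinds \emph{exactly} to the induction hypothesis $S_{i-1} \ge \min\{((i-1)\gamma - v^w_t)^+,\,v^n_t\}$ (and to $S_{i-1} \ge \min\{i-1, v_t\}$ in the uniform case). So Lemma~\ref{lem-online-load1-unified} applies to copy $i$ with slack $\Delta_t := \hat v_t - \tilde v_t$, which a one-line computation identifies as $\Delta_t = S_{i-1} - (i-1)$ (uniform), resp.\ $\Delta_t = S_{i-1} - \bigl((i-1)\gamma - v^w_t\bigr)$ (non-uniform).

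I would then split into two cases. If $S_{i-1}$ has already reached its ceiling ($v_t$, resp.\ $v^n_t$), then $S_i \ge S_{i-1}$ alone gives the $j=i$ bound. Otherwise $S_{i-1}$ is strictly below $v_t$ (resp.\ $v^n_t$), and the induction hypothesis then forces $S_{i-1} \ge i-1$ (resp.\ $S_{i-1} \ge \bigl((i-1)\gamma - v^w_t\bigr)^+$), hence $\Delta_t \ge 0$ as required. Substituting $\Delta_t$ into the lower bound of Lemma~\ref{lem-online-load1-unified} and simplifying $1-\Delta_t$ (resp.\ $\gamma - \Delta_t$) gives
\[
v^a_{t,i} \ \ge\ \min\bigl\{\,(i - S_{i-1})^+,\ v_t - S_{i-1}\,\bigr\}\quad\text{(uniform)},
\]
and $v^a_{t,i} \ge \min\{(i\gamma - v^w_t - S_{i-1})^+,\, v^n_t - S_{i-1}\}$ in the non-uniform case. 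Adding $S_{i-1}$ to both sides and using the elementary identity $x + \min\{(a-x)^+, b-x\} = \min\{\max\{a,x\}, b\} \ge \min\{a^+, b\}$, which is valid because $S_{i-1} \ge 0$, yields $S_i \ge \min\{i, v_t\}$, resp.\ $S_i \ge \min\{(i\gamma - v^w_t)^+, v^n_t\}$ --- exactly the claim for $j=i$.

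I expect the only real work to lie in the bookkeeping of the $(\cdot)^+$ truncations in the middle paragraph: one must check that $\hat v$ dominates $\tilde v$ in precisely the regime where the induction hypothesis is available, and that the resulting slack $\Delta_t$ is non-negative there --- which is exactly why the case split on whether $S_{i-1}$ has saturated is the crux. The non-uniform analysis additionally carries the set-aside ``big-interval'' load $v^w_t$ through every step, but since $v^w_t$ enters only additively inside the $(\cdot)^+$'s and the big intervals sit on their own machines (so they never interact with the First-Fit packing of the covers), the argument is structurally identical to the uniform one.
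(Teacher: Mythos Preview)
Your proposal is correct and follows essentially the same inductive argument as the paper: induction on $j$, applying Lemma~\ref{lem-online-load1-unified} to each successive copy, with the key case split on whether the accumulated accepted load $S_{i-1}$ has already saturated at $v_t$ (resp.\ $v^n_t$), and the identification of $\Delta_t$ as $S_{i-1}-(i-1)$ (resp.\ $S_{i-1}-\bigl((i-1)(\tfrac12-\beta_n)-v^w_t\bigr)$) in the non-saturated case. The paper starts the induction at $j=1$ rather than $j=0$ and is terser in the final algebra where you invoke the identity $x+\min\{(a-x)^+,b-x\}=\min\{\max\{a,x\},b\}$, but the structure and every substantive step coincide.
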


\begin{proof}
     The proof is by induction on $j$.
For the first copy, $j=1$, we have $\Delta_t=0$ for the uniform size case and $\Delta_t = v_t^w$ for the non-uniform size case, where $v_t^w$ is the total width of intervals of sizes larger than $1/4$. For $j=1$ the claim follows by Lemma \ref{lem-online-load1-unified}. For time $t$, let $v_1, v_2, \ldots, v_{j-1}$ be the load accepted by copies $1,2, \ldots, j-1$. 

\emph{For the uniform case:}
By our guarantee, $\sum_{i=1}^{j-1}v_i \geq \min\{j-1, v_t\}$. We assume that 
$v_t > \sum_{i=1}^{j-1}v_i$, and $\sum_{i=1}^{j-1}v_i < j$, otherwise we are done. For the last copy the actual load at time $t$ is $v''= v_t-\sum_{i=1}^{j-1}v_i$. Hence, $\Delta_t = \sum_{i=1}^{j-1}v_i- (j-1)$ (which is greater than 0 by the induction hypothesis), and it is guaranteed to accept at least $\min\{(1-\Delta_t)^+, v''_t\}$ load by Lemma \ref{lem-online-load1-unified}. Therefore, the total load of accepted intervals of the first $j$ copies is at least:

\[\sum_{i=1}^{j-1}v_i + \min\{(1-\Delta_t)^+, v''_t\} = 
\sum_{i=1}^{j-1}v_i + \min\{(j-\sum_{i=1}^{j-1}v_i)^+, v_t-\sum_{i=1}^{j-1}v_i\} = \min\{j,v_t\}\]

\emph{For the non-uniform case:} Similarly to the uniform case, by our guarantee, $\sum_{i=1}^{j-1}v_i \geq \min\{((j-1) \cdot (\frac{1}{2} - \beta_n)-v_t^w)^+, v^n_t\}$. We assume that 
$v^n_t > \sum_{i=1}^{j-1}v_i$, and $\sum_{i=1}^{j-1}v_i < j\cdot (\frac{1}{2}-\beta_n)-v_t^w$, otherwise we are done. For the last copy the actual load at time $t$ is $v''= v^n_t-\sum_{i=1}^{j-1}v_i$. Hence, $\Delta_t = \sum_{i=1}^{j-1}v_i- (j-1)\cdot(\frac{1}{2}-\beta_n)+v_t^w$, and it is guaranteed to accept at least $\min\{(\frac{1}{2}-\beta_n-\Delta_t)^+, v''_t\}$ load by Lemma \ref{lem-online-load1-unified}. Therefore, the total load of accepted intervals of the first $j$ copies is at least:
\begin{align*}
 \sum_{i=1}^{j-1}v_i + \min\{(\frac{1}{2}- \beta_n-\Delta_t)^+, v''_t\} & = 
\sum_{i=1}^{j-1}v_i + \min\{(j\cdot(\frac{1}{2}-\beta_n)-v_t^w-\sum_{i=1}^{j-1}v_i)^+, v^n_t-\sum_{i=1}^{j-1}v_i\} \\
& \geq \min\{(j\cdot(\frac{1}{2}-\beta_n)-v_t^w)^+,v^n_t\} 
\end{align*}
\end{proof}

\begin{thm} \label{thm:online-full-pred}
Given an instance of Clairvoyant Bin Packing with load vector predictions, the total cost of Algorithm \ref{algorithm-online-cover} is at most $2 \cdot \|v\|_1$, for the uniform size case, $8 \cdot \|v\|_1$ in the non-uniform case.
If $\beta\leq \frac{1}{4}$ the total cost is at most $\sum_{t}\lceil\frac{2v_t}{1-2\beta}\rceil$.
\end{thm}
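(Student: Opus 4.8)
The plan is to bound the total cost of Algorithm~\ref{algorithm-online-cover} by accounting separately for (i)~the machines used for ``wide'' intervals (size $>\frac14$) in the non-uniform case, and (ii)~the machines used by the First-Fit scheduler on the intervals accepted by each copy $i$ of Algorithm~\ref{alg:single-cover-online}. The key structural input is Lemma~\ref{lem-online-load1-unified} together with its inductive strengthening Lemma~\ref{induction-lemma-unified}: the former gives, for a single copy, that the accepted load $v^a_t$ never exceeds $2$ (uniform) or $1$ (non-uniform) at any time, so First-Fit on copy $i$'s accepted intervals uses at most a constant number of machines per time step (two in the uniform case, four in the non-uniform case after wide intervals are stripped, since accepted sizes there are $\le\frac14$ and total $\le1$); the latter guarantees that after processing copies $1,\dots,j$ the accumulated accepted load ``covers'' $v_t$ once $j$ is large enough, which is what bounds the \emph{number} of copies that are active at time $t$.

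First I would handle the uniform case. By Lemma~\ref{induction-lemma-unified}, $\sum_{i=1}^{j} v^a_{t,i} \ge \min\{j, v_t\}$, so as soon as $j \ge v_t$ the accepted loads of the first $j$ copies already sum to $v_t$, meaning copies $i > \lceil v_t\rceil$ accept nothing active at time $t$; hence at most $\lceil v_t \rceil$ copies are ``active'' at $t$. Each active copy's accepted load at $t$ is at most $2$ (upper bound of Lemma~\ref{lem-online-load1-unified}), and since all sizes are $1/\uni$, First-Fit packs load at most $2$ into at most $2$ machines at that time (actually the accepted intervals of one copy form a $[1,2]$-type cover and First-Fit uses $\le 2$ machines). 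A more careful accounting: copy $i$ is active at $t$ only if $v_t \ge i$ (roughly), so summing $2$ machines over active copies gives $\approx \sum_i 2\cdot\mathds{1}_{v_t \ge i}$; but that would give $2v_t$ per copy-count, so I need the sharper statement that the \emph{total} accepted load across all copies at time $t$ equals $v_t$ (not $2v_t$), which follows because every interval active at $t$ is accepted by exactly one copy. Then the total machine-time is $\sum_t (\text{machines used at }t)$, and First-Fit across all copies at time $t$ uses at most $2\lceil v_t\rceil / 1$... — here I must be careful: I will instead bound, for each copy $i$, its total First-Fit cost over time by $2\|v^{a,i}\|_1$ via Theorem~\ref{thm:CoveringAlgorithm1} (applied to the cover that copy $i$ produces), and then use $\sum_i \|v^{a,i}\|_1 = \|v\|_1$ since the accepted loads partition the total load. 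This yields total cost $\le 2\|v\|_1$.

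For the non-uniform case the structure is the same but with two extra losses. Wide intervals (size $>\frac14$) are scheduled one per machine; since at each time at most $4$ such intervals can be simultaneously active (total size $\le 1$ is false — rather, a machine holds one, so the count is $v^w_t \le $ number of wide intervals active, each contributing $>\frac14$, so at most... actually up to $v_t$ of them, but their machine-time is exactly $\sum_I \ell_I w_I^{\text{-count}}$); I would bound the wide-interval cost by $4\|v^w\|_1$ using that each active wide interval at time $t$ contributes more than $\frac14$ to $v_t$, so the number of machines for wide intervals at time $t$ is at most $4 v_t$ — no, at most $4 v^w_t \le 4 v_t$, giving machine-time $\le 4\|v\|_1$. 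For the narrow intervals, Lemma~\ref{induction-lemma-unified} gives coverage of $v^n_t$ once $j(\frac12-\beta_n) - v^w_t \ge v^n_t$, i.e., after $O(v_t)$ copies, and each copy's accepted narrow load is $\le 1$ with all sizes $\le\frac14$, so First-Fit uses at most $4$ machines per copy at each time (or, via Theorem~\ref{thm:CoveringAlgorithm1} with $\beta \le \frac14$, total cost $\le \sum_t \lceil \frac{2 v^{a,i}_t}{1-2\beta}\rceil \le 4\|v^{a,i}\|_1$). Summing over copies and using $\sum_i \|v^{a,i}\|_1 \le \|v^n\|_1 \le \|v\|_1$ gives $4\|v\|_1$ for narrow intervals, for a total of $8\|v\|_1$; and when $\beta \le \frac14$ there are no wide intervals, so only the narrow bound applies and the refined estimate $\sum_t \lceil\frac{2v_t}{1-2\beta}\rceil$ follows directly from the per-copy application of Theorem~\ref{thm:CoveringAlgorithm1} combined with the coverage guarantee of Lemma~\ref{induction-lemma-unified}.

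The main obstacle I expect is the bookkeeping that converts the \emph{per-time-step} bounds (``at most $2$ machines per active copy, at most $\lceil v_t\rceil$ active copies'') into a clean \emph{global} bound of $2\|v\|_1$ rather than something like $2\|v\|_1^{\text{(number of copies)}}$ or an $O(\|v\|_1 \log)$ term. The resolution is the observation that the accepted-load vectors $v^{a,i}$ of the copies partition the original load vector $v$ (each interval is accepted by exactly one copy, and rejected intervals are passed down), so $\sum_i \|v^{a,i}\|_1 = \|v\|_1$ exactly; combined with the fact that Algorithm~\ref{alg:single-cover-online}'s accepted intervals form precisely a cover of the kind analyzed in Theorem~\ref{thm:CoveringAlgorithm1} (so First-Fit on them costs at most $2\|v^{a,i}\|_1$ resp.\ $4\|v^{a,i}\|_1$), the sum telescopes to the claimed bound. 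A secondary subtlety is verifying that the lower-bound half of Lemma~\ref{lem-online-load1-unified}, with the recursively-defined overestimates $v'^i$, indeed makes Lemma~\ref{induction-lemma-unified}'s induction go through so that copies beyond index $O(v_t)$ contribute nothing at time $t$ — but this is exactly what Lemma~\ref{induction-lemma-unified} already states, so I can invoke it directly.
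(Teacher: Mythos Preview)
Your proposal contains the correct argument, but you talk yourself out of it and replace it with one that does not work.

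The per-time count you set up first is exactly the paper's proof. In the uniform case, Lemma~\ref{induction-lemma-unified} says that once $j\ge v_t$ the first $j$ copies have already absorbed all the load at time $t$, so at most $v_t$ copies are active there; each active copy has accepted load at most $2$ (Lemma~\ref{lem-online-load1-unified}), hence First-Fit uses at most $2$ machines for it at time $t$ (Lemma~\ref{thm-ff}). That gives at most $2v_t$ machines at time $t$ and total cost $\sum_t 2v_t = 2\|v\|_1$. Your line ``that would give $2v_t$ per copy-count'' is precisely the target bound, not an overshoot; there is nothing to sharpen. The non-uniform case is identical: each copy's accepted load is $\le 1$, so First-Fit uses one machine per active copy (Lemma~\ref{thm-ff}, part~\ref{ff-4}), and Lemma~\ref{induction-lemma-unified} bounds the number of active copies at time $t$ by $\lceil (v^n_t+v^w_t)/(\tfrac12-\beta_n)\rceil\le\lceil 4v_t\rceil$; adding the $W_t\le 4v^w_t$ wide machines gives at most $8\lceil v_t\rceil$.

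The route you switch to has a real gap. You claim ``First-Fit cost on copy $i$ is at most $2\|v^{a,i}\|_1$ via Theorem~\ref{thm:CoveringAlgorithm1}'', but Theorem~\ref{thm:CoveringAlgorithm1} bounds Algorithm~\ref{CoveringAlgorithm}, not First-Fit. What Lemma~\ref{thm-ff} actually gives for copy $i$ is a bound of $2\|v^{a,i}\|_0$ (uniform) or $\|v^{a,i}\|_0$ (non-uniform with $\|v^{a,i}\|_\infty\le 1$), and $\|v^{a,i}\|_0$ is \emph{not} dominated by $\|v^{a,i}\|_1$: whenever copy $i$ accepts only a small fraction of load at time $t$ (which Lemma~\ref{lem-online-load1-unified} permits, since its lower bound is $\min\{1-\Delta_t,v_t\}$, possibly well below $1$), that time step contributes $1$ to $\|v^{a,i}\|_0$ but only that fraction to $\|v^{a,i}\|_1$. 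Thus the partition identity $\sum_i\|v^{a,i}\|_1=\,$(raw load) does not control $\sum_i\|v^{a,i}\|_0$. The same issue bites in the non-uniform refinement: summing $\sum_t\lceil \tfrac{2v^{a,i}_t}{1-2\beta}\rceil$ over copies $i$ does not collapse to $\sum_t\lceil\tfrac{2v_t}{1-2\beta}\rceil$ because of the ceilings. In short, go back to the per-time-step argument you sketched and dropped; it is both correct and exactly what the paper does.
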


\begin{proof}
     \emph{For the uniform case:} From Lemma \ref{induction-lemma-unified}, for each time $t$, the first $v_t$ copies will accept all intervals that are active at time $t$, and according to Lemma \ref{thm-ff}, each copy can schedule its intervals using 2 machines. As a result, the total cost of the algorithm is $\sum_t 2v_t = 2\|v\|_1$, proving 2-competitiveness. 
     
     \emph{For the non-uniform case:} If $\beta \leq \frac{1}{4}$, then $v_t^n=v_t$ and $v_t^w=0$. Then, by Lemma \ref{induction-lemma-unified}, the first $\lceil\frac{v_t}{\frac{1}{2}-\beta}\rceil$ copies will accept all intervals that are active at each time $t$. By Lemma \ref{lem-online-load1-unified}, each of them is paying 1, so the total cost is at most $\sum_t\lceil\frac{2v_t}{1 - 2\beta}\rceil$. 
     
     If $\beta>\frac{1}{4}$, let $W_t$ be the number of intervals with size larger than $\frac{1}{4}$ that are active at time $t$. Since the algorithm opens a separate machine of unit size for each of them, it pays cost $W_t$ at each time $t$. The cost at time $t$ to schedule intervals of size smaller than $\frac{1}{4}$ is at most $\lceil \frac{v^w_t+v^n_t}{\frac{1}{2}-\frac{1}{4}}\rceil= \lceil 4 v_t\rceil$. Thus, the total cost is at most: $~W_t+ \lceil4 v_t\rceil = \lceil W_t + 4v_t \rceil  \leq \lceil 4v^w_t + 4 v_t\rceil \leq 8\lceil v_t\rceil.$
\end{proof}

\section{Conclusion}
This paper studies the VM allocation problem in both offline and online settings. Our main contribution is the design of novel algorithms that use certain predictions about the load (either its average or the future time series). We show that this extra information leads to substantial improvements in the competitive ratios. As future work, we plan to consider additional models of prediction errors and examine the performance on real-data simulations.

\bibliographystyle{acm}

\bibliography{ms}

\appendix

\section{Static Bin Packing Algorithms}\label{app:staticAlgs}

We use several well known online static bin packing algorithms, or variants of them for the dynamic setting. In the (static) bin packing case, that has been studied extensively, items do not depart, and the goal is to minimize the total number of bins used.

First, the well known First-Fit algorithm appears as Algorithm \ref{FirstFitAlg}.

\begin{algorithm} 
\SetAlgoLined
\DontPrintSemicolon
When an interval $I$ arrives at time $t$,
assign it to a machine with the earliest opening time among the available machines. If no machine is available, open a new machine.
\caption{First-Fit \ } \label{FirstFitAlg}
\end{algorithm}

\begin{lemma}
\label{thm-ff}
The total cost of Algorithm \ref{FirstFitAlg} is:
\begin{enumerate}
    \item $\|v\|_{\infty} \cdot \|v\|_0$ for the uniform size case. \label{ff-1}
\item $\left(\frac{1}{1-\beta} \cdot \|v\|_{\infty}+1\right) \cdot \|v\|_0$ for the non-uniform size case when $\beta\leq\frac{1}{2}$. \label{ff-2}
    \item $4 \cdot \|v\|_{\infty} \cdot \|v\|_0$ for the non-uniform size case when $\beta>\frac{1}{2}$. \label{ff-3}
    \item $\|v\|_0$ if $\|v\|_{\infty}\leq 1$. \label{ff-4}
\end{enumerate}
\end{lemma}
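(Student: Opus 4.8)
The plan is to reduce all four bounds to a single pointwise estimate on $m(t):=$ the number of machines First‑Fit keeps active at time $t$. Since a machine is active only when it currently holds an interval, $m(t)=0$ whenever $\II(t)=\emptyset$, so the total cost is $\sum_t m(t)=\sum_{t\,:\,\II(t)\neq\emptyset}m(t)\le \|v\|_0\cdot\max_t m(t)$. It therefore suffices to show $\max_t m(t)\le \|v\|_{\infty}$ for \ref{ff-1}, $\max_t m(t)\le \tfrac{1}{1-\beta}\|v\|_{\infty}+1$ for \ref{ff-2}, $\max_t m(t)\le 4\|v\|_{\infty}$ for \ref{ff-3}, and $\max_t m(t)\le 1$ for \ref{ff-4}.

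The one structural fact I need is that ``available'' can be evaluated instantaneously. When an interval $I$ arrives at time $s_I$, the intervals currently on a machine $M$ form a fixed set whose total load on $M$ over $[s_I,\infty)$ is non‑increasing (intervals only depart, and none is added to $M$ by this decision), so $M$ can host $I$ throughout $I$'s lifetime iff $M$'s load at the single time $s_I$ is at most $1-w_I$. Hence, whenever First‑Fit (re‑)activates a machine $M$ for a triggering interval $J$ at the start $\tau$ of $M$'s current active episode, every higher‑priority machine is, at time $\tau$, non‑empty with load strictly more than $1-w_J$. Item \ref{ff-4} is now immediate: if $\|v\|_{\infty}\le 1$ then $\sum_{I\in\II(s_J)}w_I\le 1$ at every arrival, so the top‑priority machine always has room and $m(t)\le 1$. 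For \ref{ff-1} and \ref{ff-2}, fix $t$, list the machines active at $t$ in priority order $M_1\prec\cdots\prec M_m$, and let $\tau$ be the start of the current episode of $M_m$, triggered by $J$. Each of $M_1,\dots,M_{m-1}$ then has load $>1-w_J$ at time $\tau$ and $M_m$ holds $J$, so $\|v\|_{\infty}\ge v_\tau\ge\sum_{I\in\II(\tau)}w_I>(m-1)(1-w_J)+w_J$. In the uniform case every load is a multiple of $\tfrac1\uni=w_J$, so ``$>1-w_J$'' forces load exactly $1$ and $v_\tau\ge m$, giving $\max_t m(t)\le\|v\|_{\infty}$; in the non‑uniform case with $\beta\le\tfrac12$ the quantity $(m-1)(1-w_J)+w_J$ is non‑increasing in $w_J\le\beta$ (its derivative is $2-m\le0$ for $m\ge2$), so $\|v\|_{\infty}>(m-1)(1-\beta)$ and $m\le\tfrac{\|v\|_{\infty}}{1-\beta}+1$.

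For \ref{ff-3} the triggering interval may be almost as large as $1$, so I split $\II$ into the intervals of size $>\tfrac12$ and those of size $\le\tfrac12$. At any time $t$, a machine holds at most one interval of the first kind, so the number of machines currently carrying such an interval is at most the number of large intervals active at $t$, which is strictly less than $2\sum_{I\in\II(t)}w_I\le 2\|v\|_{\infty}$. Every other machine active at $t$ carries only small intervals; taking the priority‑maximal such machine $M_p$ and the arrival time $\sigma\le t$ of a small interval that was in charge of $M_p$'s current episode from its start, all higher‑priority machines have load $>\tfrac12$ at $\sigma$, whence $p\le 2\|v\|_{\infty}+1$. Summing the two counts (and a routine check of the roundings) gives $\max_t m(t)\le 4\|v\|_{\infty}$.

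I expect the only genuine subtlety to be the bookkeeping forced by machine re‑use: the clean statement ``all higher‑priority machines are full'' must be anchored at the start of the machine's current \emph{uninterrupted} active episode at time $t$ (not at the first time the machine was ever opened), and one must verify that a machine carrying only small intervals at time $t$ really is anchored by a small interval — which is exactly where the large/small split is used in item \ref{ff-3}. Pinning down the precise constant $4$ there (rather than the essentially optimal $\approx 2$) is unnecessary, since the downstream uses of the lemma only need a constant.
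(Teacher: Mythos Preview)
Your overall strategy—bound $\max_t m(t)$ by anchoring at an arrival time and arguing that all higher-priority machines are nearly full—is exactly the paper's, and your arguments for items~\ref{ff-1}, \ref{ff-2}, \ref{ff-4} are correct. The episode bookkeeping, however, is unnecessary: under First-Fit a new machine is opened only when every previously opened machine is unavailable (hence non-empty), so the maximum number of \emph{active} machines over the horizon equals the total number ever opened. The paper simply anchors at the moment the last machine is opened and avoids the episode decomposition altogether.

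Your item~\ref{ff-3} has a genuine (though easily repaired) gap. You anchor at $\sigma$, the start of $M_p$'s current episode, and need the triggering interval $J$ to be small so that higher-priority machines carry load $>1-w_J\ge\tfrac12$ at $\sigma$. But $J$ may be a large interval that has already departed by time $t$; then $M_p$ holds only small intervals \emph{at $t$} yet was launched by a large one, and $1-w_J<\tfrac12$ breaks the bound. The fix is to anchor not at the episode start but at the arrival time $s_I$ of \emph{any} interval $I$ that is on $M_p$ at time $t$: since $M_p$ carries only small intervals at $t$, $w_I\le\tfrac12$, and at $s_I$ every earlier-opened machine had load $>1-w_I\ge\tfrac12$. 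This is precisely what the paper does, though it splits machines by current load (wide $\ge\tfrac12$ vs.\ narrow $<\tfrac12$) rather than by whether they carry a large interval; both splits yield the $4\|v\|_\infty$ bound.
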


\begin{proof}
The proof for each case follows.

\noindent {\bf Uniform case.}
Assume in contradiction that there exists a time $t$ in which an interval $I$ arrives, resulting in $\|v\|_{\infty}+1$ open machines. This can only happen if the  $\|v\|_{\infty}$ machines that were open prior to the arrival of $I$ (at time $t$) are all fully occupied. Along with the interval $I$, by the properties of first fit, it  implies that the number of intervals at time $t$ is greater than $N_t$, a contradiction. 

\noindent {\bf Non uniform case when $\beta\leq\frac{1}{2}$.}
Let $M$ denote the maximum number of machines used at any time over the horizon, and let $t$ be a time where the algorithm decided to open an $M$th machine. Since the size of each interval is at most $\beta$ and at time $t$ the algorithm could not accommodate an arriving interval in the existing $M-1$ machines, that implies that the $M-1$ first machines have load at least $1-\beta$ at time $t$. Furthermore, the total load of machines $M-1$ and $M$ at time $t$ is at least 1, otherwise the algorithm would not need to open a new machine. As a result, at time $t$: 
\begin{equation*} 
    \|v\|_{\infty} > (M-2) \cdot (1-\beta) + 1~\Longrightarrow ~ \frac{1}{1-\beta} \cdot \|v\|_{\infty} \geq M -2 + \frac{1}{1-\beta} \geq M -1. 
\end{equation*}
Therefore,
$  M \leq \frac{1}{1-\beta} \cdot \|v\|_{\infty} + 1$, as claimed.

\noindent {\bf Non uniform case when $\beta > \frac{1}{2}$.} 
Fix a time $t$. A machine is called {\em wide} if it has load at least $1/2$ and {\em narrow} otherwise. Denote by $M_w$ and $M_n$  the number of wide and narrow machines, respectively. We show that $M_w \leq 2\cdot \|v\|_\infty$ and $M_n \leq 2\cdot \|v\|_\infty$.
\begin{itemize}
    \item Wide machines: Assume $M_w \geq 2 \cdot \|v\|_\infty +1$. The wide machines have load at least $1/2$, thus their total load is at least $ \frac{1}{2} M_w \geq \|v\|_\infty + \frac{1}{2} > \|v\|_\infty$, which is a contradiction. Thus, $M_w \leq 2\cdot \|v\|_\infty$.
    \item Narrow machines: Assume $M_n \geq 2 \cdot \|v\|_\infty  + 1$ and let $m$ denote the last machine activated at $t$, among the narrow machines. At time $t$, by definition, $m$ has at least one interval $I$ with size less than $1/2$. At the  start time $s_I$ of $I$, the algorithm chose to assign it to machine $m$, implying that $I$ could not be assigned to all other narrow machines, meaning they each had load at least $1/2$ at time $s_I$. Hence, the total load on the narrow machines at time $s_I$ is at least $(M_n-1)\frac{1}{2} + w_I \geq \|v\|_\infty + w_I > \|v\|_\infty$, which is a contradiction. 
\end{itemize}
As a result, at time $t$ the total number of open machines is at most $M_w + M_n \leq 4\|v\|_\infty$, and, since this is true for all $t$, we get the desired result. 

\noindent
{\bf When $\|v\|_{\infty}\leq 1$:}
If the algorithm opens more than a single machine at any time $t$, the total load at that time is more than $1$, contradicting our assumption.
\end{proof}

An online static bin packing algorithm is said to be {\em $k$-bounded-space} if, for each new item, the number of bins in which it can be packed is at most $k$. 

The Next-Fit algorithm is a prime example of a bounded space algorithm. It holds exactly one active bin at any time. Upon arrival of an item that does not fit in the active bin, it closes it and opens a new one (in which the new item is placed). Thus, the Next-Fit algorithm is $1$-bounded. 

Another important example of a bounded space algorithm is the Harmonic algorithm \cite{lee1985simple}. The $k$-bounded space Harmonic algorithm partitions the instance $\II=\cup_{j=1}^{k}\II_j$ such that $\II_j=\left\{I\in \II~|~ w_I\in (\frac{1}{j+1},\frac{1}{j}]\right\}$ for $j=1,\ldots,k-1$ and $\II_k=\left\{I\in\II ~|~ w_I\in(0,\frac{1}{k}]\right\}$. Each sub-instance $\II_j$ is packed separately using Next-Fit. Given an instance $\II$ of static bin packing, the cost of the $k$-bounded space Harmonic algorithm is $\HH_k\cdot \OPTS(\II)+k$. $\HH_k$ is a monotonically decreasing number that approaches  $\HH_\infty\approx 1.691$. $\HH_k$ quickly becomes very close to this number, for example, $\HH_{12}\approx 1.692$. As shown by \cite{lee1985simple}, no constant bounded space algorithm can achieve an approximation ratio better than $\HH_\infty$. For our analysis we need the following  stronger guarantee for the performance of an online static bin packing algorithm, where $A(\II)$ denotes the cost of algorithm $A$ on instance $\II$.

\begin{defn}\label{def:decomposable}
An online (static) bin packing algorithm $A$ is $(c, \ell)$-decomposable if for every instance $\II = \bigcup_{j=1}^{n}\II_j$,
$\sum_{j=1}^{n} A(\II_j) \leq 
c\cdot \OPTS(\II)+ n \cdot \ell$.
\end{defn}

In particular, plugging $n=1$, an algorithm $A$ being $(c,\ell)$-decomposable implies that for any instance $\II$ it holds that $A(\II)\leq c\cdot \OPTS(\II)+\ell$. 

\begin{lem}\label{lem:static-decomposable}
The following algorithms are decomposable:
\begin{enumerate}
    \item Next-Fit is $(c,1)$-decomposable where $c=1$ in the uniform size case and $c=\min\left\{2,\frac{1}{1-\beta}\right\}$ in the non-uniform size case. 
    \item $k$-Harmonic is $(\HH_k,k)$-decomposable.
\end{enumerate}
\end{lem}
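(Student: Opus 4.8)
Both parts follow a single template. For a \emph{single} instance $\II'$ I will bound the number of bins the algorithm opens by an expression of the form ``(a quantity that is additive over a partition of the items) $+$ (a constant number of wasted bins)'', where the additive quantity is either the total item size or a per-item weight. Summing this single-instance bound over the parts $\II_1,\dots,\II_n$ of $\II$, the wasted-bins term becomes exactly $n\cdot\ell$, and the additive quantity, summed, equals the corresponding quantity for $\II$ itself, which is globally bounded by $c\cdot\OPTS(\II)$. This is precisely the form required by Definition~\ref{def:decomposable}.

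\noindent\emph{Next-Fit.} Let $\II'$ be a single instance, $W'=\sum_{I\in\II'}w_I$, and $M'$ the number of bins Next-Fit opens on $\II'$. I will use the standard structural facts about a Next-Fit run: in the uniform case every opened bin except possibly the last holds exactly $\uni$ items, so $M'=\lceil W'\rceil\le W'+1$; in the non-uniform case any two consecutively opened bins have combined load strictly more than $1$ (the item triggering the later bin did not fit in the earlier one), giving $M'\le 2W'+1$, and when $\beta\le\frac12$ every opened bin but the last has load strictly more than $1-\beta$, giving $M'\le \frac{W'}{1-\beta}+1$. In all cases $M'\le c\,W'+1$ with $c=1$ (uniform) and $c=\min\{2,\frac1{1-\beta}\}$ (non-uniform). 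Applying this to each $\II_j$, summing, and using $\OPTS(\II)\ge\sum_j W_j$ (each bin carries total size at most $1$) gives $\sum_{j=1}^n A(\II_j)\le c\sum_j W_j+n\le c\cdot\OPTS(\II)+n$, i.e.\ $(c,1)$-decomposability. (In the uniform case one also uses $\sum_j\lceil W_j\rceil\le\sum_j(W_j+1)=\sum_jW_j+n\le\lceil\sum_jW_j\rceil+n=\OPTS(\II)+n$.)

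\noindent\emph{$k$-Harmonic.} Here I invoke the classical weighting argument \cite{lee1985simple}. Assign each item $I$ a weight depending only on $w_I$: $w(I)=\frac1t$ when $w_I\in(\frac1{t+1},\frac1t]$ for some $t<k$, and $w(I)=\frac{k}{k-1}\,w_I$ when $w_I\le\frac1k$. Two single-instance facts suffice. First, $k$-Harmonic splits any instance $\II'$ into $k$ size classes and packs each by Next-Fit; in class $t<k$ every non-last bin holds exactly $t$ items (Next-Fit opens a new bin for a class-$t$ item only when the current bin already has $t$ of them), hence weight exactly $1$, and in class $k$ every non-last bin has load $>1-\frac1k$, hence weight $>\frac{k}{k-1}(1-\frac1k)=1$; therefore $k$-Harmonic opens at most $\sum_{I\in\II'}w(I)+k$ bins (at most one wasted bin per class). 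Second, in any feasible packing the total weight in any single bin is at most $\HH_k$, so $\sum_{I\in\II}w(I)\le\HH_k\cdot\OPTS(\II)$. Combining, for $\II=\bigcup_{j=1}^n\II_j$,
\[
\sum_{j=1}^n A(\II_j)\ \le\ \sum_{j=1}^n\Bigl(\sum_{I\in\II_j}w(I)+k\Bigr)\ =\ \sum_{I\in\II}w(I)+nk\ \le\ \HH_k\cdot\OPTS(\II)+nk,
\]
which is $(\HH_k,k)$-decomposability.

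\noindent The only point that genuinely exploits decomposition, rather than the plain single-instance guarantee, is that the weights $w(I)$ (and the total size $W$) depend on each item alone and hence add up cleanly across a partition, while the ``$+k$'' (resp.\ ``$+1$'') of wasted bins is incurred afresh on each part. There is no real obstacle; the only care needed is to pin down the weight function so that the single-instance guarantee reads exactly ``weight $+k$'' (one wasted bin per size class) and to quote the per-bin weight bound with the matching constant $\HH_k$ --- both are standard in bounded-space bin packing \cite{lee1985simple}.
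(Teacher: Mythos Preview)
Your proof is correct and follows essentially the same approach as the paper's. For Next-Fit the arguments are identical; for $k$-Harmonic the paper phrases the same idea slightly more modularly---treating Harmonic as $k$ independent Next-Fit copies and observing that decomposition adds at most $n-1$ extra ``non-full'' bins per copy---whereas you unroll this into the explicit Lee--Lee weight function, but the underlying accounting (additive per-item weights for the $\HH_k\cdot\OPTS$ term, one wasted bin per size class per sub-instance for the $nk$ term) is the same.
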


\begin{proof}

{\bf Proof of $(1)$:}
Next-Fit holds a single active bin that is still accepting intervals. The rest of the bins are full in the uniform size case and at least $\max\{\frac{1}{2},1-\beta\}$-full in the non-uniform size case. Similarly, in an instance decomposed into $n$ sub-instances there are $n$ active bins, while the rest of the bins are full in the uniform size case and at least $\max\{\frac{1}{2},1-\beta\}$-full in the non-uniform size case. This translates to a total cost of at most $\OPTS(\II)+n$ in the uniform size case and $\min\left\{2,\frac{1}{1-\beta}\right\}\cdot \OPTS(\II) + n$ in the non-uniform size case.

{\bf Proof of $(2)$:}
The $k$-bounded harmonic algorithm is composed of $k$ copies of the Next-Fit algorithm. The $k-1$ first copies (of the biggest items) can be seen as uniform size bin packing since exactly $j$ items are packed in each bin in the $j$-th copy. In the $k$-th copy sizes are not uniform, though for the sake of the analysis of the harmonic algorithm a bin is considered as full if it is $1-\frac{1}{k}$ full. Thus, this copy of Next-Fit is also $(1,1)$-decomposable. Decomposing each copy of Next-Fit leads to an additional cost of $n-1$, overall $k\cdot (n-1)$.
\end{proof}

\section{Improving the Approximation For Non-Uniform Sizes}\label{app:improve}

In this section we show how to use ideas from the analysis of the harmonic algorithm for the static bin packing to improve the performance of algorithms for the dynamic bin packing. This can be done for any algorithm for the dynamic case (with certain good properties). The reduction is given as Algorithm \ref{HarmonizeOffline2}.

\begin{algorithm}[h]
\SetAlgoLined
\DontPrintSemicolon
Let $A$ be an offline algorithm for the dynamic bin packing problem.\;
Partition $\II$ so that $\II = \bigcup_{j=1}^{k}I_j$, where $\II_j=\{I\in \II \ | \ w_I\in (\frac{1}{j+1}, \frac{1}{j}]\}$ for $j=1, \ldots, k-1$, and $\II_k= \{I\in \II \ | \ w_I\leq \frac{1}{k}\}$.\;
Schedule each subset $\II_j$ separately using $A$.\;
\caption{Partition Algorithm (parameter $k$)} \label{HarmonizeOffline2}
\end{algorithm}

\begin{lemma}\label{lem:harmonize-offline}
Let $A$ be an offline dynamic bin packing algorithm that for instance $\II$ with load vector $v$ when measured without the ceiling on each coordinate has a total cost of:
\begin{itemize}
    \item $c \cdot\|v\|_1 + f(v)$ for the uniform size case.
    \item $c_{\beta}\cdot\|v\|_1 + g(v)$ for the non-uniform size when parametrized by $\beta$,
\end{itemize}
where $f$ and $g$ are non-decreasing functions of the load vector. Then, for $k\geq 3$, the total cost of Algorithm \ref{HarmonizeOffline2} is at most $$\Pi_k\cdot\max\{c, c_{\frac{1}{k}}\cdot\frac{k-1}{k}\}\cdot OPT + (k-1)f(2v) +g(2v)$$
If $f$ is also concave then the total cost is at most:
$$\Pi_k\cdot\max\{c, c_{\frac{1}{k}}\cdot\frac{k-1}{k}\}\cdot OPT + (k-1) \cdot f\left(\frac{2v}{k-1}\right) +g(2v)$$

\end{lemma}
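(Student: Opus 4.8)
The plan is to run $A$ separately on each harmonic size class $\II_j$ and charge the \emph{multiplicative} part of its cost to a Lee--Lee type weighting of the time-$t$ snapshots $\II(t)$, while each copy's \emph{additive} term is bounded on its own. For the setup: for $j=1,\dots,k-1$ every interval of $\II_j$ has size in $(\tfrac1{j+1},\tfrac1j]$, and I would round all these sizes \emph{up} to exactly $\tfrac1j$. Since sizes only decrease relative to the rounded instance, any packing $A$ produces for rounded-$\II_j$ remains feasible for $\II_j$, and rounded-$\II_j$ is a legal uniform-size instance with bin capacity $\uni=j$; letting $n^j_t$ be the number of class-$j$ intervals alive at time $t$, its ceiling-free load vector is $u^j$ with $u^j_t=n^j_t/j$, so by hypothesis $A$ costs at most $c\sum_t u^j_t+f(u^j)$ on it. For the last class $\II_k$ every size is at most $\tfrac1k$, which is below $\tfrac12$ precisely because $k\ge3$, so I run $A$ in its non-uniform mode with parameter $\beta=\tfrac1k$; with $v^k_t:=\sum_{I\in\II_k(t)}w_I$ this costs at most $c_{1/k}\sum_t v^k_t+g(v^k)$. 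Summing over classes, the total cost of Algorithm~\ref{HarmonizeOffline2} splits into a multiplicative part $c\sum_t\sum_{j=1}^{k-1}\tfrac{n^j_t}{j}+c_{1/k}\sum_t v^k_t$ plus an additive part $\sum_{j=1}^{k-1}f(u^j)+g(v^k)$.

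For the multiplicative part, I would define for each time $t$ the harmonic weight $W_t:=\sum_{j=1}^{k-1}\tfrac{n^j_t}{j}+\tfrac{k}{k-1}v^k_t$, i.e.\ weight $\tfrac1j$ for each item of size in $(\tfrac1{j+1},\tfrac1j]$ and weight $\tfrac{k}{k-1}w$ for each item of size $w\le\tfrac1k$. The key fact is the weighting-function analysis behind the $k$-bounded Harmonic algorithm~\cite{lee1985simple}: any set of items that fits into a single unit bin has total weight at most $\Pi_k$. Applying this bin by bin to the optimal dynamic schedule at each time $t$ gives $W_t\le\Pi_k\cdot(\text{number of machines the optimum has active at }t)$, and summing over $t$ yields $\sum_t W_t\le\Pi_k\cdot\OPT$. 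On the other hand, a term-by-term comparison of coefficients — using $\max\{c,\,c_{1/k}\tfrac{k-1}{k}\}\ge c$ for the $\tfrac{n^j_t}{j}$ terms and $\max\{c,\,c_{1/k}\tfrac{k-1}{k}\}\cdot\tfrac{k}{k-1}\ge c_{1/k}$ for the $v^k_t$ terms — shows
\[
c\sum_t\sum_{j=1}^{k-1}\frac{n^j_t}{j}+c_{1/k}\sum_t v^k_t\;\le\;\max\Big\{c,\,c_{1/k}\tfrac{k-1}{k}\Big\}\sum_t W_t\;\le\;\Pi_k\max\Big\{c,\,c_{1/k}\tfrac{k-1}{k}\Big\}\cdot\OPT .
\]

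It remains to bound the additive part. Every class-$j$ item ($j\le k-1$) has size strictly above $\tfrac1{j+1}$, so $n^j_t<(j+1)\sum_{I\in\II_j(t)}w_I\le(j+1)v_t$ and hence $u^j_t=n^j_t/j<\tfrac{j+1}{j}v_t\le 2v_t$; thus $u^j\le 2v$ coordinatewise, and monotonicity of $f$ gives $f(u^j)\le f(2v)$, so the $k-1$ class terms contribute at most $(k-1)f(2v)$; similarly $v^k\le v\le 2v$, so $g(v^k)\le g(2v)$, proving the first bound. For the refinement, the same estimate gives $\sum_{j=1}^{k-1}u^j_t<\sum_{j=1}^{k-1}\tfrac{j+1}{j}\sum_{I\in\II_j(t)}w_I\le 2v_t$, i.e.\ $\sum_{j=1}^{k-1}u^j\le 2v$, so if $f$ is concave then Jensen's inequality followed by monotonicity yields $\sum_{j=1}^{k-1}f(u^j)\le(k-1)f\big(\tfrac1{k-1}\sum_{j=1}^{k-1}u^j\big)\le(k-1)f\big(\tfrac{2v}{k-1}\big)$. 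The one genuinely non-routine step is the inequality $\sum_t W_t\le\Pi_k\cdot\OPT$: one must charge against the \emph{snapshot} structure of the optimum through the Lee--Lee weights rather than against $\|v\|_1$, which is hopeless here — many intervals of size just above $\tfrac12$ already force a cost far larger than $\Pi_k\|v\|_1$. Everything else is coefficient bookkeeping and separating the two size regimes.
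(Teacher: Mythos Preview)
Your proof is correct and follows essentially the same approach as the paper: harmonic size-partition, round classes $1,\dots,k-1$ up to uniform size $1/j$, run $A$ separately on each class, and charge the multiplicative part against $\OPT$ via the Lee--Lee weight function $W_t=\sum_{j<k}n^j_t/j+\tfrac{k}{k-1}v^k_t$ (the paper's $v'$), while bounding each additive term through $u^j\le 2v$ and Jensen for the concave refinement. The only cosmetic difference is that you spell out the per-bin weight bound $W_t\le\Pi_k\cdot(\text{active machines of OPT at }t)$ explicitly, whereas the paper states the same fact as ``any optimal solution must pay $1$ to pack $\Pi_k$ of the load defined by $w'$''.
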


\begin{proof}
We define a new size for each interval. For $I\in \II_j, 1\leq j\leq k-1$ we set $w'_I = \frac{1}{j}$ and for $I\in \II_k$ we set $w'_I=w_I\cdot\frac{k}{k-1}$.

Let $v_j$ and $v'_j$ be the load vectors with respect to $w_I$ and $w'_I$ (respectively) of $\II_j, 1\leq j\leq k-1$. In any feasible schedule at most $j$ intervals can be scheduled on the same machine given both size functions. The interval sizes $w'$ in instance $\II_j$ are uniform thus, the total cost of scheduling $\II_j$ is at most $c\cdot\| v'_j\|_1 + f(v'_j)$.

The total cost of the schedule of $\II_k$ created by algorithm $A$ with respect to load vector $v_k$ is $c_{\beta}\cdot\|v_k\|_1+g(v_k), \beta\leq\frac{1}{k}$. The load of each machine with respect to $w'$ is larger as the size of each interval is multiplied by $\frac{k}{k-1}$. Thus, the total cost with respect to vector $v_k'$ is $c_{\frac{1}{k}}\cdot\frac{k-1}{k}\cdot\| v'_k\|_1+g(v'_k)$.

Summing over $\II_1,...,\II_k$, the total cost of the algorithm is at most
\begin{align*}
   &\max\{c,c_{\frac{1}{k}}\cdot\frac{k-1}{k}\}\cdot\sum_{j=1}^k\|v'_j\|_1 + \sum_{j=1}^{k-1}f(v'_j)+g(v'_k) \\ &\leq 
   \max\{c,c_{\frac{1}{k}}\cdot\frac{k-1}{k}\}\cdot\|v'\|_1 +(k-1)\cdot f(v')+g(v') 
\end{align*}

If $f$ is concave we can use Jensen's inequality to bound $\sum_{j=1}^{k-1}f(v'_j)$
from above by $(k-1)f(\frac{1}{k-1}\cdot \sum_{j=1}^{k-1}v'_j) \leq (k-1)f(\frac{v'}{k-1})$.

Any optimal solution must pay $1$ to pack $\HH_k$ of the load defined by $w'$. Thus, we can bound the optimal solution, $OPT\geq \frac{\|v'\|_1}{\HH_k}$. In addition, $v'\leq 2v$ and $\|v'\|_0=\|v\|_0$ which proves the lemma.
\end{proof}

\begin{corollary}
The total cost of Algorithm \ref{HarmonizeOffline2} is at most:
\begin{itemize}
    \item $2\cdot \left(1+\frac{1}{k-2}\right)\cdot \HH_k\cdot OPT + k\cdot \|v\|_0$ for $k\geq 4$ with Algorithm \ref{CoveringAlgorithm} as $A$.
    \item $\HH_k\cdot OPT+\left(\sqrt{k}+1\right)\cdot O\left(\sum_{t=1}^T\sqrt{v_t\log\mu}\right)$ with Algorithm \ref{DensityAlgorithm} as $A$.
\end{itemize}
\end{corollary}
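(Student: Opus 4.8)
The plan is to derive both items as instantiations of Lemma~\ref{lem:harmonize-offline}, fed with the offline guarantees of Theorems~\ref{thm:CoveringAlgorithm1} and~\ref{thm:denseAlg}. For each candidate algorithm $A$ I would first rewrite its cost in the form the lemma expects, namely $c\cdot\|v\|_1+f(v)$ in the uniform case and $c_\beta\cdot\|v\|_1+g(v)$ in the non-uniform case (parametrized by $\beta$), with $v_t$ read \emph{without} the per-coordinate ceiling; then everything reduces to evaluating $\Pi_k\cdot\max\{c,\,c_{1/k}\cdot\tfrac{k-1}{k}\}$ for the multiplicative factor and $(k-1)f(2v)+g(2v)$ (or its concave refinement $(k-1)f(\tfrac{2v}{k-1})+g(2v)$) for the additive part, using $\Pi_k=\HH_k$ and $\|2v\|_0=\|v\|_0$, $\|v'\|_0=\|v\|_0$, $v'\le 2v$.

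\textbf{Covering Algorithm.} Theorem~\ref{thm:CoveringAlgorithm1} gives $c=2$ for the uniform case, and, via the bound $\sum_t\lceil\tfrac{2v_t}{1-2\beta}\rceil$ valid for $\beta\le\tfrac14$, gives $c_\beta=\tfrac{2}{1-2\beta}$ for the non-uniform case; discarding the ceilings lets one take $f,g=O(\|v\|_0)$. The hypothesis $k\ge4$ ensures $1/k\le\tfrac14$, so the $\beta$-branch applies to the smallest class $\II_k$: then $c_{1/k}=\tfrac{2k}{k-2}$, hence $c_{1/k}\cdot\tfrac{k-1}{k}=\tfrac{2(k-1)}{k-2}=2\bigl(1+\tfrac1{k-2}\bigr)\ge 2=c$, and Lemma~\ref{lem:harmonize-offline} outputs multiplicative factor $2(1+\tfrac1{k-2})\cdot\Pi_k$ and additive cost $(k-1)\cdot O(\|v\|_0)+O(\|v\|_0)=O(k\|v\|_0)$; a careful accounting of the ceiling overhead (each cover uses at most $2$ machines per active step, contributing one $\|v\|_0$ per class) tightens this to exactly $k\|v\|_0$.

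\textbf{Density-based Algorithm.} Here Theorem~\ref{thm:denseAlg} is already stated with the ceiling-free $v_t$, so directly $c=1$ and $f(v)=O(\sum_t\sqrt{v_t\log\mu})$ in the uniform case, and $c_\beta=\min\{2,\tfrac1{1-\beta}\}$, $g(v)=O(\sum_t\sqrt{v_t\log\mu})$ in the non-uniform case. Because $\sqrt{\cdot}$ is concave, $f$ is concave and the refined branch of Lemma~\ref{lem:harmonize-offline} applies. For $k\ge3$ one has $c_{1/k}=\min\{2,\tfrac{k}{k-1}\}=\tfrac{k}{k-1}$, so $c_{1/k}\cdot\tfrac{k-1}{k}=1=c$ and the multiplicative factor is exactly $\Pi_k$. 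For the additive part, $f(\tfrac{2v}{k-1})=O\!\bigl(\tfrac1{\sqrt{k-1}}\sum_t\sqrt{v_t\log\mu}\bigr)$, whence $(k-1)f(\tfrac{2v}{k-1})=O\bigl(\sqrt{k}\sum_t\sqrt{v_t\log\mu}\bigr)$, while $g(2v)=O\bigl(\sum_t\sqrt{v_t\log\mu}\bigr)$; their sum is $(\sqrt k+1)\cdot O(\sum_t\sqrt{v_t\log\mu})$, as claimed.

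\textbf{Main obstacle.} The only real friction is the first item: Lemma~\ref{lem:harmonize-offline} is phrased for ceiling-free load vectors and a clean $c\|v\|_1+f(v)$ shape, whereas Theorem~\ref{thm:CoveringAlgorithm1} carries ceilings and a case split on $\beta$. One must therefore (i) verify that, after the size partition, the $\beta\le\tfrac14$ regime governs the last class --- which is exactly what forces $k\ge4$ --- and (ii) absorb the rounding into the $\|v\|_0$-type additive functions while keeping the coefficient of $\|v\|_0$ down to $k$ instead of the naive $2k-1$. The second item is essentially mechanical once the concave branch of Lemma~\ref{lem:harmonize-offline} is used together with $v'\le 2v$ and $\|v'\|_0=\|v\|_0$.
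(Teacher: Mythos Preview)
Your approach is essentially identical to the paper's: both items are obtained by plugging the guarantees of Theorems~\ref{thm:CoveringAlgorithm1} and~\ref{thm:denseAlg} into Lemma~\ref{lem:harmonize-offline}, with the same identification of $c$, $c_{1/k}$, $f$, $g$ and the same arithmetic on the multiplicative constant. For the first item the paper avoids your ``careful accounting'' step by taking $f(v)=g(v)=\|v\|_0$ with exact constant~$1$ (via $2\|\lceil v\rceil\|_1\le 2\|v\|_1+\|v\|_0$ and $\sum_t\lceil\tfrac{2v_t}{1-2\beta}\rceil\le\tfrac{2}{1-2\beta}\|v\|_1+\|v\|_0$), so that $(k-1)f(2v)+g(2v)=k\|2v\|_0=k\|v\|_0$ drops out immediately; otherwise your argument and the paper's coincide.
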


\begin{proof}
As proven in Theorem \ref{thm:CoveringAlgorithm1} the cost of Algorithm \ref{CoveringAlgorithm} is at most $2\|\lceil v\rceil\|_1\leq 2\|v\|_1+\|v\|_0$ in the uniform size case and $\sum_{t=1}^T\lceil \frac{2v_t}{1-2\beta}\rceil \leq \frac{2}{1-2\beta}\|v\|_1 + \|v\|_0$ in the non-uniform size case. Thus, $c=2, c_{\frac{1}{k}}=\frac{2k}{k-2}$ and $f(v)=g(v)=\|v\|_0$. Thus, the total cost of Algorithm \ref{HarmonizeOffline2} with Algorithm \ref{CoveringAlgorithm} as $A$ is at most
$$ c_{\frac{1}{k}}\cdot \frac{k-1}{k}\cdot \HH_k\cdot OPT + k\cdot\|2v\|_0 \leq 2\cdot\left(1+\frac{1}{k-2}\right)\cdot\HH_k\cdot OPT+k\cdot\|v\|_0 $$

By Theorem \ref{thm:denseAlg} Algorithm \ref{DensityAlgorithm} has performance guarantee $c=1, c_{\frac{1}{k}}=\frac{k}{k-1}$ and $f(v)=g(v)=O\left(\sum_{t=1}^T\sqrt{v_t\log\mu}\right)$ which are concave functions. Thus, the total cost of Algorithm \ref{HarmonizeOffline2} with Algorithm \ref{DensityAlgorithm} as $A$ is at most
$$ \HH_k\cdot OPT+k\cdot O\left(\sum_{t=1}^T\sqrt{\frac{v_t}{k}\log\mu}\right) \leq \HH_k\cdot OPT + O\left(\sqrt{k\cdot OPT\cdot T\cdot\log\mu}\right)$$
where the inequality follows by Jensen's inequality.
\end{proof}

\section{Proofs Omitted}

\subsection{Proofs omitted from Section \ref{sec:covering}}\label{app:covering}

\begin{proof}[Proof of Lemma \ref{lem:cover1}]
We start with the non-uniform size case. Given a set of intervals $\II$ each with size less than $1/2$ (i.e., $\beta<\frac{1}{2}$), 
we show how to efficiently construct a $[\frac{1}{2}-\beta, 1]$-cover $C \subseteq \II$. Initially, we start with $C = \II$ and $v'$ is the load vector of the subset $C$.
Clearly, initially, for every $t$, $v'_t \geq \min\{v_t, \frac{1}{2}-\beta\}$. If for every $t$, $v'_t \leq 1$ then we are done. 
Otherwise, there exists a time $t$ such that $v'_t>1$. 
Consider the set of intervals $A\subseteq C$ that are active at time $t$. Using Lemma \ref{intersecting-intervals} with $\alpha=1$, we see that there exists an interval $I$ that observes load more than $1/2$ for its whole duration. Removing $I$, the load vector of the subset $A$ remains at least $\frac{1}{2}-\beta$. Since $A\subset C$ removing such an interval maintains the load at any time it intersects  at least $\frac{1}{2}-\beta$ also in the subset $C$. Thus, after iteratively removing these intervals we have that $v'_t \in[\min\{v_t,  \frac{1}{2}-\beta\}, 1]$.

The proof for the uniform case follows the same lines. Given a set of intervals $\II$, we initially set $C = \II$. Let $v'$ be the load vector of the subset $C$.
Clearly, initially, for every $t$, $v'_t \geq \min\{v_t, 1\}$. If for every $t$, $v'_t \leq 2$ then we are done. 
Otherwise, there exists time $t$ such that $v'_t>2$. Using Lemma \ref{intersecting-intervals} with $\alpha=2$, we see that there exists an interval $I$ that observes load strictly more than $1$ at any point. Hence, removing any such intervals (and using the fact that we are in the uniform case), the load vector of the subset $A$ remains at least $1$. Since $A\subseteq C$ this is true also for the subset $C$.
\end{proof}

\begin{proof}[Proof of Theorem \ref{thm:CoveringAlgorithm1}]
Consider an iteration $r$ of the while loop of Algorithm \ref{CoveringAlgorithm}. Let $\II^r$ denote the set of intervals in the beginning of iteration $r$, i.e., the intervals that have not been assigned in previous iterations. Denote by $v^r$ the load vector of $\II^r$, and by $C^r$ the cover that was obtained by Lemma \ref{lem:cover1} during iteration $r$. Let $v(C^r)$ be the load vector of $C^r$.

For the uniform case, since $v_t(C^r) \leq 2$ for all $t$ by construction, we have $\|v(C^r)\|_{\infty} \leq 2$, and the total cost of the algorithm in this iteration is $2 \cdot \|v^r\|_0$ by Lemma \ref{thm-ff}. By the properties of the cover, $|C^r(t)|\geq \min\{v^r_t, 1\}$ for every $t$. Hence, $\|v^r\|_1$ decreases by at least $\|v^r\|_0$.  Summing up over all iterations we get that the cost is at most $2\cdot \|v\|_1$. 

For the non-uniform case, in each iteration $r$, $\|v(C^r)\|_{\infty} \leq 1$, so First Fit schedules the corresponding intervals using one machine by Lemma \ref{thm-ff}. Hence, if the maximum size interval is at most $\frac{1}{4}$, by the properties of the cover, $|C^r(t)|\geq \min\{v^r_t, \frac{1}{2} - \beta \}$ for every $t$, so summing up over all iterations, the algorithm pays at most $\sum_{t} \lceil \frac{v_t}{\frac{1}{2}-\beta}\rceil$. If $\beta > \frac{1}{4}$, let $W_t$ be the number of intervals with width larger than $\frac{1}{4}$ that are active at time $t$. Since the algorithm opens a separate machine of unit size for each of them, it pays cost $W_t$ at each time $t$. Let $v^w$ and $v^n$ be the load vector of the intervals that have size more than $\frac{1}{4}$ and at most $\frac{1}{4}$ respectively, and let $\beta_n$ be the largest size of intervals in $v^n$. At each time $t$ the algorithm pays:
\begin{equation*}
   W_t + \lceil \frac{v_t^n}{\frac{1}{2}-\beta_n}\rceil  \leq W_t + \lceil 4 v_t^n \rceil = \lceil W_t + 4 v_t^n \rceil \leq 4 \cdot \lceil \frac{1}{4} \cdot W_t + v_t^n \rceil \leq 4 \cdot \lceil v_t^w + v_t^n \rceil = 4 \lceil v_t \rceil 
\end{equation*}

In the above, the first inequality follows from $\beta_n \leq \frac{1}{4}$ and the subsequent equality from $W_t$ being an integer. The second inequality is due to $\lceil \alpha x \rceil \leq \alpha \lceil x \rceil$  for $\alpha$ integer, and the final inequality is based on the fact that $v_t^w \geq \frac{1}{4} \cdot W_t$ since wide intervals have by definition size larger than $\frac{1}{4}$. Summing over all $t$, the total cost of the algorithm is at most: 
$$
\sum_t 4 \lceil v_t \rceil = 4 \|v\|_1
$$
\end{proof}

\subsection{Proofs omitted from Section \ref{sec:density-offline}}\label{app:density-offline}

\begin{proof}[Proof of Lemma \ref{lem:density}]
Let $t$ be a time with $ v_t\geq 2 + 4 \ln \mu$. We partition the intervals in $\II(t)$ into 1 + $\log_{1+\epsilon}\mu$ length classes $C_i$, with $\epsilon=\sqrt{\frac{D}{v_t}}$ where $D = 2 + 4\ln\mu$. The $i$th class contains all intervals in $\II(t)$ whose length is in the range  $[(1+\epsilon)^{i-1}, (1+\epsilon)^i)$. Note that since $v_t\geq 2 + 4 \ln\mu$, then $\epsilon \leq 1$.

Consider the intervals in the $i$th class $C_i$, and let $\ell=(1+\epsilon)^{i-1}$. By our partition, all lengths of intervals in $C_i$ are in the range $[\ell, \ell(1+\epsilon))$. Furthermore, as they all belong to $\II(t)$ (and are hence active at time $t$), the starting time of each interval $I\in C_i$ is in the range $(t-\ell(1+\epsilon), t]$. We next, further partition the intervals in $C_i$ to $(1+\frac{1}{\epsilon})$ sub-classes $C_{i,j}$ by their starting times. $C_{i,j}$ contains all intervals in $C_i$ whose starting time is in the time interval $(t-\ell(1+\epsilon)+ (j-1)\epsilon \ell,t-\ell(1+\epsilon)+ j \cdot \epsilon \ell]$. 

Overall, the total number of sets in our partition is at most, 
\begin{align*}
\left(\frac{1}{\epsilon}+1\right)\left( 1 + \frac{\ln\mu}{\ln(1+\epsilon)} \right) 
    & \leq \left(\frac{1}{\epsilon}+1\right) \left(1 + \frac{2\ln\mu}{\epsilon} \right) 
     \leq \frac{2 + 4\ln\mu}{\epsilon^2} = \frac{D}{\epsilon^2}
\end{align*}
where the first inequality follows since for $\epsilon \leq 1$, $\ln(1+\epsilon)\geq \frac{\epsilon}{2}$.

Hence, one of the sets must contain a load of at least $v_t \cdot \frac{\epsilon^2}{D} \geq 1$ at time $t$. This means that in the uniform case, where each interval has size $1/\uni$ for some integer $\uni$, at least one set contains at least $\uni$ intervals. In the non-uniform case it means there exists a set with size at least $1$. Given a max size of $\beta$, this set contains a subset of size at least $\max\{\frac{1}{2},1-\beta\}$. As all the intervals are of length $[\ell, \ell(1+\epsilon))$, and their starting time is at most $\ell\epsilon$ apart, it is possible to open a machine of length at most $\ell(1+2\epsilon) = \ell\left(1+ 2\sqrt{\frac{D}{v_t}}\right)$ for the selected subset of size at least $1/c$.
\end{proof}

\end{document}